\newtheorem{lemma}{Lemma}
\newtheorem{axiom}{Axiom}
\newtheorem{definition}{Definition}
\newtheorem{corollary}{Corollary}
\def \A {\mathcal{A}}
\def \H {\mathcal{H}}
\def \tr {\text{tr}}
\def\lf{\left\lfloor}
\def\rf{\right\rfloor}
\def\Phantom{\vphantom{\beta}}
\definecolor{acolor}{rgb}{0,.48,0.65}
\definecolor{rcolor}{rgb}{0.9,0.1,0.1}
\definecolor{bcolor}{rgb}{0.1,0,1}
\definecolor{dcolor}{rgb}{0.8,.1,.6}
\definecolor{mcolor}{rgb}{.9,.5,0.5}
\newcommand{\U}[1]{{\underline{#1}}}
\title{Algebras and Hilbert spaces from gravitational path integrals: \\  Understanding Ryu-Takayanagi/HRT as entropy without AdS/CFT}
\author[a]{Eugenia Colafranceschi,} \emailAdd{ecolafranceschi@ucsb.edu }
\author[a]{Xi Dong,} \emailAdd{xidong@ucsb.edu }
\author[a]{Donald Marolf} \emailAdd{marolf@ucsb.edu}
\author[a,b]{and Zhencheng Wang} \emailAdd{zcwang1@illinois.edu}
\affiliation[a]{Department of Physics, University of California, Santa Barbara, CA 93106, USA}
\affiliation[b]{Department of Physics, University of Illinois, Urbana-Champaign, 1110 W. Green St., Urbana, IL 61801, USA}
\abstract{Recent works by Chandrasekaran, Penington, and Witten  have shown in various special contexts that the quantum-corrected Ryu-Takayanagi (RT) entropy (or its covariant Hubeny-Rangamani-Takayanagi (HRT) generalization) can be understood as computing an entropy on an algebra of bulk observables. These arguments do not rely on the existence of a local holographic dual field theory. We show that analogous-but-stronger results hold in any UV-completion of asymptotically anti-de Sitter quantum gravity with a Euclidean path integral satisfying a simple and (largely) familiar set of axioms. We consider a quantum context in which a standard Lorentz-signature classical bulk limit would have Cauchy slices with asymptotic boundaries $B_L \sqcup B_R$ where both $B_L$ and $B_R$ are compact manifolds without boundary. Our main result is then that (the UV-completion of) the quantum gravity path integral defines type I von Neumann algebras ${\cal A}^{B_L}_L$, ${\cal A}^{B_R}_{R}$ of observables acting respectively at $B_L$, $B_R$ such that ${\cal A}^{B_L}_L$, ${\cal A}^{B_R}_{R}$ are commutants. The path integral also defines entropies on ${\cal A}^{B_L}_L, {\cal A}^{B_R}_R$. Positivity of the Hilbert space inner product then turns out to require the entropy of any projection operator to be quantized in the form $\ln N$ for some $N \in {\mathbb Z}^+$ (unless it is infinite). As a result, our entropies can be written in terms of standard density matrices and standard Hilbert space traces. Furthermore, in appropriate semiclassical limits our entropies are computed by the RT-formula with quantum corrections. Our work thus provides a Hilbert space interpretation of the Ryu-Takayanagi entropy. Since our axioms do not severely constrain UV bulk structures, it is plausible that they hold equally well for successful formulations of string field theory, spin-foam models, or any other approach to constructing a UV-complete theory of gravity.
}
\begin{document}

\maketitle

\section{Introduction}
\label{sec:intro}

The last few years have seen significant progress in our understanding of gravitational entropy.  Some important steps forward were the discovery of non-trivial quantum-extremal surfaces in the context of black hole evaporation \cite{Penington:2019npb,Almheiri:2019psf}, as well as the understanding of their relation to gravitational replica calculations \cite{Almheiri:2019qdq,Penington:2019kki}.  These results in turn relied on the general connections between gravitational replicas and (quantum) extremal surfaces derived in \cite{Lewkowycz:2013nqa,Dong:2016hjy,Dong:2017xht}.  As is by now well-known, these observations led to gravitational computations consistent with the  so-called Page curve \cite{Page:1993wv,Page:2013dx}, which is expected from the ideas that black holes are unitary quantum systems with a finite number of internal states and that the number of such states is well-approximated by the exponential of the appropriate Bekenstein-Hawking entropy $S_{BH}$.

The analysis of Hawking radiation is particularly clean in settings where the emitted Hawking radiation is transferred from an asymptotically locally anti-de Sitter (AlAdS) gravitational system to a {\it non-gravitational} quantum mechanical system; i.e., to a system which can depend on a metric only as a fixed non-dynamical background.  Such systems have often been called `baths' in the recent literature.  In this context, and in appropriate semiclassical limits following \cite{Lewkowycz:2013nqa,Dong:2016hjy,Dong:2017xht}, the above results imply that the usual von Neumann entropy of the bath can be studied using quantum extremal surfaces describing what  \cite{Almheiri:2019hni} termed `islands', and that it is given by a formula which is a special case of the quantum-corrected Ryu-Takayanagi/Hubeny-Rangamani-Takayanagi (RT/HRT) formula \cite{Ryu:2006bv,Ryu:2006ef,Hubeny:2007xt}, with quantum corrections understood in the sense of \cite{Engelhardt:2014gca}.

While such arguments were motivated by considerations related to the AdS/CFT correspondence \cite{Maldacena:1997re} (or equivalently from gauge/gravity duality or gravitational holography), the final versions of the arguments rely only on properties of the gravitational path integral.  In particular, at least for bath entropies described by the Island Formula, one may safely interpret the result in terms of standard von Neumann entropies {\it without} assuming the gravitational bulk system to admit a holographic local field theory dual.  The only subtlety here is that (see e.g.\ \cite{Saad:2019lba,Marolf:2020xie,Marolf:2020rpm,Blommaert:2022ucs}) the semiclassical bulk gravitational theory appears to allow baby-universe superselection sectors (often called $\alpha$-sectors) of the form described in \cite{Coleman:1988cy,Giddings:1988cx}, and that the Island Formula in fact characterizes the von Neumann entropy $S_\alpha$ of the bath state $\rho_\alpha$ in a typical $\alpha$-sector \cite{Marolf:2020xie,Marolf:2020rpm} by describing an average over such bulk $\alpha$-sectors.  This explains the observation of
\cite{Giddings:2020yes} that the computation fails to take the form expected for the von Neumann entropy of the bath computed in the total bath state $\rho_{total}$ (which in the above notation takes the form $\rho_{total} = \oplus_\alpha \rho_\alpha$).

The fact that purely bulk arguments suffice to safely interpret quantum extremal surface computations for a bath in terms of standard bath entropies suggests that this lesson may also hold more generally.   In particular, in order to avoid divergences, let us consider a boundary region $B_L$ (in the sense of Ryu and Takayanagi \cite{Ryu:2006bv,Ryu:2006ef})  that is both compact and  without boundary ($\partial B_L = \emptyset$); see figure \ref{fig:fullbound}.  Here the notation $B_L$ denotes the fact that, in the main text below, we will refer to $B_L$ as the `left' part of the boundary while the complementary boundary region $B_R$ will be called the `right' part of the boundary (which we will also require to be compact and satisfy $\partial B_R = \emptyset$).  In this context we might expect that purely-bulk arguments can be used to construct a Hilbert space ${\cal H}_L$ associated with $B_L$, or perhaps a set of such Hilbert spaces
${\cal H}^\mu_L$ (labelled by some index $\mu$), such that the associated RT/HRT formula can be understood in terms of\footnote{The RT/HRT entropy will also include a Shannon term built from the probabilities $p_\mu$ for the system to be in Hilbert space ${\cal H}^\mu_L$.}
\begin{equation}
\label{eq:SvNidef}
S_{\text{vN}}(\rho^\mu_L) : = -\Tr_\mu \rho^\mu_{L} \ln \rho^\mu_{L},
\end{equation}
 where  $\rho^\mu_{L}$ is the density matrix describing the bulk quantum state on ${\cal H}^\mu_L$ and $\Tr_\mu$ is the standard Hilbert space trace on ${\cal H}^\mu_L$.  This is the challenge to be addressed below.

 \begin{figure}[t]
        \centering
\includegraphics[width=0.35\linewidth]{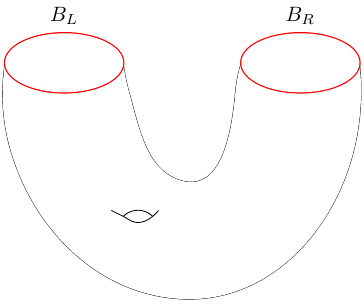}\caption{We consider boundaries $B_L,B_R$ that are complete in the sense that $\partial B_L = \emptyset = \partial \bar B_R$. We also require $B_L, B_R$ to be compact. }
\label{fig:fullbound}
\end{figure}

In certain limiting cases, related results were recently established by Chandrasekaran, Penington, and Witten \cite{Chandrasekaran:2022eqq} (building on \cite{Witten:2021unn} and \cite{Leutheusser:2021qhd,Leutheusser:2021frk}), and especially by Penington and Witten  \cite{Penington:2023dql}; see also \cite{Chandrasekaran:2022cip,Jensen:2023yxy,Kudler-Flam:2023qfl}.  However, the fact that their von Neumann algebras were type II rather than type I meant that their entropies were not given directly by standard Hilbert space traces.  A related comment is that the results of \cite{Chandrasekaran:2022eqq,Penington:2023dql} were valid only in a bulk semiclassical limit in which Hilbert space densities of states diverge and thus that their entropies correspondingly agree with the quantum-corrected Ryu-Takayanagi formula only up to an additive constant.

In contrast, we wish to consider a context in which Hilbert space densities of states are finite so that the above entropies will not require renormalization.   This should allow all constants to be determined.  However, it would also require appropriate couplings to be finite.   In this finite-coupling regime, a primary question will be to understand how the choice of boundary region $B_{L}$ can define the desired Hilbert spaces ${\cal H}_L^\mu$.  In particular, we will be far from the semiclassical regime in which entanglement wedges are well-defined; see e.g.\ the discussion in the final paragraph of \cite{Harlow:2016vwg}.

Of course, the bulk path integral at finite coupling is poorly understood.  Rather than attempt to find and study a UV-completion for any specific model, we instead proceed by simply supposing that we are given a UV-complete finite-coupling bulk asymptotically-locally-AdS (AlAdS) theory with an object that can be called a `Euclidean path integral,' and that this path integral  satisfies a simple set of axioms\footnote{\label{foot:Comp}The Euclidean setting is convenient, but need not be fundamental.  We are hopeful that Euclidean path integrals can generally be derived from Lorentzian path integrals using arguments along the lines of those described in \cite{Marolf:2022ybi}; see e.g.\ \cite{Blommaert:2022lbh} and \cite{Usatyuk:2022afj} for recent comments on such ideas. Our axioms in fact allow the possibility of considering complex metrics.}:

\begin{enumerate}
\item{} {\bf Finiteness:} The path integral gives a well-defined map $\zeta$ from boundary conditions defined by smooth manifolds to the complex numbers ${\mathbb C}$.

\item{} {\bf Reality:} This $\zeta$ is a real function of (possibly complex) boundary conditions; see again footnote \ref{foot:Comp}.

\item{}  {\bf Reflection-Positivity:} $\zeta$ is reflection-positive.

\item{} {\bf Continuity:} $\zeta$ satisfies a rather weak continuity condition described in section \ref{subsec:ax} below.

\item{}  {\bf Factorization:}  For closed boundary manifolds $M_1,M_2$ and their disjoint union $M_1 \sqcup M_2$ we have $\zeta(M_1\sqcup M_2) = \zeta(M_1)\zeta(M_2)$.
\end{enumerate}
The first three axioms are commonly assumed for asymptotically-AdS gravitational theories\footnote{\label{foot:real}The reality condition implies a symmetry of the path integral under reversal of time on the boundary.  It seems likely that this assumption can be dropped without significantly affecting the results, though we leave such a study for future work.}, and were in particular used in \cite{Marolf:2020xie}. In addition, the continuity axiom will be seen in section \ref{subsec:ax} to be extremely weak.  In practice, it seems uncontroversial.  The main subtlety is thus that we assume factorization of the path integral over disconnected closed boundaries, which means that any effects due to spacetime wormholes must either lead to the above-mentioned superselection sectors (in which factorization holds sector-by-sector, so that our analysis can still be applied in that sense), or that such effects must be cancelled by other contributions (as in e.g.\ \cite{Eberhardt:2020bgq,Eberhardt:2021jvj,Saad:2021rcu,Benini:2022hzx}).  As described in \cite{Marolf:2020xie}, it is plausible that the former option holds in general theories of quantum gravity with positive definite inner products.  The technical subtleties in this argument will be reviewed in section \ref{subsec:ax}.

Adopting this axiomatic framework allows us to answer the  challenge associated with equation \eqref{eq:SvNidef} by constructing von Neumann algebras ${\cal A}^{B_L}_L$, ${\cal A}^{B_R}_{R}$ of observables associated with the boundary region $B_L$ and the complementary boundary $B_R$ and by then showing these algebras to contain only type I factors. (We will refer to this property by saying that the entire algebra is of type I.)  The elements of these algebras may be called `boundary observables' in the sense of \cite{Marolf:2008mf}, though we again emphasize that they are defined without assuming the existence of a local dual field theory.  Indeed, as noted above it is plausible that the required axioms to hold for successful UV-completions of general asymptotically-AdS gravitational systems, whether the completion be called string field theory, spin-foam loop quantum gravity, or by some other name.   An important role in our analysis will turn out to be played by the trace inequality recently discussed in \cite{Colafranceschi:2023txs}, which we show to be a consequence of our axioms.

Our construction also leads to associated von Neumann entropies on ${\cal A}^{B_L}_L, {\cal A}^{B_R}_R$ which can be studied using a standard gravitational replica trick.  As usual, when the bulk has an appropriate semiclassical limit that can be described in terms of a local metric theory of gravity, in typical baby-universe superselection sectors\footnote{See e.g. \cite{Marolf:2020xie,Marolf:2020rpm} for discussion of this qualification.} this entropy is given by an RT/HRT-like formula with corrections from both quantum \cite{Faulkner:2013ana} and higher-derivative effects (see e.g.\ \cite{Dong:2013qoa,Camps:2013zua,Miao:2014nxa}).  Furthermore, since ${\cal A}^{B_L}_L, {\cal A}^{B_R}_{R}$ are of type I, they decompose
into direct sums/integrals of type I von Neumann factors.  As a result, the Hilbert space on which these algebras act must also decompose into a sum/integral of terms ${\cal H}^\mu$ (say, labelled by an index $\mu$), each of which is a tensor product ${\cal H}^\mu_{L} \otimes {\cal H}^\mu_{R}$ such that
${\cal A}^{B_L}_L$ acts only on ${\cal H}_{L}$ and  ${\cal A}^{B_R}_{R}$ acts only on ${\cal H}_{R}$.
We also show that ${\cal A}^{B_R}_R$ and ${\cal A}^{B_L}_{L}$ are commutants of each other.
It will then follow that (in typical supserselection sectors) the RT/HRT prescription computes appropriate semiclassical limits of an entropy defined by the
$S_{\text{vN}}(\rho^\mu_L)$ of \eqref{eq:SvNidef}, a Shannon term built from the probabilities $p_\mu$ to be in the Hilbert space ${\cal H}^\mu$, and a set of positive constants $n_\mu$ (which parameterize the general ambiguity in defining an entropy functional on a type I von Neumann algebra).  Finally, a quantization argument will show that the index $\mu$ must be discrete, and that the appropriately-defined $n_\mu$ are positive integers.  As a result, by adding certain finite-dimensional `hidden sectors,' one can construct a modified bulk Hilbert space in which the usual Hilbert space computations of subsystem entropy agree with the result given by RT/HRT.  In this sense we find a Hilbert space interpretation of RT/HRT-entropy.

We  begin in section \ref{sec:PIHS} with an overview of our axioms for (a UV-completion of) a Euclidean gravitational path integral and the construction of the relevant sectors of the gravitational Hilbert space.  The von Neumann algebras ${\cal A}^{B_L}_L$, ${\cal A}^{B_R}_{R}$ are defined in section \ref{sec:algebras} for the case where $B_L$ is chosen for simplicity to be diffeomorphic to $B_R$. The type I structure and the associated decomposition of appropriate sectors of the bulk Hilbert space are then derived in section \ref{sec:typeI}.  Some examples are briefly discussed in
section \ref{sec:ex}.  We conclude in section \ref{sec:disc} by summarizing results, describing potential generalizations, and discussing open issues.

\section{The Path Integral and the Hilbert Space}\label{sec:PIHS}

The goal of this section is to
write down a set of axioms for an object that we will call the Euclidean path integral for a UV-completion of an AlAdS theory of gravity, and to then use those axioms to construct the sectors of the Hilbert space that we will study in sections \ref{sec:algebras} and \ref{sec:typeI} below.    We emphasize that we will require {\it only} that such axioms hold, and that {\it any} object satisfying the axioms may be called a Euclidean path integral, regardless of whether it is in fact computed as an integral over anything resembling Euclidean geometries.  We also emphasize that there may well be many other properties that a good bulk theory should satisfy and which are not captured by our axioms. In other words, we suggest our axioms to be necessary, though probably not at all sufficient, for a theory to be satisfactory.  What we find to be of most interest is just how much can be derived from the simple Axioms \ref{ax:finite}-\ref{ax:factorize} below.

Section \ref{subsec:motivations} describes some brief motivations that stem from considering path integrals that might be defined by sums over Euclidean geometries.  This discussion is informal, but it inspires us to formulate certain axioms that we record in section \ref{subsec:ax}, and which then become the starting point for careful analysis in the remainder of this work.  The relevant Hilbert space sectors are then constructed in section \ref{subsec:Hilbsec}.  Much of the analysis below follows \cite{Marolf:2020xie}.

\subsection{Motivations from sums over geometries}
\label{subsec:motivations}

While in the end we will not actually require that our path integral be formulated as a sum over Euclidean geometries, we would like our axioms to apply to any such cases that might exist.  We dedicate this section to brief comments on such
hypothetical objects, which we take as motivations for axioms to be stated below.  We emphasize that this phase of our discussion is informal and, due to the dearth of examples, it is necessarily imprecise.  Formal discussion will commence in section \ref{subsec:ax} with the formulation of our axioms, from which the rest of our analysis will follow.

Let us thus briefly consider a path integral that actually integrates over a set of fields, among which is the (Euclidean-signature) metric. The bulk fields may also include scalars, fermions, gauge fields, etc.
We will take the above-mentioned sum over metrics to include a sum over all possible topologies.  The bulk fields will be collectively denoted $\phi$, for which the corresponding Euclidean action will be $S[\phi]$.
To every smooth \textit{closed}\footnote{Recall that in mathematics a manifold is said to be closed when it is compact and has no boundary (i.e., $\partial M = \emptyset$).  We will use the term in this sense everywhere below, though we will sometimes add the redundant qualifier compact in order to remind the reader of this property.}   AlAdS  boundary $M$ at which appropriate (potentially complex) boundary conditions are specified,
a Euclidean path integral would then assign the complex number
    \begin{equation}\label{PI}
      \zeta(M) \coloneqq \int_{\phi \sim M} \mathcal{D}\phi e^{-S[\phi]} .
    \end{equation}
    Here we use the symbol $M$ to denote not just the boundary manifold, but also the relevant boundary conditions for the bulk fields $\phi$.     The notation $\phi \sim M$ in \eqref{PI} indicates that we integrate only over bulk fields $\phi$ satisfying such conditions.

In order to avoid overuse of terms involving the word `boundary,' we henceforth refer to the boundary conditions on bulk fields as {\it sources}, and we refer to $M$ as a (boundary) {\it source manifold} to remind the reader of our inclusive terminology.   This terminology will seem natural to practitioners of AdS/CFT, though long experience in that context has established that, even without invoking such a duality, the boundary conditions for bulk fields play precisely the same role as sources for familiar non-gravitational quantum field theories.  In the AlAdS$_d$ context with $d$ even, the appropriate notion of sources/boundary conditions will typically be given by equivalence classes under Weyl transformations.

It is reasonable to expect $\zeta(M)$ to be finite for smooth $M$, and for $\zeta(M)$ to enjoy some degree of continuity under appropriately-small deformations of the boundary conditions described by $M$.  For the present purposes we allow the sources (i.e., the boundary metric and other boundary fields) described by $M$ to be complex, though one can also restrict the discussion to real boundary conditions (or to complex linear combinations thereof).  For complex sources, due to the expected reality property $[S(\phi)]^* = S(\phi^*)$, expression \eqref{PI} suggests that $[\zeta(M)]^* = \zeta(M^*)$  where ${}^*$  denotes complex conjugation and, in particular,  $M^*$ is the same manifold as $M$ but with complex-conjugated sources.

Let us imagine that we cut the path integral \eqref{PI} into two parts along a slice $\Sigma_{bulk}$ through the bulk spacetime. By this we mean that we slice each configuration $\phi$ that enters into the path integral into two parts, and that in all cases we call the cut $\Sigma_{bulk}$ even though the geometry of $\Sigma_{bulk}$, and in fact the topology of $\Sigma_{bulk}$, will depend on $\phi$. We will, however, require the intersection $\partial \Sigma_{bulk}$ of $\Sigma_{bulk}$ with the AlAdS boundary $M$ to be independent of $\phi$.  In the usual way, it is natural to take each of the two resulting pieces of the path integral to compute the wavefunction (or the complex conjugate of a wavefunction) of a state in a Hilbert space ${\cal H}_{\partial \Sigma_{bulk}}$ defined by the choice of $\partial \Sigma_{bulk}$.  The original (uncut) path integral then computes the inner product in ${\cal H}_{\partial \Sigma_{bulk}}$ of the two states thus defined; see figure \ref{fig:cut}.  In particular, when the states are identical, the original uncut path integral computes the norm squared of the state and thus should be required to give a non-negative result.

 \begin{figure}[ht!]
        \centering
\includegraphics[width=0.35\linewidth]{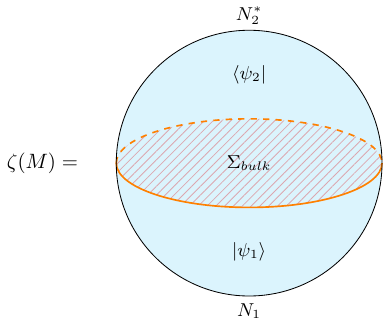}\caption{A slice $\Sigma_{bulk}$ (red) of the path integral intersects the (here, spherical) AlAdS boundary $M$  at a codimension-2 surface $\partial \Sigma_{bulk}$ (red circle) which splits $M$ into two hemispheres $N_1$ and $N_2^*$.  Each half of the path integral defines a quantum state $|\psi_i\rangle$ by computing the wavefunction of $\psi_i$ on $\Sigma_{bulk}$.  These wavefunctions can be thought of as the result of Euclidean evolution from the boundary conditions $N_i$, and the full path integral defined by $M$ can then be regarded as computing the inner product $\langle\psi_2|\psi_1 \rangle$.}
\label{fig:cut}
\end{figure}

Furthermore, it is natural to generalize the above discussion by replacing $M$ with a finite formal linear combination of source manifolds
\begin{equation}
\label{eq:flc}
M : = \sum_{I=1}^n \gamma_I M_I,
\end{equation} for some $n\in {\mathbb Z}^+$ with $\gamma_I \in {\mathbb C}$, in which case we simply use linearity to define
\begin{equation}
\label{eq:zlin}
\zeta(M) : = \sum_{I=1}^n \gamma_I \zeta(M_I).
\end{equation}
In this case we also define
\begin{equation}
\label{eq:starlin}
M^* : = \sum_{I=1}^n \gamma_I^* M_I^*.
\end{equation}
In particular, such formal sums $M$ can sometimes again be `sliced' (or, perhaps better, factorized) into two pieces (factors) and, when the two pieces are isomorphic up to the appropriate complex conjugation, we again expect $\zeta(M)$ to compute a non-negative norm squared.  Below, we will use the notation $X^d$ to denote the set of {\it smooth} $d$-dimensional closed (i.e., compact and without boundary) source manifolds $M$ appropriate to some given theory.  We then use the underlined notation $\U{X}^d$ to denote formal finite linear combinations of such manifolds with coefficients in ${\mathbb C}$ as in \eqref{eq:flc} (with $M_I \in X^d$).  Members of both $X^d$ and $\U{X}^d$ will be denoted $M$ to avoid cumbersome notation.   This should not cause confusion since, as above, we will extend any function $\zeta: X^d \rightarrow {\mathbb C}$ to the domain $\U{X}^d$ by linearity; i.e., via \eqref{eq:zlin}.

\subsection{Some axioms for the UV-completion of a bulk path integral}
\label{subsec:ax}

The above brief discussion motivates the following axioms for the UV-completion of any $(d+1)$-dimensional AlAdS
Euclidean quantum gravity path integral $\zeta(M)$.
We also expect our axioms to apply to UV-completions of bulk gravitational theories of spacetimes asymptotic to ${\cal M}_{d+1} \times X$ where ${\cal M}_{d+1}$ is AlAdS$_{d+1}$ and $X$ is a fixed compact manifold of arbitrary dimension, as well as to other asymptotic structures such as those described in \cite{Itzhaki:1998dd}. However, for simplicity of discussion we will refer only to the AlAdS context below.   We also emphasize again that we make no attempt to state a complete set of such axioms.  Thus, while we expect our axioms to be satisfied in well-behaved contexts, they are almost certainly insufficient to fully characterize the desired UV-completions.

Our first four axioms are as follows:

\begin{axiom}
\label{ax:finite}  {\bf Finiteness:}  For some space of $d$-dimensional closed (and thus compact) source manifolds $X^d$, we are given a function $\zeta: X^d \rightarrow {\mathbb C}$; i.e., $\zeta(M)$ is well-defined and finite for every  $M \in X^d$.  Although we do not specify the detailed nature of the allowed sources, the sources should be given by fields (or equivalence classes thereof) on an underlying manifold.  Furthermore, $X^d$ should include any smooth closed manifold with smooth source fields of the allowed types.
\end{axiom}

\begin{axiom}
\label{ax:real}  {\bf Reality\footnote{As stated in footnote \ref{foot:real}, this condition implies a symmetry of the path integral under reversal of time on the boundary.  It thus seems likely that this assumption can be dropped without significantly affecting the results, though we leave such a study for future work.}}:  For every $M \in \U{X}^d$, we have both $M^* \in \U{X}^d$ and $[\zeta(M)]^* = \zeta(M^*)$, where $M^*$ is defined by \eqref{eq:starlin}.
\end{axiom}

\begin{figure}[ht!]
        \centering
\includegraphics[scale=0.8]{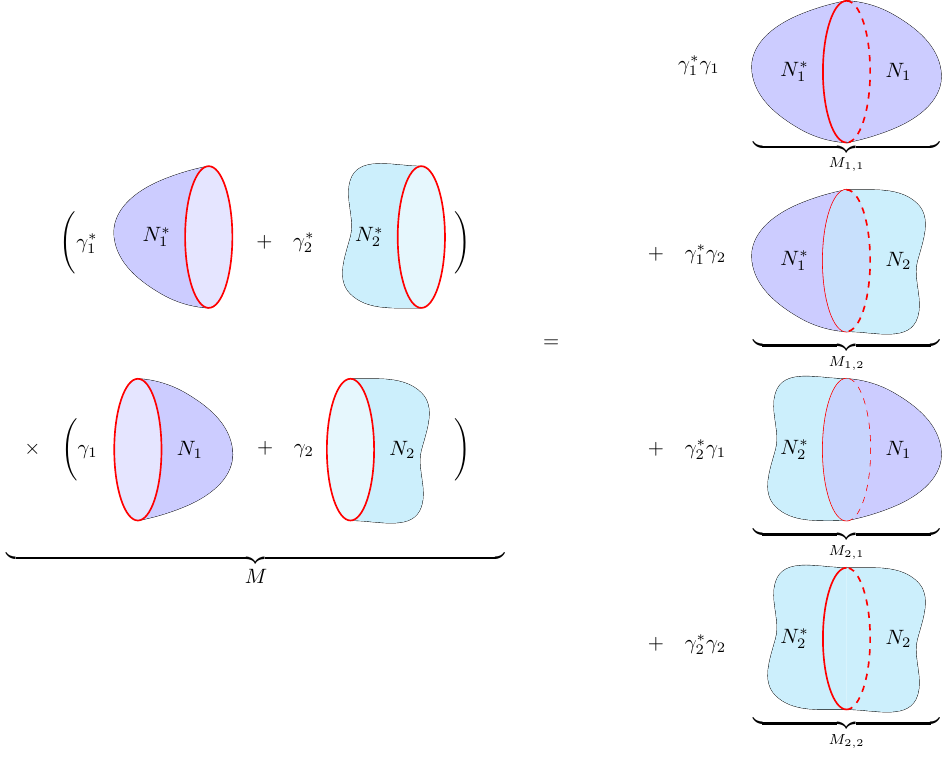}\caption{ A reflection-symmetric linear combination $M\in \U{X}^d$ of smooth source manifolds. The representation on the  left side of the equality makes reflection-symmetry manifest.   On the right side of the equality, the same $M$ is shown as an explicit sum of  terms, each proportional to a source manifold $M_{I,J}$ that can be cut into $N_I^*$ and $N_J$, and with coefficients of form $\gamma_I^* \gamma_J$. Here and in all figures below, we make no attempt to distinguish a given source from its complex conjugate.  Thus $N_I^*$ appears as simply a reflected version of $N_I$. The `diagonal' manifolds $M_{I,I}$ are individually reflection-symmetric.  Axiom \ref{ax:RP} requires that such reflection-symmetric $M$ have $\zeta(M) \ge 0$.}
\label{fig:RP}
\end{figure}

\begin{axiom}
\label{ax:RP} {\bf Reflection Positivity:}    Suppose for some $n \in {\mathbb Z}^+$ that $M\in \U{X}^d$ can be written in the form $ M = \sum_{I,J=1}^n  \gamma_I^* \gamma_J M_{I,J}$  where $\gamma_I \in {\mathbb C}$, $\gamma_I^*$ denotes the complex conjugate of $\gamma_I$,  and where each $M_{I,J}$ can be sliced into two parts $N_I^*$, $N_J$; see figure \ref{fig:RP}.  By such a slicing, we mean that there is a smooth codimension-1 hypersurface $\Sigma_{I,J}$ in $M_{I,J}$ that partitions $M_{I,J}$ into $N_I^*$ and $N_J$, so that $N_I^*$ and $N_J$ are source manifolds with boundaries.
Specifically,  the above notation requires that the same source-manifold-with-boundary $N_I^*$ is obtained from slicing $M_{I,J}$ for each $J$, and the same source-manifold-with-boundary $N_J$ is obtained by slicing $M_{I,J}$ for each $I$.  In particular, slicing the diagonal closed manifold $M_{I,I}$ along $\Sigma_{I,I}$ yields $N_I^*$ and $N_I$. The notation $N_I^*$ indicates that each diagonal source manifold $M_{I,I}$ admits a diffeomorphism $\phi_{I,I}$ that both acts as a reflection about $\Sigma_{I,I}$ and  complex-conjugates all sources\footnote{In particular, if our theory requires the source manifolds to be oriented and if $\zeta$ is sensitive to the choice of orientation (so that the orientation is a non-trivial source), then we must declare the orientation to be an imaginary source, which would change sign under the complex-conjugation inherent in the operation ${}^*$. This change of sign can be seen by noting that $N_I^*$ and $N_I$ must have consistent orientations when we glue them along $\Sigma_{I,I}$ back into $M_{I,I}$.}.  When these conditions hold, $\zeta(M)$ is a non-negative real number.
\end{axiom}

\begin{figure}[ht!]
        \centering
\includegraphics[width=0.7\linewidth]{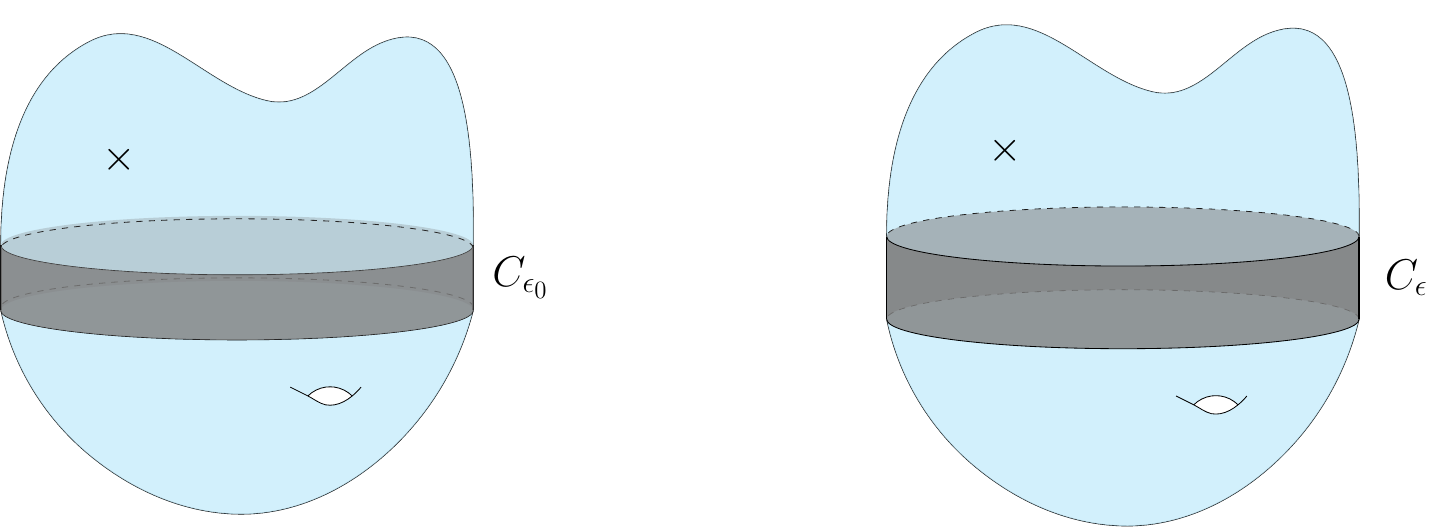}\caption{ The source manifold $M_{\epsilon_0}$ contains a cylinder $C_{\epsilon_0}$ of length $\epsilon_0$.  Changing the length of this cylinder to $\epsilon$ defines a new source manifold $M_\epsilon$. Here and in all figures below, the symbols $\times$ denote potential features of smooth sources whose details are not shown but which serve to distinguish certain boundary points.  For example, such features might be local peaks in the Kretchman scalar of the boundary metric, extrema of a smooth scalar source, or points where a fermionic source becomes large.  The main role of these features in our figures is to provide a simple and clean visualization of effects that arise when boundary-sources break symmetries of the simple cases that we choose to depict.}
\label{fig:cont}
\end{figure}

\begin{axiom}
\label{ax:continuity}  {\bf Continuity:}
Suppose that a source manifold $M\in X^d$ contains a region diffeomorphic to an
(orthogonal) cylinder source-manifold-with-boundary $C_{{\epsilon_0}}$ of some length ${\epsilon_0} >0 $; see figure \ref{fig:cont}.  The term ``(orthogonal) cylinder source-manifold-with-boundary'' indicates that $C_{\epsilon_0}$ is topologically of the form $B \times [0,{\epsilon_0}]$, and that $C_{\epsilon_0}$ admits a Killing field $\xi$ which generates a local symmetry of $C_{\epsilon_0}$ and the sources that it represents, with $\xi$ orthogonal to $\partial C_{\epsilon_0}$.  By a local symmetry, we mean that at each point in the interior of $C_{\epsilon_0}$ the flow along $\xi$ is well-defined for at least some finite values of the Killing parameter, though of course the boundaries $\partial C_{\epsilon_0}$ prohibit $C_{\epsilon_0}$ from enjoying a translational symmetry along these flows.  The statement that $C_{\epsilon_0}$ has length ${\epsilon_0}$ means that the two copies of $B$ in $\partial C_{\epsilon_0}$ are related by flow through a Killing parameter ${\epsilon_0}$, though the actual value of ${\epsilon_0}$ is not meaningful since we have not fixed a preferred normalization for the Killing field $\xi$.  For simplicity we will drop the qualifier `orthogonal' when discussing cylinders $C_\epsilon$ below.

Let us now write the above $M$ as $M_{\epsilon_0}$ and define a related family of manifolds $M_\epsilon$ by replacing the $C_{\epsilon_0}$  contained in $M_{\epsilon_0}$ with the analogous cylinder $C_\epsilon$.  The resulting $\zeta(M_\epsilon)$ is then required to be a continuous function of $\epsilon$ at all $\epsilon >0$.
\end{axiom}
The reader will note that our continuity condition is extremely weak, and that one generally expects rather stronger continuity conditions to hold.  However, Axiom \ref{ax:continuity} has the benefit of being simple to state for general boundary dimension $d$, and it will turn out to be sufficient for our purposes below.

Axioms \ref{ax:finite}, \ref{ax:real}, and \ref{ax:RP} are requirements explicitly stated in \cite{Marolf:2020xie}. Continuity was not discussed in \cite{Marolf:2020xie}, but the mild hypothesis stated in Axiom \ref{ax:continuity} is a natural addition.   However, in addition to Axioms \ref{ax:finite}-\ref{ax:RP}, in its discussion of general theories Ref.~\cite{Marolf:2020xie} also implicitly used
additional assumptions to deal with spacetime wormholes. In particular,  as explained in \cite{Marolf:2020xie}, Axioms \ref{ax:finite}-\ref{ax:RP} imply the set of real source manifolds $M\in X^d$ to be associated with a collection of symmetric operators defined on a common dense domain in a natural quantum gravity Hilbert space.  These axioms also further imply that any two such operators commute on this domain. It was then suggested in \cite{Marolf:2020xie} that each of these symmetric operators should have a self-adjoint extension to the full Hilbert space, and that the physically correct such extensions were again mutually commuting.   This outcome will seem natural to many physicists, though the above results are not in fact sufficient to prove that it actually occurs. See in particular \cite{RS} for an example where symmetric operators that commute on a common invariant dense domain are essentially self-adjoint (so that they have a unique self-adjoint extension to the full Hilbert space) but where their self-adjoint extensions nevertheless fail to commute.  Furthermore, while this issue may seem to some like an abstruse technical concern, it may have some connection to the ambiguity in constructing ensembles dual to Jackiw-Teitelboim (JT) gravity discussed in e.g.\ \cite{Stanford:2019vob,Johnson:2019eik,Johnson:2020exp,Johnson:2020mwi,Johnson:2021owr,Johnson:2021zuo,Johnson:2022wsr,Johnson:2023ofr,
Turiaci:2023jfa}.  It would be very interesting to understand the potential instabilities discussed in such works in terms of the algebraic language used here.

Despite the above caveats, if the suggestion of \cite{Marolf:2020xie} {\it does} hold, then the self-adjoint extensions can be simultaneously diagonalized on the full quantum gravity Hilbert space.  The resulting simultaneous eigenspaces of these operators are then called ``baby universe superselection sectors,'' and they have the property that any $M\in X^d$ defines an operator proportional to the identity on each such sector.  As a result, one may show (see again \cite{Marolf:2020xie} as well as the explicit discussion of \cite{Blommaert:2022ucs} for JT gravity) that considering any given sector on its own may be thought of as working with a modified path integral that exhibits the factorization property
\begin{equation}
\label{eq:factorize}
\zeta(M_1\sqcup M_2) = \zeta(M_1) \zeta(M_2)
\end{equation}
for closed source manifolds $M_1, M_2 \in X^d$.  Here the symbol $\sqcup$ denotes the disjoint union of source-manifolds-without-boundary. Such superselection sectors are often called $\alpha$-sectors, and they play the role of the $\alpha$-states described in \cite{Coleman:1988cy,Giddings:1988cx}.  Once this structure is established, it is then sufficient to deal with each baby universe superselection sector individually.

It is tempting to expect bulk path integrals of UV-complete theories to be equivalent to collections of such superselection sectors. Here we include the case suggested in \cite{McNamara:2020uza} where there is only one such superselection sector in the collection,  so that each $M\in X^d$ defines an operator proportional to the identity on the entire Hilbert space.  However, we emphasize that in the presence of multiple such superselection sectors, it would be natural to simply work with each such sector separately. Doing so would allow us to frame arguments in terms of path integrals satisfying \eqref{eq:factorize}.  As a result, rather than introduce further complicated axioms which would require the path integral to be a sum over superselection sectors as described above, we will simply assume that we start with a path integral satisfying  the factorization property \eqref{eq:factorize}.  In particular, we include the following axiom:

\begin{axiom}
\label{ax:CW}
\label{ax:factorize}  {\bf Factorization:}
For source manifolds $M_1, M_2 \in X^d$, the function $\zeta$ satisfies \eqref{eq:factorize}.  Note that this is equivalent to requiring \eqref{eq:factorize} to hold for $M_1, M_2 \in \U{X}^d$.
\end{axiom}

We will investigate consequences of our axioms below.

\subsection{Sectors of the quantum gravity Hilbert space}
\label{subsec:Hilbsec}

As noted at the beginning of this section, one expects to be able to obtain states of any quantum gravity theory by `cutting open' the associated path integral. The associated formal construction from our axioms will be described shortly.  This construction is standard, and in particular closely parallels the quantum field theory (QFT) case described in e.g.\ \cite{Glimm:1987ng}. As remarked in the introduction, our approach will be to remain agnostic about the inner workings of the path integral, and simply to view it as a function $\zeta : X^d \rightarrow {\mathbb C}$ satisfying Axioms \ref{ax:finite}-\ref{ax:factorize}.

We refer the reader to the literature for further discussion of what it means to cut open a quantum gravity path integral; see e.g.\ \cite{Marolf:2020xie}.  However, at an abstract level it is clear that doing so requires that we cut any closed AlAdS boundary $M$ into two pieces $N_1, N_2$ with $\partial N_1 = \partial N_2$.  We should then associate quantum gravity states with these two pieces such that the inner product of the two states is $\zeta(M)$.

However,  there are several subtleties in this process that merit discussion. The first such subtlety arises when there are open sets in $N_1, N_2$ that contain $\partial N_1 = \partial N_2$ and which admit non-trivial symmetries.  In that case, there is more than one way to glue the pieces $N_1, N_2$ back together to obtain a smooth manifold.  Furthermore, each such gluing $g$ generally leads to a different closed manifold $M_g$, only one of which can be the original $M$ from which the pieces $N_1, N_2$ were cut.  As a result, it is not sufficient to think of $N_1, N_2$  as diffeomorphism equivalence classes of source manifolds with boundaries.  Instead, we see that we should think of the points on $\partial N_1 = \partial N_2$ as being labelled, so that $M$ is reconstructed by gluing $N_1$ to $N_2$ along their boundaries in the manner dictated by matching identical labels.  As a result, we will henceforth use the notation $N$ to denote a manifold with boundary $\partial N$, together with a labelling of points on $\partial N$.

Before proceeding, it may be useful to illustrate the labelling of points on $\partial N$ with some simple examples.  The simplest case occurs for $d=1$, where $\partial N$ has dimension $d-1=0$ and so consists only of discrete points.   Each such point must be assigned a distinct label.  For example, for a given source-manifold-with-boundary $N_I$, this number might be $2$ and $N_I$ might simply be a line segment (say, of some fixed length $\beta$) with the two points in $\partial N_I$ labelled $0$ and $1$.  If $N_{II}$ is a diffeomorphic source-manifold-with-boundary but with boundary points labelled $1,2$, then it is considered to be a different  source-manifold-with-boundary and we write $N_I \neq N_{II}$.  Similarly, suppose that $N_{III}, N_{IV}$ are  again line segments of length $\beta$ with boundary points $0,1$, and that we in fact introduce a coordinate $\theta$ on both line segments that measures $\beta^{-1}$ times the proper distance from the boundary point $0$.  Let us also suppose that $N_{III}$ comes equipped with some scalar source $\phi(\theta)$ which increases monotonically for $\theta \in [0,1]$, and that $N_{IV}$ has a corresponding scalar source $\phi(1-\theta)$.  Then while $N_{III}$ and $N_{IV}$ are related by a source-preserving diffeomorphism $\theta \rightarrow 1-\theta$, this diffeomorphism fails to preserve the labelling of points in $\partial N_{III}, \partial N_{IV}$.  We will thus again write $N_{III} \neq N_{IV}$.

Note that we can also upgrade the above examples to $d=2$ by replacing any boundary-point (say, that is labelled by some $i$ above) with a circle, and by labeling points on this circle $(i, \theta)$ where $\theta \in [0, 2\pi)$ is a standard angular coordinate while $i$ now effectively labels the circle as a whole.   We can also use a similar notation to define a disjoint union operation $\sqcup$ that will be of frequent use below.  For any boundary $\partial N$ for which its points have labels $\{I\}$, we define the disjoint union $\partial N \sqcup \partial N$ to have labels $(i, I)$ where $i=1$ on the first copy of $\partial N$ and $i=2$ on the second copy, and where the labels $I$ are again assigned just as in $\partial N$.  In particular, if $\partial N= S^1$, then we may label $S^1 \sqcup S^1$ with $(i, \theta)$ for $i=1,2$ and $\theta\in [0, 2\pi)$ as above.

Returning to our main discussion, let us now suppose that we are given two manifolds $N_1, N_2$ with labelled boundaries $\partial N_1, \partial N_2$, such that the boundary labels define a diffeomorphism $\phi: \partial N_1 \rightarrow \partial N_2$. (Recall that diffeomorphisms are required to be surjective.) We can then use this $\phi$ to glue $N_1$ to $N_2$ to define a closed manifold $M$ without boundary.  However, there is no guarantee that the resulting boundary fields on $M$ will be smooth, or indeed even that they will be continuous.  As a result, $\zeta(M)$ may not be well-defined.

We will deal with this issue by using the following simple expedient: rather than attempting to construct the entire quantum gravity Hilbert space, we will instead construct only sectors that are associated with certain types of data on the codimension-2 boundaries $\partial N$.  In particular, we will consider only source-manifolds-with-boundary $N$ that are {\it rimmed} in the following sense:

\begin{definition}
\label{def:rim}
A source manifold $N$ with boundary $\partial N$ will be said to be {\it rimmed} when there is a neighborhood $N_\epsilon$ of $\partial N$ such that $N_\epsilon$ is diffeomorphic to some cylinder source manifold $C_\epsilon$ of the form defined in Axiom \ref{ax:continuity} and which satisfies the reality condition $(C_\epsilon^*)^t = C_\epsilon$ with the operation ${}^*$ defined as in Axiom \ref{ax:RP} and the transpose operation ${}^t$ defined by simply swapping the labels attached to the two boundaries of the cylinder.  Such cylinders will be called `self-adjoint.'   The region $N_\epsilon$ is then called a {\it rim} of $N$.
\end{definition}
We note that, in order for rimmed manifolds to exist, `self-adjoint' cylinders (satisfying $(C_\epsilon^*)^t = C_\epsilon$) must also exist\footnote{For certain choices of $\partial N$, it can happen that there are no rimmed manifolds $N$ with the desired boundary.  This can occur, for example, if the path integral is defined only for closed manifolds $M$ with non-negative Ricci scalar (perhaps to avoid instabilities like those described in \cite{Seiberg:1999xz}) and if the Ricci scalar of $\partial N$ is negative.  In this case, the Hilbert space sector $\mathcal H_{\partial N}$ described below is trivial.}.

We also make the following definitions:
\begin{definition}
\label{def:bndy}
We will say that two rimmed source-manifolds $N_1,N_2$ with boundaries $\partial N_1$, $\partial N_2$ {\it agree on their boundaries} when they admit rims $N_{1\epsilon}, N_{2\epsilon}$ that are related by a diffeomorphism that preserves sources and which also preserves the labels on $\partial N_1, \partial N_2$.  By the local translation symmetry, the data on all of $N_{1\epsilon}, N_{2\epsilon}$ is determined by data at $\partial N_1, \partial N_2$, so we will write $\partial N_1 = \partial N_2$ to denote the above agreement on the rims $N_{1\epsilon}, N_{2\epsilon}$.    We will similarly use the symbol $\partial N$ to denote the manifold at the boundary of the source-manifold-with-boundary $N$ together with enough information about the sources on $N$ to reconstruct sufficiently small rims $N_\epsilon$.
\end{definition}

The utility of  restricting to rimmed source-manifolds is that, when two rimmed source-manifolds $N_1,N_2$ agree at their boundary (in the sense that $\partial N_1 = \partial N_2$), it is then clear that a reflection of $N_1$ can be glued to $N_2$ to define a new smooth source-manifold-without-boundary.   However, since $N_1^*$ already incorporates the required reflection of $N_1$, and since Definition \ref{def:rim} required the rims to be self-adjoint, it is natural to simply discuss gluing $N_1^*$ to $N_2$.  For $\partial N_1 = \partial N_2$, we denote the result of this gluing by $M_{N_1^*N_2}$.   For future use, we note that our gluing operation acts symmetrically on $N_1,N_2$ up to the action of ${}^*$ so that we have
\begin{equation}
\label{eq:Msym}
\left( M_{N_1^* N_2} \right)^* = M_{N_2^* N_1} = M_{N_1 N_2^*},
\end{equation}
where ``='' means that the two source manifolds are related by a source-preserving diffeomorphism;
see figure \ref{fig:conjprod}.
Due to this symmetry, from Axiom \ref{ax:real} we also have
\begin{equation}
\label{eq:zeta*}
\zeta(M_{N_1^* N_2}) = \zeta(M_{N_2^* N_1})^*.
\end{equation}

\begin{figure}[t]
        \centering
\includegraphics[scale=0.8]{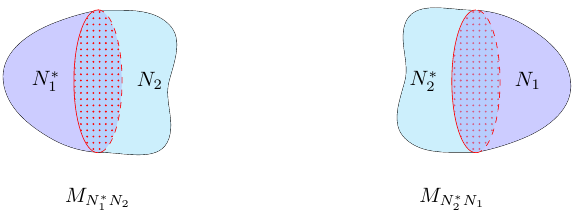}\caption{  Gluing $N^*_1$ to $N_2$  and gluing  $N_2^*$ to $N_1$ defines source-manifolds-without-boundary $M_{N_1^* N_2}$ and $M_{N_2^* N_1}$ that are related by a diffeomorphism that complex-conjugates sources. As depicted here, the relevant diffeomorphism acts as a reflection across the shaded plane.   Thus $(M_{N_1^* N_2})^* = M_{N_1 N_2^*}$, where ``='' means that the two are related by a source-preserving diffeomorphism.}
\label{fig:conjprod}
\end{figure}

In particular, for a given such choice of $\partial N$ (in the sense of Definition \ref{def:bndy}) we can define a sector ${\cal H}_{\partial N}$ of the quantum gravity Hilbert space. To do so, define $Y^d_{\partial N}$ as the space of compact rimmed source manifolds $N$ having the given boundary $\partial N$.  From $Y^d_{\partial N}$, we can then construct the space $\U{Y}^d_{\partial N}$ of finite formal linear combinations $N = \sum_{I=1}^n \gamma_I N_I$ with $\gamma_I \in {\mathbb C}$ and $N_I \in Y^d_{\partial N}$.
The next step is to associate a (not necessarily distinct) state
$|N \rangle$ with each $N \in \U{Y}^d_{\partial N}$.  Two such states $|N_1 \rangle, |N_2 \rangle$ are defined to have (pre-)inner product
\begin{equation}
\label{eq:ip}
\langle N_1 | N_2 \rangle : = \zeta(M_{N_1^* N_2}),
\end{equation}
where $M_{N_1^* N_2} \in \U{X}^d$ is defined by using the distributive law $M_{N_1(\alpha N_2 + \beta N_3)} = \alpha M_{N_1N_2} + \beta M_{N_1N_3}$ and similarly for  $M_{(\alpha N_1^* + \beta N_3^*)N_2}$.
The (pre-)inner product is Hermitian due to \eqref{eq:zeta*}, and is positive semi-definite by Axiom \ref{ax:RP}.   We may then say that \eqref{eq:ip} defines a pre-Hilbert space $H_{\partial N}$. Taking the quotient by the space ${\cal N}_{\partial N}$ of null vectors and completing\footnote{We complete $H_{\partial N}/{\cal N}_{\partial N}$ in the standard way using equivalence classes of Cauchy sequences $\{\ket{N_m}\}$, where two such sequences are equivalent if the norm of their difference approaches $0$.} the result then yields a Hilbert space ${\cal H}_{\partial N}$ that we call the $\partial N$-sector of the full quantum gravity Hilbert space.    Below, we will use the notation $|N\rangle$ to denote both elements of the pre-Hilbert space $H_{\partial N}$ and the associated equivalence class in ${\cal H}_{\partial N}$, though the distinction should always be clear from the context.  Indeed, since $\U{Y}^d_{\partial N}$ allows only finite linear combinations, it may often be the case that ${\cal N}_{\partial N}$ is empty and the quotient is trivial.

The above expedient will allow us to proceed quickly to constructing and studying algebras of operators on ${\cal H}_{\partial N}$ without characterizing in detail the degree of differentiability of sources on $M$ that is required for $\zeta(M)$ to be finite, and also without analyzing the manner in which divergences arise when such conditions fail.  If our goal is to construct states associated with static Lorentz-signature boundary conditions, then one may expect that our restriction to rimmed surfaces gives the full such Hilbert space.  One argument for this comes from AdS/CFT, in which case the rims correspond to insertions of $e^{-\epsilon H}$ for some $\epsilon$.  Since $e^{-\epsilon H}$ is invertible, even at fixed finite $\epsilon$ the rimmed surfaces will generate a complete set of states.  However, even without relying on AdS/CFT,
since we allow the rim $N_\epsilon$ to be arbitrarily small, if our path integral is sufficiently continuous in $\epsilon$ then rimmed surfaces may still provide full information about the sector of the theory associated with a given $\partial N$ of the type described above\footnote{Note, however, that Axiom \ref{ax:continuity} requires only continuity when the length $\epsilon >0$ of some cylinder is slightly deformed and, in particular, it does not necessarily require continuity when the limiting source-manifold no longer contains a cylinder of non-zero length.}.

Nonetheless, two shortcomings to our approach should be discussed.  The first is that we obtain no information about inner products $\langle N_1 | N_2\rangle$ when $\partial N_1 \neq \partial N_2$.  Such inner products do not necessarily vanish, especially in low dimensions. Indeed, for $d=1$ both AdS/CFT and the associated semiclassical bulk computations suggest  that the inner product can be non-zero even when the sources on $M_{N_1^* N_2}$ are discontinuous (so long as $M_{N_1^* N_2}$ is a well-defined topological manifold).

The second shortcoming is that, while we expect the above restrictions to allow us to construct all states associated with static Lorentz-signature boundaries, at least in high dimensions we expect to miss sectors of the quantum gravity Hilbert space associated with non-static boundaries. Based on both the AdS/CFT context and the divergences that manifest themselves in the associated semiclassical bulk computations, we expect this issue to be related to what one finds when studying quantum fields on curved spacetime, where in high dimensions the space of states on a given Cauchy slice $\Sigma$ (say, specified by the correlation functions of fields and their derivatives on $\Sigma$) can depend not only on the metric induced on $\Sigma$ but also on various normal derivatives of background fields (sources) evaluated at $\Sigma$.  It would be interesting to return to both of these issues in the future, though a full investigation of the second issue seems likely to require a Lorentz-signature analysis.


\section{Operator Algebras from the Path Integral}
\label{sec:algebras}

We have thus far described how our path integral $\zeta$ can be used to construct sectors ${\cal H}_{\partial N}$ of the quantum gravity Hilbert space.  But it can also be used to construct operators, and this construction will be useful in understanding the further structure of ${\cal H}_{\partial N}$ and the relation to RT entropy.  To define such operators, let us again consider the space of compact rimmed surfaces $Y^d_{\partial N}$ for some choice of codimension-2 boundary $\partial N$. We will now further suppose that $\partial N$ is the disjoint union of two pieces, $\partial N =  B_{in} \sqcup B_{out}$, with both $B_{in}, B_{out}$ being compact and closed (in the sense that $\partial B_{in} = \partial B_{out} = \emptyset$).  Then for any appropriate additional boundary (which for later purposes we call $B_R$), one sees that any $a\in  Y^d_{\partial N}$ defines an operator from $H_{B_{in} \sqcup B_R}$ to $H_{B_{out} \sqcup B_R}$ by gluing surfaces along $B_{in}$. We may thus construct operators that preserve a sector of the form $H_{B_L \sqcup B_R}$ by considering the case $B_{in} = B_{out} = B_L$. Here we refer to $B_L$ as the `left' part of $B_L \sqcup B_R$ while $B_R$ is the `right' part.

For any $B$, we may then endow the surfaces $Y^d_{B \sqcup B}$ with a multiplication operation  which promotes the space  of formal linear combinations $\U{Y}^d_{B \sqcup B}$ to an algebra $A_L^{B}$.  In fact, we will also introduce an analogous `right algebra' $A_R^{B}$ in section \ref{subsec:sa} below.  Choosing $B=B_L$ or $B=B_R$ then allows us to define four algebras $A_L^{B_L}, A_L^{B_R}, A_R^{B_L}, A_R^{B_R}$.  However, as we will see, only the two algebras $A_L^{B_L}, A_R^{B_R}$ have natural actions on $H_{B_L \sqcup B_R}$.  For these algebras, we obtain representations on $H_{B_L \sqcup B_R}$. Our path integral also defines useful notions of trace on each of these algebras.

We will then show in section \ref{subsec:Rep} that the associated representations of the algebras $A_L^{B_L}, A_R^{B_R}$ can be extended from the pre-Hilbert space $H_{B_L \sqcup B_R}$ to its Hilbert space completion ${\cal H}_{B_L \sqcup B_R}$.  In particular, because our axioms turn out to enforce a trace inequality of the form recently discussed in \cite{Colafranceschi:2023txs}, all operators in these representations must be bounded.  As a result, we may use the associated representations to construct von Neumann algebras.

We then specialize to the case $B_L= B_R = B$, in which case we denote the resulting von Neumann algebras by ${\cal A}^B_L$ and ${\cal A}^B_R$.  Some key properties of these algebras are then studied in section \ref{subsec:prop}.  In particular, we show there that the above trace operations can be extended to both ${\cal A}^B_L$ and ${\cal A}^B_R$.

\subsection{Surface algebras}
\label{subsec:sa}

Consider for the moment a given compact closed boundary $B$ (with $\partial B = \emptyset$), which might represent either $B_L$ or $B_R$ above. For each such $B$ we will define two algebras,  $A^B_{L}$  and $A^B_{R}$.  If we thus allow a choice of $B= B_L$ or $B= B_R$, we could in fact define {\it four} such algebras, though only the two choices
 $A^{B_L}_{L}$  and $A^{B_R}_{R}$ will play an important role in our construction below.

Let us now consider general such $B$. To understand the difference between $A^B_{L}$  and $A^B_{R}$, recall that points on ${B \sqcup B}$ are labelled, which in particular means that the two copies of $B$ are distinguished. We will refer to the first copy as the `left boundary' and the second copy as the `right boundary.'

On the set $Y^d_{B \sqcup B}$ we may define the \textit{left product} $(\cdot_L)$ as the operation that takes as input an ordered pair of rimmed surfaces $a$ and $b$, and which constructs the surface $a \cdot_{L} b$ that results from gluing the \textit{left} boundary of $b$ to the right boundary of $a$ (see figure \ref{fig:LRprod}).

\begin{figure}[t]
        \centering
\includegraphics[width=\linewidth]{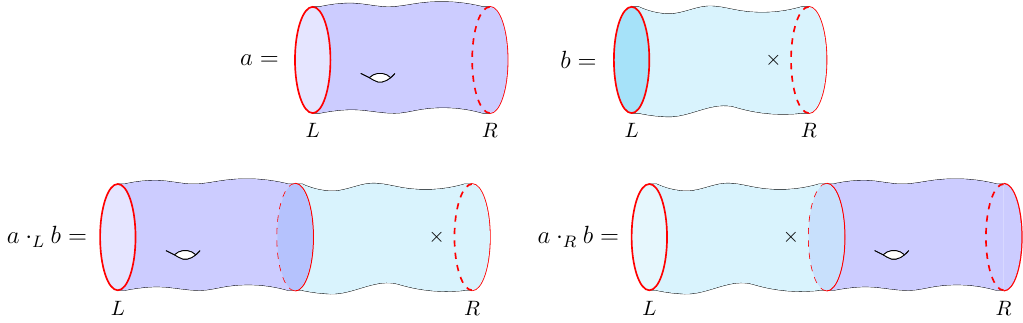}\caption{For two elements $a,b\in Y^d_{B \sqcup B}$ as shown in the top row, we define the left and right products $a \cdot_{L} b$ and $a \cdot_{R} b$ by the gluing procedures shown in the bottom row.}
\label{fig:LRprod}
\end{figure}
For simplicity, we will adopt the notation
\begin{equation}
ab \coloneqq a \cdot_{L} b \ \ \ .
\end{equation}
We similarly define the \textit{right product} $(\cdot_R)$ as the operation that, given an ordered pair of surfaces $a$ and $b$, glues the \textit{right} boundary of $b$ to the left boundary of $a$. Note that we have
\begin{equation}\label{Lprodct}
   a \cdot_{R} b = b \cdot_{L} a = ba \ .
\end{equation}

We can also extend this product to linear combinations $a, b\in \U{Y}^d_{B \sqcup B}$ by defining it to satisfy the distributive law.
The set $\U{Y}^d_{B \sqcup B}$ equipped with the left product then forms an algebra $A^B_{L}$ which we call the \textit{left $B$-surface algebra}, or simply the left surface algebra where confusion will not arise.
Similarly, the right product on $\U{Y}^d_{B \sqcup B}$ leads to the \textit{right $B$-surface algebra} $A^B_{R}$.
Since every element of $\U{Y}^d_{B \sqcup B}$ has a finite rim at each boundary, gluing two surfaces $a,b$ together always results in a surface larger than either $a$ or $b$, so that neither of these algebras can contain an identity element.

However, the algebras $A^B_{L}$ and $A^B_{R}$ do admit a natural involution $\star$ satisfying
\begin{equation}
\label{eq:starop}
    (a \cdot_L b)^\star= b^\star \cdot_L a^\star =a^\star \cdot_R b^\star,
\end{equation}
so that $\star$ defines an anti-linear isomorphism between the left and right algebras.
To define the operation $\star$, recall that Axiom \ref{ax:RP} introduced a complex conjugation operation $*$ (which is different from the $\star$ that we are about to define) on  $N \in Y^d_{B \sqcup B}$.
In particular, $N^*$ was defined so that $M_{N^*N}$ has a reflection symmetry that complex conjugates all sources.  This means that  $N^*$ is the same manifold as $N$ (with the same labels on $\partial N$), and that $*$ acts on scalar sources by standard complex-conjugation (though the operation on vector,  tensor, and spinor sources is more complicated due to the reflection).
In addition, $Y^d_{B \sqcup B}$ admits a natural transpose operation ${}^t$ that simply swaps the labels `left' and `right' attached to the boundaries of any $N \in Y^d_{B \sqcup B}$ while  preserving all sources and leaving the labels on $\partial N$ otherwise unchanged.  The transpose and complex conjugation operations commute, and for any $a$ in either algebra we may then define
\begin{equation}
a^\star : = (a^*)^t.
\end{equation}
Due to the inclusion of the transpose operation, we then immediately find \eqref{eq:starop}.

\subsection{A trace and a trace inequality for surface algebras}
\label{subsec:tr1}

An important consequence of the labelling of points on $B$ is that,
by writing $\partial N = B \sqcup B$, we also mean that the labels on the two copies of $B$ agree up to the distinction between the left and right boundaries.  To be precise, we mean that these labels define a diffeomorphism $\phi_{LR}$ from the left boundary to the right boundary that preserves enough information about sources near each boundary to reconstruct sufficiently small rims at each $B$.  This $\phi_{LR}$ can then be used to identify the left boundary of any $a\in Y^d_{B \sqcup B}$ with its right boundary, and thus to define a closed source manifold\footnote{Though there are similarities, this is a different gluing operation than the one used to construct $M_{N_1^*N_2}$, so we use a correspondingly similar-but-different notation $M(a)$; see e.g.\ \eqref{eq:mrelation}.}. $M(a) \in X^d$ (i.e., without boundary) from any $a\in  Y^d_{B \sqcup B}$.  We can also extend this operation to linear combinations ${a}\in \U{Y}^d_{B \sqcup B}$ by linearity, so that we then find $M(a) \in \U{X}^d$.

This observation allows the path integral to define a useful trace operation $\tr$ on both $A^B_{L}$ and $A^B_{R}$ for any $B$.  This trace associates to any $a\in \U Y^d_{B \sqcup B}$ the finite number
\begin{equation}
\text{tr}(a) :=  \zeta\left(M(a)\right).
\end{equation}
Note that since $M({ab}) = M({ba})$, we clearly have
\begin{equation}
\label{eq:cyclic}
\tr \, ab=\tr \, ba,
\end{equation}
and similarly for the right product; see figure \ref{fig:refpos}.  While the trace operation is defined directly for any $a\in \U Y^d_{B \sqcup B}$ (without using properties of either the left or right algebras), the result \eqref{eq:cyclic} makes it reasonable to refer to this operation as a trace on both $A^B_{L}$ and $A^B_{R}$.

\begin{figure}[t]
        \centering
\includegraphics[width=0.8\linewidth]{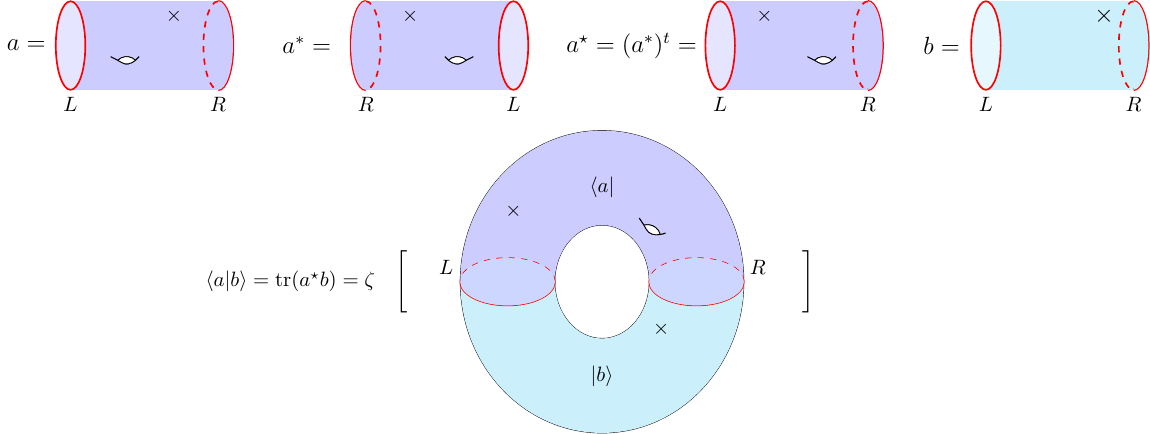}
\caption{ For $a,b \in Y^d_{B \sqcup B}$, we can construct $a^\star$ and compute $\tr(a^\star b)$ as shown in the lower panel.   Note that $\tr(a^\star b) = \tr(b a^\star)$, and that this relation is equivalent to \eqref{eq:cyclic} with  $a$ replaced by $a^\star$.}
\label{fig:refpos}
\end{figure}

Before proceeding to the next step of our analysis, it will be useful to note that, as also shown in figure \ref{fig:refpos}, for $a, b \in \U Y^d_{B \sqcup B}$ we have
\begin{equation}
\label{eq:mrelation}
M_{a^* b} = M({a^\star b}),
\end{equation}
and thus using \eqref{eq:ip} we find
\begin{equation}
\label{eq:iptr}
\langle a | b \rangle = \zeta\left(M({a^\star b})\right) = \text{tr}(a^\star b).
\end{equation}
 This relation will be used to translate certain Hilbert space statements into operator statements and vice versa.
In particular, for $a=b \in \U{Y}^d_{B \sqcup B}$ we have
\begin{equation}
\label{eq:iptr2}
\text{tr}(a^\star a) =  \zeta\left(M({a^\star a})\right) =  \langle a | a \rangle \ge 0,
\end{equation}
where we remind the reader that the inequality on the right follows from Axiom \ref{ax:RP} (reflection positivity).

The inequality \eqref{eq:iptr2} turns out to have interesting and important consequences when we apply it to Hilbert space sectors defined by boundaries with multiple connected components.  We will come back to this idea several times during our work, but we start with a simple case that involves choosing two sums-of-surfaces $a, b \in \U{Y}^d_{B \sqcup B}$. Note that such sums can be used to define an element $a \sqcup b \in \U{Y}^d_{(B \sqcup B) \sqcup (B \sqcup B)}$.  Here the parentheses in the subscript on
$\U{Y}^d_{(B \sqcup B) \sqcup (B \sqcup B)}$ are intended to indicate that points of $\partial a$ are labelled to match the first pair of boundaries $(B \sqcup B)$, while points of $\partial b$ are labelled to match the second pair of boundaries $(B \sqcup B)$.   We might also write such an $a \sqcup b$ using the more explicit notation $a_{L_1 R_1} \sqcup b_{L_2 R_2} \in \U{Y}^d_{B_{L_1},B_{R_1},B_{L_2},B_{R_2}}$, which indicates that the left and right boundaries of $a$ are associated with the first two copies of $B$ (in the specified order) while the boundaries of $b$ are associated with the second two copies of $B$. In writing $\U{Y}^d_{B_{L_1},B_{R_1},B_{L_2},B_{R_1}}$ we have replaced the usual disjoint union symbols $\sqcup$ by commas for notational simplicity.  At this point we also generalize the setting to allow  $a \in \U{Y}^d_{B_{L_1} \sqcup B_{R_1}}$, $b \in \U{Y}^d_{B_{L_2} \sqcup B_{R_2}}$ for arbitrary $B_{R_1}, B_{R_2}$, though we retain the constraint $B_{L_1} = B_{L_2}$.

Note in particular that there are distinct states $|a_{L_1 R_1}, b_{L_2 R_2} \rangle$ and $|a_{L_2 R_1}, b_{L_1 R_2}\rangle$ in the pre-Hilbert space $H_{B_{L_1},B_{R_1},B_{L_2},B_{R_1}}$, where the notation is analogous to that used for
$\U{Y}^d_{B_{L_1},B_{R_1},B_{L_2},B_{R_2}}$ above; see the lower panel of figure \ref{fig:4bndy}.  These states are related by the action of the
`swap' operator ${\cal S}_{L_1, L_2}$ that exchanges the labels $L_1,L_2$ on the relevant two copies of $B$.
Since the (pre-)inner product on $H_{B_{L_1},B_{R_1},B_{L_2},B_{R_2}}$ is defined by identifying corresponding boundaries in the bra- and ket-surfaces, as shown in figure \ref{fig:4bndy}
the norms of these states are computed by path integrals defined by the disconnected source manifold $M(a^\star a) \sqcup M(b^\star b)$. We thus find
\begin{equation}
\label{eq:4bndynorms}
\langle a_{L_1 R_1}, b_{L_2 R_2} | a_{L_1 R_1}, b_{L_2 R_2} \rangle =
\langle a_{L_2 R_1}, b_{L_1 R_2} | a_{L_2 R_1}, b_{L_1 R_2} \rangle =
\langle a | a \rangle \langle b | b \rangle = \tr(a^\star a) \tr(b^\star b).
\end{equation}
In contrast, the (pre-)inner product between $|a_{L_1 R_1}, b_{L_2 R_2} \rangle$ and $|a_{L_2 R_1}, b_{L_1 R_2}\rangle$  is computed by a path integral having the connected source manifold $M(a^\star b b^\star a) = M(a a^\star b b^\star)$.  Thus we have
\begin{eqnarray}
\label{eq:4bndyip}
\langle a_{L_1 R_1}, b_{L_2 R_2} | a_{L_2 R_1}, b_{L_1 R_2} \rangle &=& \langle b^\star a  | b^\star a \rangle = \tr ( a^\star b b^\star a) = \tr (a a^\star b b^\star ) \cr &=&
\langle a_{L_2 R_1}, b_{L_1 R_2} | a_{L_1 R_1}, b_{L_2 R_2} \rangle.
\end{eqnarray}
Here the first equality shows that \eqref{eq:4bndyip} is real and non-negative (since the second form shown is a norm squared), which then implies the final equality (since the final form is manifestly the complex conjugate of the left-hand side).

\begin{figure}[ht!]
        \centering
\includegraphics[width=\linewidth]{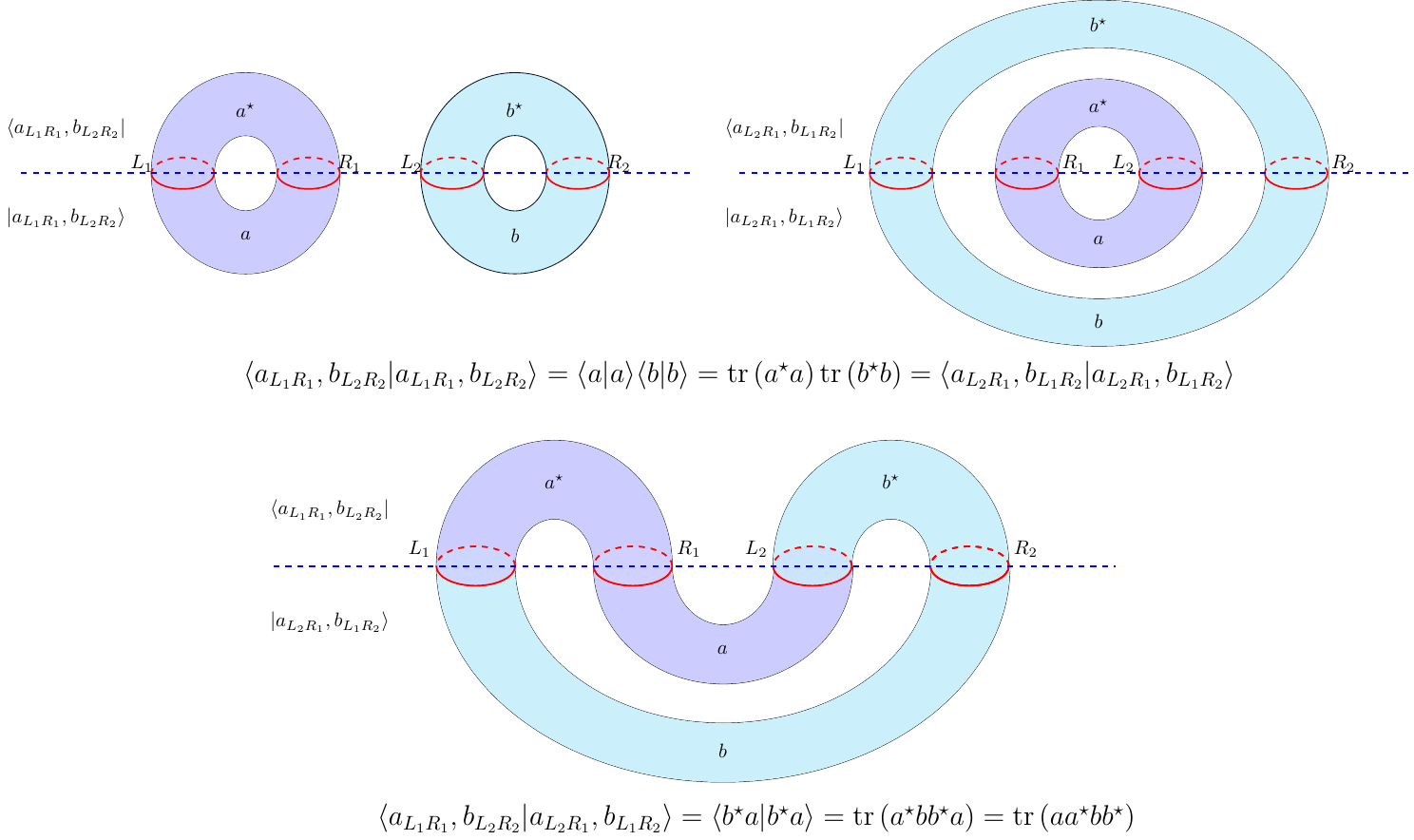}\caption{
{\bf Upper panel:}  Path integrals that compute the norm squared $\langle a_{L_1 R_1}, b_{L_2 R_2} | a_{L_1 R_1}, b_{L_2 R_2} \rangle$ (left) and $\langle a_{L_2 R_1}, b_{L_1 R_2} | a_{L_2 R_1}, b_{L_1 R_2} \rangle$ (right).  As shown, the two norms agree.  {\bf Lower panel:}  The path integral that computes
$\langle a_{L_1 R_1}, b_{L_2 R_2} | a_{L_2 R_1}, b_{L_1 R_2}\rangle$.  Tracing the diagram verifies that this inner product is equal to both $\tr(a^\star b b^\star a) = \tr (a a^\star b b^\star)$ and $\langle  b^\star a|b^\star a\rangle$.  Since the latter is manifestly real (and non-negative), this diagram also computes $\langle a_{L_1 R_1}, b_{L_2 R_2} | a_{L_2 R_1}, b_{L_1 R_2}\rangle^* = \langle   a_{L_2 R_1}, b_{L_1 R_2}| a_{L_1 R_1}, b_{L_2 R_2} \rangle$.}
\label{fig:4bndy}
\end{figure}

On the other hand, the Cauchy-Schwarz inequality requires
\begin{equation}
\label{eq:Cauchy1234}
\Big|\langle a_{L_1 R_1}, b_{L_2 R_2} | a_{L_2 R_1}, b_{L_1 R_2} \rangle\Big| \le
\Big| | a_{L_1 R_1}, b_{L_2 R_2} \rangle\Big| \,\,
\Big| | a_{L_2 R_1}, b_{L_1 R_2} \rangle \Big|,
\end{equation}
where $\big|\ket\psi\big|:= \sqrt{\braket{\psi}{\psi}}$ is the norm.
Combining \eqref{eq:4bndynorms} and \eqref{eq:4bndyip} with \eqref{eq:Cauchy1234} immediately yields
\begin{equation}
\label{eq:TrIn1}
\tr (a^\star b b^\star a) = \tr (a a^\star b b^\star) \le \tr(a^\star a) \tr(b^\star b),
\end{equation}
which is the trace inequality recently discussed in \cite{Colafranceschi:2023txs}.  Here we see that it holds for our surface algebras as a consequence of Axioms \ref{ax:finite}-\ref{ax:factorize}.    The inequality \eqref{eq:Cauchy1234} will play a key role in our discussion below.  We will also return to higher analogues of this inequality in section \ref{subsec:normoftrace}.

\subsection{Representation of the surface algebras on  \texorpdfstring{$\mathcal{H}_{B_L \sqcup B_R}$}{}}
\label{subsec:Rep}

So far we have defined the surface algebras as abstract vector spaces equipped with multiplication; now we define how they act on states. In particular, the operations on surfaces described above can now be used to define a representation $\hat{A}^{B_L \sqcup B_R}_{L}$ of the surface algebra $A^{B_L}_{L}$  that acts on the Hilbert space $\mathcal{H}_{B_L \sqcup B_R}$. In the notation of sections \ref{subsec:sa}-\ref{subsec:tr1} we now specialize to the case $B= B_L$.  Though the notation $\hat{A}^{B_L \sqcup B_R}_{L}$ is somewhat awkward, it emphasizes the important point that this representation depends {\it both} on the choice of $B_L$ and on the choice of $B_R$, even though $A^{B_L}_{L}$ was defined by $B_L$ alone. This point will be discussed further in section \ref{subsec:diag}.

The first steps of our construction are to consider $a \in \U{Y}^d_{B_L \sqcup B_L}$ and to define an associated operator $\hat{a}_L$ that acts on the pre-Hilbert space $H_{B_L \sqcup B_R}$ as
\begin{equation}
    \hat{a}_L \ket{b} =\ket{ab}, \quad \forall \ket{b} \in H_{B_L \sqcup B_R};
\end{equation}
see figure \ref{LRact}. Here we have used the condensed notation $ab \coloneqq a \cdot_L b$ defined in section \ref{subsec:sa} above. When $a$ is a simple surface $a\in Y^d_{B_L \sqcup B_L}$, the associated $\hat{a}_L$ acts on $|b\rangle \in {H}_{B_L \sqcup B_R}$ by just gluing the surface $a$ to the left boundary of $b$. Note that this action is a representation of the left surface algebra $A^{B_L}_{L}$ (on $H_{B_L \sqcup B_R}$ at the moment), because
\begin{equation}
\label{eq:rep}
     \widehat{(ab)}_L \ket{c} =\ket{a b c} = \hat{a}_L (\hat{b}_L\ket{c}),
\end{equation}
where $\widehat{(ab)}_L$ is the operator associated to $(ab)$.

\begin{figure}[ht!]
        \centering
\includegraphics[width=0.7\linewidth]{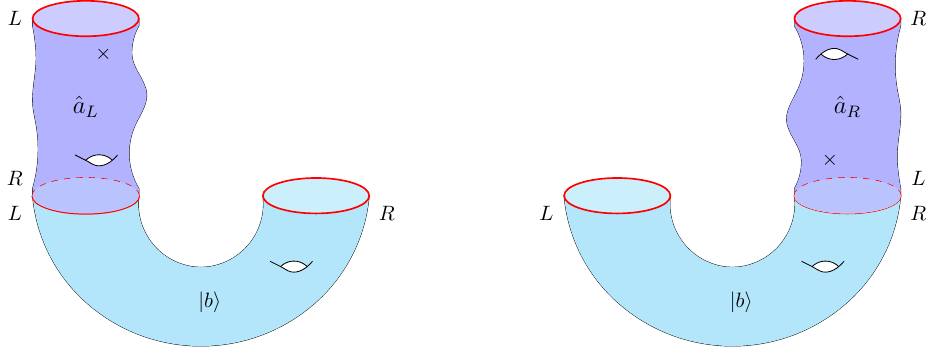}\caption{For surfaces $a,b$, the left panel shows the action of $\hat a_L$ (in the representation of the left algebra $A^{B_L}_{L}$) on $|b \rangle \in H_{B_L \sqcup B_R}$, while the right panel shows the associated action of $\hat a_R$ (in the representation of the right algebra $A^{B_R}_{R}$).  The figure in fact depicts the special case $B_L=B_R$ in which a given element $a\in Y_{B_L \sqcup B_L} =  Y_{B_R \sqcup B_R}$ defines members of both algebras, but this is not generic.}
\label{LRact}
\end{figure}

The next step in showing that our representation acts on the Hilbert space ${\cal H}_{B_L \sqcup B_R}$ is to establish that $\hat{a}_L$ preserves any null space ${\cal N}_{B_L \sqcup B_R}$ of pre-Hilbert space states with vanishing norm, so that $\hat{a}_L$ yields a well-defined operator on $H_{B_L \sqcup B_R}/{\cal N}_{B_L \sqcup B_R}$.  We will also need to extend the definition of $\hat{a}_L$ to the full Hilbert space ${\cal H}_{B_L \sqcup B_R}$ in a manner consistent with \eqref{eq:rep}.  Both of these steps are straightforward due to the trace inequality.  To see this, recall that
 for $a \in \U{Y}^d_{B_L \sqcup B_L}$, $b \in \U{Y}^d_{B_L \sqcup B_R}$ we have $\hat{a}_L|b\rangle = |ab\rangle$, and thus
\begin{equation}
\label{eq:bound}
|\hat{a}_L|b\rangle|^2 = \langle ab|ab\rangle = \tr(b^\star a^\star ab) = \tr(a^\star a bb^\star) \le \tr(a^\star a) \tr(bb^\star ) = \tr(a^\star a) \langle b|b\rangle.
\end{equation}
In the second step of \eqref{eq:bound} we have used \eqref{eq:iptr} with $a$ and $b$ both replaced by $ab$.  The third step used cyclicity of the trace \eqref{eq:cyclic}, and the fourth and fifth steps then follow directly from \eqref{eq:TrIn1} and another use of \eqref{eq:iptr}.  The result is that $\hat a_L$ is bounded by $\sqrt{\tr \, {a^\star a}}$ on $H_{B_L \sqcup B_R}$.

In particular, if $|b\rangle \in {\cal N}_{B_L \sqcup B_R}$ then $\langle b | b \rangle =0$.  The result \eqref{eq:bound} then clearly requires $\hat{a}_L|b\rangle$ to have zero norm as well.  Thus $\hat{a}_L$ preserves ${\cal N}_{B_L \sqcup B_R}$ and induces an operator on the quotient ${H}_{B_L \sqcup B_R}/{\cal N}_{B_L \sqcup B_R}$.  Applying \eqref{eq:bound} on this quotient tells us that the operator $\hat{a}_L$ is bounded by $\sqrt{\tr \, {a^\star a}}$ on the dense subspace ${H}_{B_L \sqcup B_R}/{\cal N}_{B_L \sqcup B_R}$ of ${\cal H}_{B_L \sqcup B_R}$.   It thus admits a unique continuous extension to the entire space ${\cal H}_{B_L \sqcup B_R}$, which is again bounded by $\sqrt{\tr \, {a^\star a}}$; see e.g.\ \cite{RS}.  We will continue to use the symbol
$\hat{a}_L$  for this extension; thus we write the bound on its operator norm as
\begin{equation}
\label{eq:anormbound}
\| \hat{a}_L \| \le \sqrt{\tr \, {a^\star a}}.
\end{equation}
Continuity implies that such extensions also satisfy \eqref{eq:rep}, which makes clear that we have constructed a representation of $A^{B_L}_{L}$ on $\mathcal{H}_{B_L \sqcup B_R}$ as desired.    We call this representation  $\hat{A}^{B_L \sqcup B_R}_{L}$ or, in what we hope is an obvious shorthand, we say that we have constructed a representation  $\hat A_L :=\hat{A}^{B_L \sqcup B_R}_{L}$.

Since the operators in $\hat A_L$ are bounded, it is easy to discuss their adjoints (which must exist and are also bounded).  Any such operator is defined by some $a\in \U{Y}^d_{B_L \sqcup B_L}$, and for $b,c \in \U{Y}^d_{B_L \sqcup B_R}$ we must have
\begin{equation}
\label{eq:aadj}
\langle b | \hat{a}_L^\dagger| c \rangle = (\langle c | \hat{a}_L | b \rangle)^* = [\tr\left(c^\star  a b \right)]^* = \tr \left(b^\star  a^\star   c \right) = \langle b| \widehat{a^\star }_L | c\rangle.
\end{equation}
Here the 3rd step follows from \eqref{eq:iptr} and the relation $(\langle c|d\rangle)^* = \langle d|c\rangle$ with $d=ab$.  Since \eqref{eq:aadj} holds on a dense set of states, and since both $\widehat{a^\star}_L$ and $\hat a^\dagger_L$ are bounded, we must in fact have $\widehat{a^\star}_L = a^\dagger_L$ on all of ${\cal H}_{B_L \sqcup B_R}$.

It is worth noting that the representation $\hat A_L$ of $A^{B_L}_{L}$ is not necessarily faithful. This is precisely characterized by a null space $\mathfrak{N}_L$ consisting of all $a \in A^{B_L}_{L}$ whose associated $\hat a_L$ is the zero operator. Thus, when we construct $\hat A_L$ from $A^{B_L}_{L}$, we have effectively taken a quotient $A^{B_L}_{L} /\mathfrak{N}_L$, in the sense that $\hat A_L$ is isomorphic to (and a faithful representation of) $A^{B_L}_{L} / \mathfrak{N}_L$.

The right surface algebra $A^{B_R}_{R}$ admits a similar representation $\hat A_R := \hat{A}^{B_L \sqcup B_R}_{R}$ on $\mathcal{H}_{B_L \sqcup B_R}$ defined by gluing surfaces to the \textit{right} boundary of $H_{B_L \sqcup B_R}$.  For $a \in \U{Y}^d_{B_R \sqcup B_R}$  and $ b \in \U{Y}^d_{B_L \sqcup B_R}$ we have
\begin{equation}
    \hat{a}_R \ket{b}=\ket{a \cdot_R b} =\ket{b \cdot_L a} =\ket{ba}, \quad \forall \ket{b} \in H_{B_L \sqcup B_R}.
\end{equation}
As required for a representation of $A_{R}$, this action satisfies
\begin{equation}
    {\widehat{(b \cdot_R  c)}}_{R} \ket{a}  =\ket{(b \cdot_R c)\cdot_R  a}= \ket{b \cdot_R (c \cdot_R a)} = \hat{b}_R (\hat{c}_R\ket{a}),
\end{equation}
where ${\widehat{(b \cdot_R c)}}_{R}$ is the operator associated to $(b \cdot_R  c) \in A^{B_R}_{R}$.
The extension to the full Hilbert space $\mathcal{H}_{B_L \sqcup B_R}$ then proceeds precisely as above.  The discussion of adjoints is analogous to the left case and we again find
$\widehat{a^\star}_R = a^\dagger_R$.

Perhaps the most interesting point to mention concerning $\hat A_R$ is that its operators commute with operators in the left-representation $\hat A_L$.  In particular, for $a \in A^{B_R}_R$, $b \in A^{B_L}_L$, and $c$ in the pre-Hilbert space $H_{B_L \sqcup B_R}$ we clearly have
\begin{equation}
\label{eq:LRcomm}
\hat a_R \hat b_L |c\rangle = |bca\rangle = \hat b_L \hat a_R |c\rangle.
\end{equation}
Furthermore, the operators $\hat a_R$, $\hat b_L$
are bounded (and thus continuous) on ${\cal H}_{B_L \sqcup B_R}$ and the above states $|c\rangle$ are dense in ${\cal H}_{B_L \sqcup B_R}$.  We may thus take limits to conclude that
 \eqref{eq:LRcomm} in fact holds for all $|c\rangle \in {\cal H}_{B_L \sqcup B_R}$.

\subsection{Diagonal sectors are special}
\label{subsec:diag}

We now restrict to the ``diagonal" special  case $B_L=B_R$, in which context we introduce the shorthand notation ${\cal H}_{LR} = {\cal H}_{B_L \sqcup B_R}$ and similarly for $\U{Y}^d_{LR}$, the pre-Hilbert space $H_{LR}$, and so forth.
It will also be useful to note that our trace operation $\tr$ on $A^{B_L}_{L}$ defines a trace on the representation $\hat{A}_{L}$.  We of course wish to declare that
\begin{equation}
\label{eq:trrespectsquot}
\tr \, \hat{a}_L : = \tr \, a.
\end{equation}
The important property of the definition \eqref{eq:trrespectsquot} (which we show below) is that it is well-defined on  $\hat{A}_{L}$ in the sense that it satisfies $\tr \, a = \tr \, b$ whenever $\hat{a}_L = \hat{b}_L$ on ${\cal H}_{LR}$.  This is equivalent to saying that $\tr \, a =0$ when $\hat{a}_L =0$ on ${\cal H}_{LR}$. Note that this property is non-trivial because the representation $\hat A_L$ is not necessarily faithful; i.e., when $\hat{a}_L=0$, the element $a$ could be non-zero, but nonetheless we claim that $\tr\, a =0$. Note also that this property relies on the action on a particular Hilbert space, so the condition that an analogous trace be well-defined on the representation of the left-algebra on some other ${\cal H}_{B_L \sqcup B_R}$ is generally quite distinct.    In particular, we make no claim that the trace is well-defined on representations defined by ${\cal H}_{B_L \sqcup B_R}$  with $B_L \neq B_R$.

For a diagonal Hilbert space  ${\cal H}_{LR}$ with $B_L=B_R$, the desired property can be established using the continuity axiom \ref{ax:continuity}.  In particular, for any $a \in \U{Y}^d_{LR}$, let $C_\beta \in \U{Y}^d_{LR}$ be the cylinder of length $\beta$ defined by $B_L=B_R$ as in Axiom \ref{ax:continuity}, and consider $aC_{2\beta} \in \U{Y}^d_{LR}$.  Since $C_{2\beta} = C_\beta C_\beta$, and since we have restricted to $B_L=B_R$ for which cylinders are self-adjoint ($C_\beta^\star = C_\beta$), we have
\begin{equation}
\label{eq:alttrace}
\tr \left(a C_{2\beta} \right)  =  \tr \left(C_\beta a C_\beta \right) = \langle C_\beta | \hat{a}_L | C_\beta \rangle.
\end{equation}
Clearly the right-hand side vanishes for all $\beta$ if $\hat{a}_L =0$.  However, Axiom \ref{ax:continuity} requires the $\beta \rightarrow 0$ limit of \eqref{eq:alttrace} to give $\tr (a)$:
\begin{equation}
\label{eq:alttrace2}
\tr(a) = \lim_{\beta \downarrow 0}  \langle C_\beta | \hat{a}_L | C_\beta\rangle,
\end{equation}
where the notation $\beta \downarrow 0$ emphasizes that $C_\beta$ is defined only for $\beta >0$ so that the limit is necessarily taken from above.
Thus, as desired, we find $\tr(a)=0$ when $\hat{a}_L =0$.  The trace is of course also defined on $\hat A_R$ where its properties are analogous.

We can also easily establish the converse of the above property for manifestly positive\footnote{Throughout the paper, we call an operator positive if it is positive semi-definite.} operators $\hat a_L$ of the form $\hat a_L=\hat \gamma_L^\dagger \hat \gamma_L$ with  $\hat\gamma_L \in \hat A_L$; i.e., if $\tr(a)=0$ then $\hat a_L=0$.  This follows immediately from the operator norm bound \eqref{eq:anormbound}: if $0=\tr(a)=\tr(\hat \gamma_L^\dagger \hat \gamma_L) =\tr \big({\widehat{(\gamma^\star \gamma)}}_L\big) =\tr(\gamma^\star \gamma)$, then the operator norm of $\hat \gamma_L$ is bounded by $\sqrt{\tr(\gamma^\star \gamma)} =0$ so that $\hat \gamma_L=0$ and thus $\hat a_L=0$.  In a slight abuse of terminology, we will refer to this property by saying that our trace is {\it faithful} on the representation $\hat A_L$.  The usual definition of the term faithful would require this property to hold for all positive operators (and not just those whose positivity is manifest).  This stronger notion of faithfulness will also turn out to hold, though we defer its discussion to section \ref{subsec:exTrIn}.

The argument above used the fact that ${\cal H}_{B_L \sqcup B_R}$ for $B_L=B_R$ contains cylinder states $|C_\beta\rangle$ of the form that enter into the continuity axiom.  We have no analogous construction for $B_L \neq B_R$, so our trace may not be well-defined on other representations.  At the level of the full von Neumann algebra defined below, it seems natural for the associated left algebras defined by $B_L \neq B_R$ to be projections of the left algebra defined by the diagonal case $B_L= B_R$ (and similarly for the right algebras).  However, we reserve this analysis for future work \cite{Marolf:2024adj}.

\subsection{The von Neumann algebras \texorpdfstring{$\mathcal{A}_{L}^B$}{} and \texorpdfstring{$\mathcal{A}_{R}^B$}{}}
\label{subsec:prop}

We are now ready to define a von Neumann algebra $\mathcal{A}^B_{L}$  using the representation $\hat A_L : = \hat{A}^{B\sqcup B}_{L}$ on a diagonal sector of the Hilbert space with $B_L=B_R=B$ (and to similarly construct a related von Neumann algebra $\mathcal{A}^B_{R}$ from the right representation $\hat A_R : =\hat{A}^{B\sqcup B}_{R}$). Although it is straightforward to define analogous von Neumann algebras using non-diagonal representations $\hat{A}^{B_L \sqcup B_R}_{L}$ with $B_L \neq B_R$, we leave the investigation of such algebras for future work; see again the comment at the end of the previous subsection.

We define the von Neumann algebra $\mathcal{A}^B_{L}$ to be the closure of $\hat{A}_{L}$ within ${\cal B}({\cal H}_{LR})$ in the weak operator topology or, in what is known to be equivalent in the present context, the strong operator topology\footnote{\label{foot:tops} The weak operator topology means that for a net of operators $T_\alpha$, $T_\alpha \rightarrow T$ if and only if $\left\langle x\left|T_\alpha\right| y\right\rangle \rightarrow\langle x|T| y\rangle$ for every $|x\rangle ,|y\rangle \in \mathcal{H}$.  In contrast,
the strong operator topology means that $T_\alpha \rightarrow T$ if and only if $T_\alpha | y \rangle \rightarrow T| y\rangle$ for every $|y\rangle \in \mathcal{H}$.
The two associated notions of closure agree for convex sets of bounded operators; see e.g.\ theorem 5.1.2 of \cite{KR1}. These notions of closure are each equivalent to including limits of all nets of operators.  However, it appears that these closures are not equivalent to merely including limits of all sequences.  In particular,
exercise 1 from section II.2 of \cite{Takesaki} states that the weak and strong operator topologies are not metrizable for operators on infinite-dimensional Hilbert spaces. This suggests that these topologies are also non-sequential (meaning that sequences do not suffice to characterize them) and, indeed the structure of the exercise suggests that this non-sequential nature would be made explicit from a complete solution.  Although this point is not critical for our work here, we would be very happy to receive references that address the issue more explicitly.}. Here ${\cal B}({\cal H}_{LR})$ denotes the algebra of bounded operators on our Hilbert space. Note that the identity operator $\mathbb{1}$ lies in the closure due to Corollary \ref{cor:Cto1} of appendix \ref{app:lemmas}. Due to the von Neumann bicommutant theorem (see e.g.\ section $0.4$ of \cite{Sunder}), we can also equivalently define $\mathcal{A}^B_{L}$ as the double commutant of $\hat{A}_{L}$ within ${\cal B}({\cal H}_{LR})$.  This in particular means that each operator in the resulting von Neumann algebra $\mathcal{A}^B_{L}$ is again bounded. Of course, corresponding statements hold for the right algebras as usual.

For every operator $a$ in a von Neumann algebra, the adjoint $a^\dagger$ also lies in the von Neumann algebra.  So the adjoint operation continues to act as an involution on
$\mathcal{A}^B_{L}$. Note that we previously used the symbols  $a,b,c,\dots$ to denote elements of $\U{Y}^d_{B \sqcup B}$, but that we henceforth also use them to denote generic operators in $\mathcal{A}^B_{L/R}$. We will continue to use $\hat a_L, \hat b_L, \hat c_L, \dots$ to denote operators in $\hat A_L$.

We also introduced a trace operation $\tr$ on the operators in $\hat{A}_{L}$  in \eqref{eq:trrespectsquot}.  In particular, we showed $\tr$ to be well-defined and finite on $\hat{A}_{L}$. We now wish to extend this trace to the $\mathcal{A}^B_{L}$.  In the theory of von Neumann algebras one generally allows traces of some operators to diverge.  Nevertheless, even in this sense, a trace is usually well-defined only on positive elements of the von Neumann algebra, where it takes values in the closed interval $[0, +\infty]$; i.e., allowing $+\infty$.  The restriction to positive elements is closely related to the familiar fact that,
 when an infinite-dimensional square matrix $A^i_j$ is not positive, the infinite sum of the form $\sum_i A^i_i$  can be oscillatory and need not converge in any sense.  In contrast, for positive infinite-dimensional matrices $A^i_j$,
the fact that each diagonal element $A^i_i$ is non-negative means that if $\sum_i A^i_i$ fails to converge to a finite number, then we may say that it `converges' to $+\infty$.  (Of course, the quantity $\sum_i A^i_i$ is manifestly well-defined
for any finite-dimensional square matrix $A^i_j$.)

We will thus attempt to extend our notion of $\tr$ only to positive elements $a \in \mathcal{A}^B_{L}$, which in this context means that $a$ is a positive operator on ${\cal H}_{LR}$.  However, we note for future reference that this condition is equivalent to requiring that $a$ be of the form $\gamma^\dagger \gamma$ for some $\gamma \in \mathcal{A}^B_{L}$ (where we can in fact take $\gamma$ to be the positive square root of $a$, as this operator must also lie in $\mathcal{A}^B_{L}$).

To define a useful extension of our trace, we need to find a function mapping the positive elements $a \in \mathcal{A}^B_{L}$ to $[0, +\infty]$ that agrees with our previous definition of $\tr$ on $\hat{A}_{L}$ and which satisfies other properties to be discussed below. It will thus be productive to consider alternative representations of the operation $\tr$ on $\hat{A}_{L}$.  We begin by  returning to the relation \eqref{eq:alttrace2}, which was argued above to hold for all $a \in \U{Y}^d_{LR}$.   This will turn out to be a step toward the definition of our trace on ${\cal A}^B_L$, though we will now pause briefly to further rewrite the identity \eqref{eq:alttrace2} in order to make certain properties manifest.

It will be convenient to introduce the normalized cylinders $\tilde C_\beta \in \U{Y}^d_{LR}$ defined by
\begin{equation}
\label{eq:tC}
\tilde{C}_\beta := C_\beta/\|C_\beta\|,
\end{equation}
where $\|C_\beta\|$ denotes the operator norm of ${\widehat{C_\beta}_L}$ on ${\cal H}_{LR}$.  This norm should be more properly written $\|\widehat{C_\beta}_L\|$, but for simplicity we will use just $\|C_\beta\|$.   One may expect that the continuity axiom (Axiom \ref{ax:continuity}) requires $\|C_\beta \| \rightarrow 1$ as $\beta \rightarrow 0$ and in fact that $\|C_\beta\| = \left(\|C_1\|\right)^\beta$.  Both expectations are correct, but the proofs are somewhat technical. We thus relegate them to appendix \ref{app:lemmas}. As a further remark, note that $\tilde C_\beta$ is normalized so that $\widehat{\tilde C_\beta}_L$ has operator norm $1$, but that the state $|\tilde C_\beta\rangle$ is typically still {\it not} normalized with respect to the Hilbert space inner product.  In fact, the norm of $|\tilde C_\beta\rangle$ generally diverges as $\beta \rightarrow 0$; see \eqref{eq:traceid}.

For $a \in \U{Y}^d_{LR}$,  we may use \eqref{eq:tC} and \eqref{eq:alttrace2} to write
\begin{equation}
\label{eq:alttrace3}
\lim_{\beta \downarrow 0}  \langle \tilde C_\beta | \hat{a}_L | \tilde C_\beta\rangle = \lim_{\beta \downarrow 0}  \frac{\langle C_\beta | \hat{a}_L | C_\beta\rangle}{\|C_\beta\|^2} = \tr (\hat a_L).
\end{equation}
Here the second step uses the fact that both $\|C_\beta \|^2$ and $\langle C_\beta | \hat a_L | C_\beta \rangle$ have finite limits, and that $\|C_\beta \|^2 \rightarrow 1$.

The formulation in terms of $\tilde C_\beta$ is useful because the operator norm of $\widehat{\tilde C_\beta}_L$ is $1$ (by construction).    We show below that for positive $\hat a_L$ this requires
$\langle \tilde C_\beta | \hat{a}_L | \tilde C_\beta\rangle$ to be a decreasing function of $\beta$, which means that for positive $\hat a_L$  we can also write \eqref{eq:alttrace3} as a supremum over $\beta$:
\begin{equation}
\label{eq:alttrace4}
 \tr (\hat a_L) = \sup_{\beta >0} \langle \tilde C_\beta | \hat{a}_L | \tilde C_\beta\rangle.
\end{equation}
As we will see, this is an improvement over  \eqref{eq:alttrace2} because two supremum operations always commute (while showing that more general limits commute can be notoriously subtle).

To see that $\langle \tilde C_\beta | \hat{a}_L | \tilde C_\beta\rangle$ is a decreasing function of $\beta$, note that for $\beta' >0$ we have
\begin{equation}
\tilde C_{\beta + \beta'} = \tilde C_{\beta} \tilde C_{\beta'},
\end{equation}
where we have used the relation $\left(\|C_\beta\|\right)  \left(\|C_{\beta'}\| \right)= \|C_{\beta+\beta'} \| $ which follows from  Corollary \ref{cor:Cnorms} of appendix \ref{app:lemmas}.
Thus we may write
\begin{equation}
|\tilde C_{\beta + \beta'} \rangle = |\tilde C_{\beta} \tilde C_{\beta'} \rangle = \widehat{\tilde C_{\beta'}}_R |\tilde C_{\beta}\rangle,
\end{equation}
Let us also recall from \eqref{eq:LRcomm} that the right representation $\widehat{\tilde C_{\beta'}}_R$ of $\tilde C_{\beta'}$ commutes with any $\hat a_L$, and thus in particular with the positive $\hat a_L$ of interest here.   Furthermore, since both $\hat a_L$ and $\widehat{\tilde C_{\beta'}}_R = \widehat{\tilde C_{\beta'/2}}_R\widehat{\tilde C_{\beta'/2}}_R^\dagger$ are positive, both operators are self-adjoint.  We may then use the fact that commuting self-adjoint operators can be diagonalized to introduce a complete set of common eigenstates $|\lambda, \kappa \rangle$ where $\lambda\ge0$ is the eigenvalue of $\widehat{C_{\beta'}}_R$ and $\kappa \ge 0$ is the eigenvalue of  $\hat a_L$.  Since the operator norm of $\widehat{\tilde C_{\beta'}}_R$ is 1, the parameter $\lambda$ takes values only in the interval $[0,1]$.  We will also define a measure $d\mu(\lambda, \kappa)$ that gives a resolution of the identity $\mathbb{1} = \int d\mu(\lambda, \kappa) |\lambda, \kappa \rangle \langle \lambda, \kappa|$.

The argument is now straightforward as we may use self-adjointness of $\widehat{\tilde C_{\beta'}}_R$  to  write
\begin{eqnarray}
\langle \tilde C_{\beta + \beta'}  | \hat{a}_L | \tilde C_{\beta + \beta'} \rangle &=& \langle \tilde C_{\beta}  |\widehat{\tilde C_{\beta'}}_R \hat{a}_L \widehat{\tilde C_{\beta'}}_R| \tilde C_{\beta} \rangle \cr
&=& \int d\mu(\lambda, \kappa)   \langle \tilde C_{\beta}  |\widehat{\tilde C_{\beta'}}_R \hat{a}_L \widehat{\tilde C_{\beta'}}_R|\lambda, \kappa \rangle \langle \lambda, \kappa | \tilde C_{\beta} \rangle \cr
&=& \int d\mu(\lambda, \kappa)   \lambda^2 \kappa |\langle \tilde C_{\beta}  |\lambda, \kappa \rangle|^2 \cr &\le& \int d\mu(\lambda, \kappa)   \kappa |\langle \tilde C_{\beta}  |\lambda, \kappa \rangle|^2 = \langle \tilde C_{\beta}  | \hat{a}_L | \tilde C_{\beta} \rangle,
\end{eqnarray}
where we pass from the 3rd to the 4th line by using $\lambda^2 \le 1$.

This shows that $\langle \tilde C_{\beta} | \hat{a}_L | \tilde C_{\beta} \rangle$ increases monotonically as $\beta$ decreases, and thus that \eqref{eq:alttrace4} holds for positive elements $\hat{a}_L$ of $\hat{A}_{L}$.  We may then extend $\tr$ to any positive element in the left von Neumann algebra ${\cal A}^B_L$ via the analogous expression
\begin{equation}
\label{eq:alttrace5}
 \tr (a) := \sup_{\beta >0} \langle \tilde C_\beta | a | \tilde C_\beta\rangle, \ \ \ \text{for positive} \ a \in {\cal A}^B_L,
\end{equation}
and similarly for ${\cal A}^B_R$.  In particular, for all positive operators $a$, the quantity $\langle \tilde C_\beta | a | \tilde C_\beta\rangle$ is non-negative, so that the supremum on the right-hand side must lie in $[0, +\infty]$ as desired. It is worth noting that our argument above actually showed that $\langle \tilde C_{\beta} | a | \tilde C_{\beta} \rangle$ is a decreasing function of $\beta$ for all positive $a\in {\cal A}^B_L$, and therefore if we wish, we may replace the supremum in \eqref{eq:alttrace5} by a limit and write
\begin{equation}
\label{eq:alttrace52}
 \tr (a) = \lim_{\beta\downarrow 0} \langle \tilde C_\beta | a | \tilde C_\beta\rangle, \ \ \ \text{for positive} \ a \in {\cal A}^B_L,
\end{equation}
with the understanding that the limit could be $+\infty$.

Now, in the theory of von Neumann algebras, what we have shown thus far is sufficient to qualify the operation $\tr$ as what is called a {\it weight} on ${\cal A}^B_L$.  For $\tr$ to qualify as what is usually called a {\it trace} requires an additional property, which is that it gives identical results for both $a^\dagger a$ and $a a^\dagger$ for any $a \in {\cal A}^B_L$. This is the form of the familiar cyclic property that is relevant in the context of general von Neumann algebras.

To show this, it will be useful to find yet another characterization of our trace on ${\cal A}^B_L$.  We begin by again recalling that $\widehat{\tilde C_{2\beta'}}_L$ has operator norm $1$, so that $\mathbb{1} - \widehat{\tilde C_{2\beta'}}_L$ is positive and thus $ a^\dagger a - a^\dagger\widehat{\tilde C_{2\beta'}}_L a$ is also positive.  As a result, for any $|b\rangle \in {\cal H}_{LR}$ we have
\begin{equation}
\langle b |a^\dagger a | b \rangle  - \langle b | a^\dagger\widehat{\tilde C_{2\beta'}}_La | b \rangle \ge 0.
\end{equation}
Taking $|b\rangle  =| \tilde C_\beta \rangle$ then gives
\begin{equation}
\langle \tilde C_\beta | a^\dagger a | \tilde C_\beta \rangle \ge \langle \tilde C_\beta | a^\dagger \widehat{\tilde C_{2\beta'}}_L a | \tilde C_\beta \rangle
\end{equation}
for all $\beta, \beta'>0$.   In particular, taking supremums yields
\begin{equation}
\label{eq:sups}
\tr (a^\dagger a) := \sup_{\beta >0} \, \langle \tilde C_\beta | a^\dagger a | \tilde C_\beta \rangle \ge \sup_{\beta, \beta' > 0} \,  \langle \tilde C_\beta | a^\dagger \widehat{\tilde C_{2\beta'}}_L a | \tilde C_\beta \rangle.
\end{equation}

We can in fact show that the inequality in \eqref{eq:sups} is always saturated by using our continuity axiom and the fact that ${\cal A}^B_L$ can be characterized as the closure of $\hat{A}_{L}$ in the {\it strong} operator topology.  This will then give the desired reformulation of our trace that will allow us to prove $\tr\, (a^\dagger a) = \tr\, (a a^\dagger)$.

To establish this result, we first note that the characterization of ${\cal A}^B_L$ as a strong closure means that for any $a\in {\cal A}^B_L$, for fixed $\beta$, and for any $\epsilon >0$ there is an operator $\hat a_L \in \hat{A}_{L}$ such that $a|\tilde C_\beta \rangle - \hat a_L|\tilde C_\beta \rangle$ has magnitude less than $\epsilon$.  Using $\| \tilde C_{2\beta'} \| = 1$, a short computation then yields
\begin{equation}
\label{eq:alim}
|\langle \tilde C_\beta | a^\dagger \widehat{\tilde C_{2\beta'}}_L a | \tilde C_\beta \rangle
- \langle \tilde  C_\beta | \hat a_L^\dagger \widehat{\tilde C_{2\beta'}}_L \hat a_L | \tilde C_\beta \rangle| \le 2 \epsilon \|a\| \, \sqrt{\langle \tilde C_\beta | \tilde C_\beta \rangle} + \epsilon^2.
\end{equation}
Note that this bound also holds if the operators $\widehat{\tilde C_{2\beta'}}_L$ are replaced by $\mathbb{1}$.
Moreover, since Axiom \ref{ax:continuity} requires $\langle \tilde C_\beta | \hat a_L^\dagger \widehat{C_{2\beta'}}_L \hat a_L | \tilde C_\beta \rangle$ to be continuous in $\beta'$, the same continuity holds for $\langle \tilde C_\beta | \hat a_L^\dagger \widehat{\tilde C_{2\beta'}}_L \hat a_L | \tilde C_\beta \rangle$ (as $\|C_{2\beta'}\| = \left(\|C_1\|\right)^{2\beta'}$ is continuous). Thus for small enough $\beta'$ we have
\begin{equation}
\label{eq:blim}
| \langle \tilde C_\beta | \hat a_L^\dagger \widehat{\tilde C_{2\beta'}}_L \hat a_L | \tilde C_\beta \rangle - \langle \tilde C_\beta | \hat a_L^\dagger \hat a_L | \tilde C_\beta \rangle| \le \epsilon.
\end{equation}
Combining \eqref{eq:alim} (as written,  and also with the operator $\widehat{\tilde C_{2\beta'}}_L$ replaced by $\mathbb{1}$) with \eqref{eq:blim}  for small enough $\beta'$ then yields
\begin{eqnarray}
|\langle \tilde C_\beta | a^\dagger \widehat{\tilde C_{2\beta'}}_L a | \tilde C_\beta \rangle
- \langle \tilde  C_\beta |a^\dagger  a| \tilde C_\beta \rangle| &\le&
|\langle \tilde C_\beta | a^\dagger \widehat{\tilde C_{2\beta'}}_L a | \tilde C_\beta \rangle
- \langle \tilde  C_\beta | \hat a_L^\dagger \widehat{\tilde C_{2\beta'}}_L \hat a_L | \tilde C_\beta \rangle| \cr &+& |\langle \tilde  C_\beta | \hat a_L^\dagger \widehat{\tilde C_{2\beta'}}_L \hat a_L | \tilde C_\beta \rangle - \langle \tilde  C_\beta | \hat a_L^\dagger \hat a_L | \tilde C_\beta \rangle|\cr&+& |\langle \tilde  C_\beta | \hat a_L^\dagger  \hat a_L | \tilde C_\beta \rangle
-\langle \tilde C_\beta | a^\dagger   a | \tilde C_\beta \rangle| \cr
\cr &\le& 4 \epsilon \|a\| \, \sqrt{\langle \tilde C_\beta | \tilde C_\beta \rangle} + 2 \epsilon^2 + \epsilon,
\end{eqnarray}
which clearly vanishes as $\epsilon \rightarrow 0$. This shows that $\sup_{\beta'>0} \langle \tilde C_\beta | a^\dagger \widehat{\tilde C_{2\beta'}}_L a | \tilde C_\beta \rangle$ cannot be smaller than $\langle \tilde  C_\beta |a^\dagger  a | \tilde C_\beta \rangle$, and thus that the inequality in \eqref{eq:sups} is saturated. As a result, we have established that for all $a \in {\cal A}^B_L$ (or correspondingly ${\cal A}^B_R$) our trace may be written in the form
\begin{equation}
\label{eq:alttrace6}
\tr(a^\dagger a) = \sup_{\beta, \beta' > 0} \, \langle \tilde C_\beta | a^\dagger \widehat{\tilde C_{2\beta'}}_L a | \tilde C_\beta \rangle.
\end{equation}

To establish cyclicity, we will now show that for any $\beta, \beta'>0$, we have
\begin{equation}
\label{eq:swapbetas}
 \langle \tilde C_\beta | a^\dagger \widehat{\tilde C_{2\beta'}}_L a | \tilde C_\beta \rangle  =  \langle \tilde C_{\beta'} | a \widehat{\tilde C_{2\beta}}_L a^\dagger | \tilde C_{\beta'} \rangle, \quad \forall a \in \A^B_L.
\end{equation}
To show this, we first derive the intermediate result
\begin{equation}
\label{eq:swapbetas2}
 \langle \tilde C_\beta | \hat b_L \widehat{\tilde C_{2\beta'}}_L a | \tilde C_\beta \rangle  =  \langle \tilde C_{\beta'} | a \widehat{\tilde C_{2\beta}}_L \hat b_L | \tilde C_{\beta'} \rangle, \quad \forall a \in \A^B_L,\, \hat b_L \in \hat A_L.
\end{equation}
Our first step is to use the fact that $a$ is the strong operator topology limit of some net of operators $\{\widehat{a_\nu}_L\}$ in $\hat{A}_{L}$, where $\nu$ takes values in some directed index set $\mathscr{J}$ (see again footnote \ref{foot:tops}). This means that the net of states $\{\widehat{a_\nu}_L |\psi\rangle \}$ converges in the Hilbert space norm to $a|\psi \rangle$ for all $|\psi\rangle$, and in particular for the two choices $|\psi\rangle := | \tilde C_\beta \rangle$, $|\psi'\rangle := \widehat{\tilde C_{2\beta}}_L \hat b_L | \tilde C_{\beta'} \rangle$ (defined by the desired $\beta$, $\beta'$, and $\hat b_L$). For any $\epsilon >0$, we may then consider the balls $B_{\epsilon}$, $B_{\epsilon}'$ of radius $\epsilon$ in ${\cal H}_{LR}$ that are respectively centered on the states $a |\psi\rangle$,  $a |\psi'\rangle$. Convergence of the nets  $\{\widehat{a_\nu}_L |\psi \rangle \}$ and $\{\widehat{a_\nu}_L |\psi' \rangle \}$ to  $a |\psi\rangle$ and $a |\psi'\rangle$  means that we can always find a value of $\nu$ such that we have both $\widehat{a_\nu}_L |\psi \rangle \in B_{\epsilon}$ and $\widehat{a_\nu}_L |\psi' \rangle \in B_{\epsilon}'$.  By choosing a sequence $(\epsilon_n)$ in $\mathbb{R}^+$ with $\epsilon_n \rightarrow 0$, we can thus construct a sub{\it sequence} $(\widehat{a_n}_L)$ of the net $\{\widehat{a_\nu}_L \}$ for which we have both $\widehat{a_n}_L |\psi \rangle \rightarrow a |\psi\rangle$ and $\widehat{a_n}_L |\psi' \rangle \rightarrow a |\psi'\rangle$, or more explicitly
\begin{equation}
\label{eq:usefullimits}
\widehat{a_n}_L |\tilde C_\beta \rangle \rightarrow a |\tilde C_\beta\rangle, \quad \text{and}\quad
\widehat{a_n}_L \widehat{\tilde C_{2\beta}}_L \hat b_L | \tilde C_{\beta'} \rangle \rightarrow a \widehat{\tilde C_{2\beta}}_L \hat b_L | \tilde C_{\beta'} \rangle.
\end{equation}
This is a small extension of the standard argument that every metrizable topology is sequential.

The first limit in \eqref{eq:usefullimits} then allows us to write
\begin{eqnarray}
 \langle \tilde C_\beta | \hat b_L \widehat{\tilde C_{2\beta'}}_L a | \tilde C_\beta \rangle &=& \langle \tilde C_\beta| \hat b_L \widehat{\tilde C_{2\beta'}}_L \left( \lim_{n\rightarrow \infty} \widehat{a_n}_L | \tilde C_\beta \rangle \right) \\
\label{eq:Cmanip}
  &=& \lim_{n\rightarrow \infty} \langle \tilde C_\beta| \hat b_L \widehat{\tilde C_{2\beta'}}_L \widehat{a_n}_L | \tilde C_\beta \rangle.
\end{eqnarray}
In passing to the second line we have used the fact that bounded operators and normalizable states define continuous functions on the Hilbert space to take the limit outside the inner product. Similarly, the second limit in \eqref{eq:usefullimits} yields
\begin{eqnarray}
 \langle \tilde C_{\beta'} | a \widehat{\tilde C_{2\beta}}_L \hat b_L | \tilde C_{\beta'} \rangle &=& \langle \tilde C_{\beta'} | \left( \lim_{n\rightarrow \infty} \widehat{a_n}_L \widehat{\tilde C_{2\beta}}_L \hat b_L | \tilde C_{\beta'} \rangle \right) \\
\label{eq:Cmanip2}
  &=& \lim_{n\rightarrow \infty} \langle \tilde C_{\beta'} | \widehat{a_n}_L \widehat{\tilde C_{2\beta}}_L \hat b_L | \tilde C_{\beta'} \rangle.
\end{eqnarray}
Furthermore, \eqref{eq:Cmanip} and \eqref{eq:Cmanip2} are equal since
\begin{equation}
\langle \tilde C_\beta| \hat b_L \widehat{\tilde C_{2\beta'}}_L \widehat{a_n}_L | \tilde C_\beta \rangle
= \tr \left( \tilde C_{\beta} b \tilde C_{2\beta'} a_n \tilde C_{\beta} \right)
= \tr \left( \tilde C_{\beta'} a_n \tilde C_{2\beta} b \tilde C_{\beta'} \right)
= \langle \tilde C_{\beta'} | \widehat{a_n}_L \widehat{\tilde C_{2\beta}}_L \hat b_L | \tilde C_{\beta'} \rangle,
\end{equation}
where the middle step uses cyclicity of the trace \eqref{eq:cyclic} on $\hat A_L$. Thus we have shown the desired intermediate result \eqref{eq:swapbetas2}.

We are now ready to derive \eqref{eq:swapbetas} from \eqref{eq:swapbetas2}. In fact, we  can derive the stronger result
\begin{equation}
\label{eq:swapbetas3}
 \langle \tilde C_\beta | b \widehat{\tilde C_{2\beta'}}_L a | \tilde C_\beta \rangle  =  \langle \tilde C_{\beta'} | a \widehat{\tilde C_{2\beta}}_L b | \tilde C_{\beta'} \rangle, \quad \forall a,b \in \A^B_L,
\end{equation}
from which \eqref{eq:swapbetas} follows immediately by setting $b=a^\dag$. To obtain \eqref{eq:swapbetas3}, we use an argument similar to the one above to find a sequence $(\widehat{b_n}_L)$ in $\hat A_L$ that satisfies both of the conditions
\begin{equation}
\widehat{b_n}_L |\tilde C_{\beta'} \rangle \rightarrow b |\tilde C_{\beta'}\rangle, \quad \text{and}\quad
\widehat{b_n}_L \widehat{\tilde C_{2\beta'}}_L a | \tilde C_\beta \rangle \rightarrow b \widehat{\tilde C_{2\beta'}}_L a | \tilde C_\beta \rangle.
\end{equation}
Then \eqref{eq:swapbetas3} follows by writing
\begin{eqnarray}
\langle \tilde C_\beta | b \widehat{\tilde C_{2\beta'}}_L a | \tilde C_\beta \rangle
&=& \langle \tilde C_\beta | \lim_{n\rightarrow \infty} \left(\widehat{b_n}_L \widehat{\tilde C_{2\beta'}}_L a | \tilde C_\beta \rangle \right) \\
&=& \lim_{n\rightarrow \infty} \langle \tilde C_\beta | \widehat{b_n}_L \widehat{\tilde C_{2\beta'}}_L a | \tilde C_\beta \rangle \\
&=& \lim_{n\rightarrow \infty} \langle \tilde C_{\beta'} | a \widehat{\tilde C_{2\beta}}_L \widehat{b_n}_L | \tilde C_{\beta'} \rangle \\
&=& \langle \tilde C_{\beta'} | a \widehat{\tilde C_{2\beta}}_L \lim_{n\rightarrow \infty} \left(\widehat{b_n}_L | \tilde C_{\beta'} \rangle \right) \\
&=& \langle \tilde C_{\beta'} | a \widehat{\tilde C_{2\beta}}_L b | \tilde C_{\beta'} \rangle,
\end{eqnarray}
where the middle step follows by applying \eqref{eq:swapbetas2} to each $\widehat{b_n}_L$.

Having established \eqref{eq:swapbetas}, we take the supremum over $\beta$ and $\beta'$ on both sides of this relation and use \eqref{eq:alttrace6} to obtain the desired cyclic identity
\begin{equation}
\label{eq:cyclicvN}
\tr \, a^\dagger a = \tr \, a a^\dagger, \ \ \ \forall a \in {\cal A}^B_L.
\end{equation}

We emphasize that our trace will generally give $+\infty$ for some positive elements of ${\cal A}^B_L$. In particular, according to \eqref{eq:alttrace52} the trace of the identity operator $\mathbb{1}$  is
\begin{equation}
\label{eq:traceid}
\tr \, \mathbb{1}=\lim_{\beta \downarrow 0} \langle \tilde C_{\beta} | \tilde C_{\beta} \rangle =\lim_{\beta \downarrow 0} \|C_{\beta}\|^{-2}\, \tr (C_{2\beta}) =\lim_{\beta \downarrow 0} \, \tr (C_{2\beta}),
\end{equation}
where in the last step we used $\lim_{\beta \downarrow 0} \|C_{\beta}\| =1$ from Lemma \ref{lemma:Cnorm} of appendix \ref{app:lemmas}. The right-hand side is the trace of a cylinder of vanishing length, which certainly diverges in familiar semiclassical theories of gravity.

\section{Type I von Neumann Factors, Hilbert Space Structure, and Entropy}
\label{sec:typeI}

As indicated above, the trace operation $\tr$ will turn out to be the key to unlocking the structure of any von Neumann algebra ${\cal A}^B_L$ defined as above by a diagonal Hilbert space sector ${\cal H}_{LR} = {\cal H}_{B \sqcup B}$, as well as to unlocking the structure of ${\cal H}_{LR} = {\cal H}_{B \sqcup B}$ itself.  Our work in section \ref{sec:algebras} established that $\tr$ satisfies the following two properties on ${\cal A}^B_L$:

\begin{enumerate}
\item Linearity: $\tr(a+b)=\tr(a)+\tr(b)$, and $\tr(\lambda a)=\lambda \tr(a)$ for any  positive $a,b \in \mathcal{A}^B_L$ and $\lambda\geq 0$.
\item Cyclicity: $\tr(a a^\dagger)= \tr(a^\dagger a)$.  This in particular implies that the trace is invariant under the action of unitaries in the sense that for $b,U \in {\cal A}^B_L$ with $U^\dagger = U^{-1}$, defining $c =b U^\dagger$ gives $\tr(Ub^\dagger b U^\dagger) = \tr(c^\dagger c) = \tr(cc^\dagger) = \tr (b U^{-1} U b^\dagger)= \tr(bb^\dagger) =  \tr (b^\dagger b)$.
\end{enumerate}

We can also establish three further properties:
\begin{enumerate}
\setcounter{enumi}{2}
\item Faithfulness: for positive $a \in \mathcal{A}^B_L$, we have $\tr(a)=0$ if and only if $a=0$.
\item Semifiniteness: for any non-zero positive operator $a\in \mathcal{A}^B_L$, there exists a non-zero positive operator $b\in \mathcal{A}^B_L$ with $b \le a$ and $\tr (b)<\infty$. The notation $b \le a$ means that $a-b$ is positive.
\item Normality: for a bounded increasing net of positive operators $\rho_\alpha$ in $\mathcal{A}^B_L$ with supremum $\rho=\sup _\alpha \rho_\alpha$, we have $\tr \, \rho = \sup_{\alpha} \tr \, \rho_\alpha$. (See appendix \ref{app:trprop} for details.)
\end{enumerate}
The faithfulness property was shown to hold on $\hat A_L$ in section \ref{subsec:Rep}, but here we wish to show that it holds on the full von Neumann algebra ${\cal A}^B_L$.  We will give a similar proof in section \ref{subsec:exTrIn} after showing that the trace inequality also extends to ${\cal A}^B_L$.

The proofs of properties 4 and 5 are short, but they are somewhat technical.  To avoid distraction from the main results we thus relegate them to appendix \ref{app:trprop}.  Of course, each property above has an analogue for ${\cal A}^B_R$.

As noted in section \ref{sec:algebras}, properties 1 and 2 are the minimal requirements for the function $\tr$ to be called a trace on a von Neumann algebra.  The faithfulness property then gives a sense in which our trace is non-degenerate.  Semifiniteness guarantees that not all non-zero operators have infinite trace, and the normality condition describes a sense in which the trace is continuous.

These latter properties are important since there is no faithful normal semifinite trace on a type III von Neumann factor.  Establishing 3,  4, and 5 above thus tells us that our von Neumann algebra contains only type I and type II factors.  Furthermore, for such factors  there is a unique faithful, normal, semifinite trace up to an overall factor (about which more will be said below); see e.g.\ \cite{Takesaki}.

As noted above, our argument for faithfulness will rely on extending the trace inequality \eqref{eq:TrIn1} to the full von Neumann algebra.  It turns out that this can be accomplished by an extension of the argument of section \ref{subsec:tr1}.  This will be done in section \ref{subsec:exTrIn} below.  Section \ref{subsec:typeI} will then use this result to show that type II factors are excluded (implying  that ${\cal A}^B_L$ contains only type I factors) and to analyze the implications for the structure of ${\cal H}_{B \sqcup B}$.  The entropy defined by $\tr$ is then discussed in section \ref{subsec:SfromA}.

\subsection{The trace inequality on \texorpdfstring{${\cal A}^B_L$}{}}
\label{subsec:exTrIn}

We wish to extend the argument of section \ref{subsec:tr1} to establish the trace inequality on ${\cal A}^B_L$.  It will first be useful to establish the following regularized version of the trace inequality, whose derivation will have much in common with the argument of section \ref{subsec:tr1}.

\begin{lemma}
For any $\beta, \beta' >0$ and any $a,b \in {\cal A}^B_L$, we have
\begin{equation}
\label{eq:regTrIn}
\langle \tilde C_\beta | a^\dagger b \widehat{\tilde C_{2 \beta'}}_L b^\dagger a | \tilde C_\beta \rangle \le
\langle \tilde C_\beta |  a^\dagger a | \tilde C_\beta \rangle \langle \tilde C_{\beta'} |  b^\dagger b | \tilde C_{\beta'} \rangle.
\end{equation}
\end{lemma}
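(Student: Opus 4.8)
The plan is to establish \eqref{eq:regTrIn} first for surface elements $\hat a_L,\hat b_L\in\hat A_L$, where it collapses onto the already-proven trace inequality \eqref{eq:TrIn1}, and then to promote it to all of ${\cal A}^B_L$ by a density-and-continuity argument paralleling the one used around \eqref{eq:usefullimits}.

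For the surface case I would exploit that $\widehat{\tilde C_{2\beta'}}_L=\widehat{\tilde C_{\beta'}}_L\widehat{\tilde C_{\beta'}}_L$ with $\widehat{\tilde C_{\beta'}}_L$ self-adjoint (the cylinders are self-adjoint since $B_L=B_R$), so that the left side of \eqref{eq:regTrIn} is a manifest norm-squared,
\[
\langle\tilde C_\beta|\hat a_L^\dagger\hat b_L\widehat{\tilde C_{2\beta'}}_L\hat b_L^\dagger\hat a_L|\tilde C_\beta\rangle=\big\|\widehat{\tilde C_{\beta'}}_L\hat b_L^\dagger\hat a_L|\tilde C_\beta\rangle\big\|^2.
\]
Using $\hat a_L^\dagger=\widehat{a^\star}_L$ and $\hat a_L|\tilde C_\beta\rangle=|a\tilde C_\beta\rangle$, the vector inside is $|\tilde C_{\beta'}b^\star a\tilde C_\beta\rangle$, so by \eqref{eq:iptr2} and cyclicity \eqref{eq:cyclic} this norm-squared equals $\tr\!\big(PP^\star QQ^\star\big)$ with $P:=a\tilde C_\beta$ and $Q:=b\tilde C_{\beta'}$, both in $\U{Y}^d_{B\sqcup B}$. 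The whole point of this repackaging is that $PP^\star=a\tilde C_{2\beta}a^\star$ and $QQ^\star=b\tilde C_{2\beta'}b^\star$, so \eqref{eq:TrIn1} applies verbatim to give $\tr(PP^\star QQ^\star)\le\tr(P^\star P)\tr(Q^\star Q)$. Recognizing $\tr(P^\star P)=\langle\tilde C_\beta|\hat a_L^\dagger\hat a_L|\tilde C_\beta\rangle$ and $\tr(Q^\star Q)=\langle\tilde C_{\beta'}|\hat b_L^\dagger\hat b_L|\tilde C_{\beta'}\rangle$ then reproduces exactly the right side of \eqref{eq:regTrIn}, settling the surface case.

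To extend to arbitrary $a,b\in{\cal A}^B_L$ I would pass to strong-operator limits along a sequence in $\hat A_L$, extracted on finitely many relevant vectors as in \eqref{eq:usefullimits}. Extending in $a$ is painless: $a$ enters only through the single vector $a|\tilde C_\beta\rangle$ and its bra, so with $b$ still a surface I would take $\widehat{a_n}_L|\tilde C_\beta\rangle\to a|\tilde C_\beta\rangle$ and use that the fixed bounded operator $\hat b_L\widehat{\tilde C_{2\beta'}}_L\hat b_L^\dagger$ makes the matrix element continuous, while $\|\widehat{a_n}_L|\tilde C_\beta\rangle\|^2\to\langle\tilde C_\beta|a^\dagger a|\tilde C_\beta\rangle$.

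The hard part is the extension in $b$. Here $b$ appears both undaggered, through $\langle\tilde C_{\beta'}|b^\dagger b|\tilde C_{\beta'}\rangle=\|b|\tilde C_{\beta'}\rangle\|^2$ (which wants $\widehat{b_n}_L|\tilde C_{\beta'}\rangle\to b|\tilde C_{\beta'}\rangle$), and daggered, through $\big\|\widehat{\tilde C_{\beta'}}_L b^\dagger a|\tilde C_\beta\rangle\big\|^2$ (which wants $\widehat{b_n}_L^\dagger\, a|\tilde C_\beta\rangle\to b^\dagger a|\tilde C_\beta\rangle$). Since the adjoint is \emph{not} continuous in the strong operator topology, a single net converging strongly to $b$ need not have adjoints converging strongly to $b^\dagger$, so these two requirements are in tension. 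The resolution is to invoke the strong-$*$ density of the $*$-algebra $\hat A_L$ in its von Neumann closure ${\cal A}^B_L$ (equivalently, Kaplansky's density theorem): this lets me choose a net, and then by the same sequential-extraction trick applied simultaneously to the two vectors $a|\tilde C_\beta\rangle$ and $|\tilde C_{\beta'}\rangle$, a sequence $\widehat{b_n}_L$ realizing both convergences at once. Applying the surface-plus-general-$a$ inequality to each $\widehat{b_n}_L$ and using norm-continuity of the fixed bounded operator $\widehat{\tilde C_{\beta'}}_L$ then yields \eqref{eq:regTrIn} in the limit for all $a,b\in{\cal A}^B_L$.
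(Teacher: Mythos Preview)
Your argument is correct, but it takes a genuinely different route from the paper's own proof. You first reduce the surface case to the already-established trace inequality \eqref{eq:TrIn1} and then promote to general $a,b$ by density, invoking Kaplansky's density theorem to obtain strong-$*$ convergence of the approximants to $b$ so that both $\widehat{b_n}_L|\tilde C_{\beta'}\rangle$ and $\widehat{b_n}_L^\dagger\, a|\tilde C_\beta\rangle$ converge simultaneously. The paper instead rebuilds the 4-boundary Cauchy--Schwarz argument directly for general $a,b$: it constructs states $|\Psi_1\rangle,|\Psi_2\rangle\in{\cal H}_{B_{L_1},B_{R_1},B_{L_2},B_{R_2}}$ as Hilbert-space limits of $\hat a_{n,L_1}\hat b_{m,L_2}|\tilde C_\beta,\tilde C_{\beta'}\rangle$ and its swap, computes their norms (giving the right-hand side of \eqref{eq:regTrIn}) and their inner product (giving the left-hand side), and applies Cauchy--Schwarz. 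The key maneuver that avoids your adjoint-continuity difficulty is that, after the $n$-limit has moved $a$ into place, the inner product still contains $\hat b_{m,L}^\dagger$ acting on $a|\tilde C_\beta\rangle$; the paper then applies the swap identity \eqref{eq:swapbetas3} to rewrite this as an expression in which only the \emph{undaggered} $\hat b_{m,L}$ acts on $|\tilde C_{\beta'}\rangle$, so the $m$-limit needs nothing more than the ordinary strong convergence $\hat b_{m,L}|\tilde C_{\beta'}\rangle\to b|\tilde C_{\beta'}\rangle$. Your approach is conceptually cleaner (reduce to the known case, then pass to limits), at the cost of importing Kaplansky; the paper's approach stays entirely within the toolbox developed in the text, with \eqref{eq:swapbetas3} doing the work that strong-$*$ density does for you.
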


\begin{proof}
Let us choose nets $\{a_\nu \}, \{b_\kappa \} \subset A^B_L$ so that  the nets
$\{\hat{a}_{\nu,L}\}, \{\hat{b}_{\kappa,L}\}$ of representatives on ${\cal H}_{B\sqcup B}$ converge in the strong operator topology to the desired operators $a, b \in {\cal A}^B_L$.
For later use we note that for any $\beta, \beta' > 0$ this implies that the nets of states
$\{\hat{a}_{\nu,L} |\tilde C_\beta \rangle \}$ and $\{\hat{b}_{\kappa,L} |\tilde C_{\beta'} \rangle \}$ converge respectively to $a |\tilde C_\beta \rangle$ and $b |\tilde C_{\beta'} \rangle$,
where here the limits are taken using the standard Hilbert space topology.
As in the proof of \eqref{eq:cyclicvN}, we can then find {\it sequences} $(\hat{a}_{n,L} |\tilde C_\beta \rangle )$ and $(\hat{b}_{m,L} |\tilde C_{\beta'}\rangle)$ that also satisfy
\begin{equation}
\label{eq:abCconv}
\hat{a}_{n,L} |\tilde C_\beta \rangle \rightarrow a |\tilde C_\beta \rangle,  \ \ \ {\rm and} \ \ \
\hat{b}_{m,L} |\tilde C_{\beta'} \rangle \rightarrow b |\tilde C_{\beta'} \rangle.
\end{equation}
In the notation of section \ref{subsec:tr1}, consider again the `4-boundary' Hilbert space ${\cal H}_{B_{L_1}, B_{R_1}, B_{L_2}, B_{R_2}}$ and the associated pre-Hilbert space
$H_{B_{L_1}, B_{R_1}, B_{L_2}, B_{R_2}}$.
Note that these spaces both contain the states
$\left|(\tilde C_\beta)_{L_1,R_1}, (\tilde C_{\beta'})_{L_2,R_2} \right\rangle$ and $\left|(\tilde C_\beta)_{L_2,R_1}, (\tilde C_{\beta'})_{L_1,R_2} \right\rangle$.
Acting with the sequences $(\hat{a}_{n,L})$ and $(\hat{b}_{m,L})$, we define the states
\begin{eqnarray}
\label{eq:PSI12}
|\Psi_1(n,m) \rangle &: =& \hat a_{n,L_1} \hat b_{m,L_2} \left|(\tilde C_\beta)_{L_1,R_1}, (\tilde C_{\beta'})_{L_2,R_2} \right\rangle,\\
|\Psi_2 (n,m)\rangle &: =& \hat a_{n,L_2} \hat b_{m,L_1} \left|(\tilde C_\beta)_{L_2,R_1}, (\tilde C_{\beta'})_{L_1,R_2} \right\rangle,
\end{eqnarray}
where the operators act at the boundaries indicated by the subscripts $L_1, L_2$; these states again lie in both the Hilbert space ${\cal H}_{B_{L_1}, B_{R_1}, B_{L_2}, B_{R_2}}$ and the pre-Hilbert space $H_{B_{L_1}, B_{R_1}, B_{L_2}, B_{R_2}}$.  Note that, as in section \ref{subsec:tr1}, the two states considered here are related by the action of the
`swap' operator ${\cal S}_{L_1, L_2}$ that exchanges the labels $L_1,L_2$ on the relevant two copies of $B$; see figure \ref{fig:4bswap}.
\begin{figure}[ht!]
        \centering
\includegraphics[scale=0.8]{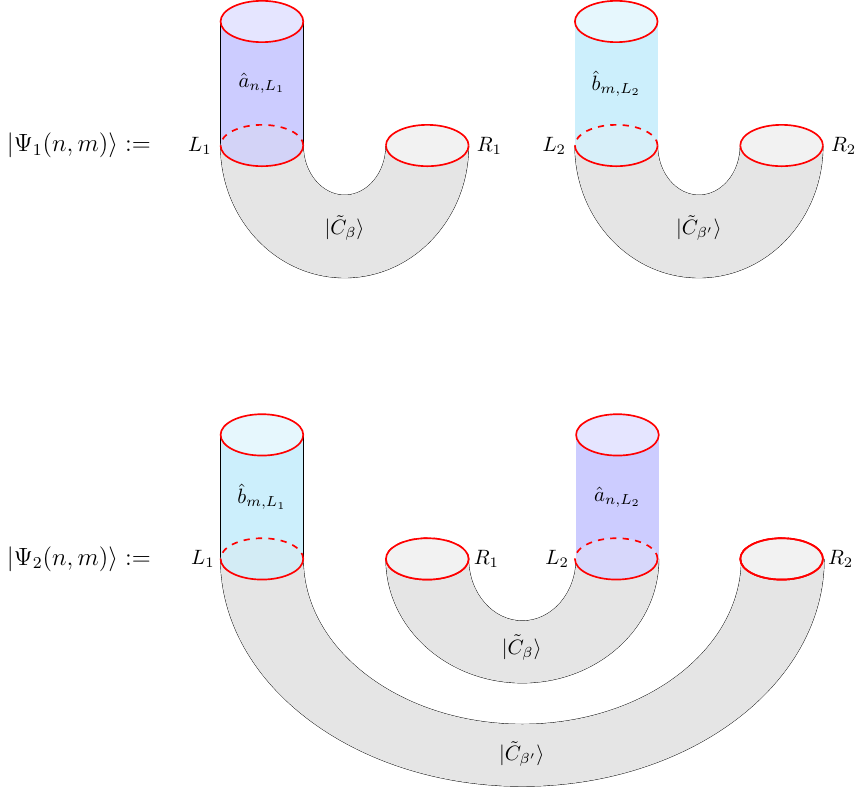}\caption{The states $|\Psi_1(n,m) \rangle$, $|\Psi_2(n,m) \rangle$ defined by \eqref{eq:PSI12}.}
\label{fig:4bswap}
\end{figure}

We will now use \eqref{eq:abCconv} to show that the associated diagonal sequences $\{|\Psi_1(n,n)\rangle \}$, $\{|\Psi_2(n,n)\rangle \}$ are both Cauchy sequences in $H_{B_{L_1}, B_{R_1}, B_{L_2}, B_{R_2}}$, so that their limits define states in ${\cal H}_{B_{L_1}, B_{R_1}, B_{L_2}, B_{R_2}}$ that we may call
\begin{equation}
\label{eq:PSI12lim}
|\Psi_1 \rangle  := \lim_{n\to \infty} |\Psi_1(n,n)\rangle,\quad
|\Psi_2 \rangle  := \lim_{n\to \infty} |\Psi_2(n,n)\rangle.
\end{equation}

To see that these sequences are Cauchy, we first compute the (pre-)inner products
\begin{eqnarray}
\langle \Psi_1(n,m) | \Psi_1(n',m') \rangle &=& \tr(\tilde C_\beta a_n^\star a_{n'} \tilde C_\beta) \tr(\tilde C_{\beta'} b_m^\star b_{m'} \tilde C_{\beta'}) \cr &=& \langle \tilde C_\beta |\hat{a}_{n,L}^\dagger \hat{a}_{n',L} | \tilde C_\beta \rangle
\langle \tilde C_{\beta'} |\hat{b}_{m,L}^\dagger \hat{b}_{m',L} | \tilde C_{\beta'} \rangle. \ \ \
\end{eqnarray}
Due to the convergence of \eqref{eq:abCconv}, for any $\epsilon > 0$ there are integers $n_0, m_0$ such that for all $n, n'>n_0$ and $m, m' > m_0$ we have
\begin{eqnarray}
|\langle \tilde C_\beta |\hat{a}_{n,L}^\dagger \hat{a}_{n',L} | \tilde C_\beta \rangle - \langle \tilde C_\beta |a^\dagger a | \tilde C_\beta \rangle| < \epsilon, \cr
|\langle \tilde C_{\beta'} |\hat{b}_{m,L}^\dagger \hat{b}_{m',L} | \tilde C_{\beta'} \rangle - \langle \tilde C_{\beta'} |b^\dagger b | \tilde C_{\beta'} \rangle| < \epsilon.
\end{eqnarray}
There is thus some $n_1$ such that for all $n,m,n',m' > n_1$ we have
\begin{equation}
\Big|\langle \Psi_1(n,m) | \Psi_1(n',m') \rangle -  \langle \tilde C_\beta |a^\dagger a | \tilde C_\beta \rangle \langle \tilde C_\beta |b^\dagger b | \tilde C_\beta \rangle\Big| < \epsilon.
\end{equation}
The usual computation then shows that we also have
\begin{align}
\label{eq:PSI1Cauchy}
&\Big| |\Psi_1(n,m) \rangle
- |\Psi_1(n',m') \rangle \Big |^2 \cr
=& \langle \Psi_1(n,m) | \Psi_1(n,m) \rangle - 2 {\rm Re} \langle \Psi_1(n,m) | \Psi_1(n',m') \rangle + \langle \Psi_1(n',m') | \Psi_1(n',m') \rangle \le 4 \epsilon.
\end{align}
In particular, the sequence $\{|\Psi_1(n,n)\rangle \}$ is Cauchy.  The argument for $\{|\Psi_2(n,n)\rangle \}$ is identical.
It is also clear from the work above that we have the norms
\begin{equation}
\label{eq:limnorms}
\langle \Psi_1 | \Psi_1 \rangle = \langle \Psi_2 | \Psi_2 \rangle =\langle \tilde C_\beta |a^\dagger a | \tilde C_\beta \rangle \langle \tilde C_{\beta'} |b^\dagger b | \tilde C_{\beta'} \rangle.
\end{equation}

Note that it was not really necessary to take the limits along the diagonal.  The above argument also establishes the relations
\begin{eqnarray}
|\Psi_1 \rangle &=& \lim_{n\rightarrow \infty} \lim_{m\rightarrow \infty}  |\Psi_1(n,m) \rangle
=  \lim_{m\rightarrow \infty}\lim_{n\rightarrow \infty}  |\Psi_1(n,m) \rangle, \cr
|\Psi_2 \rangle &=& \lim_{n\rightarrow \infty} \lim_{m\rightarrow \infty}  |\Psi_2(n,m) \rangle
=  \lim_{m\rightarrow \infty}\lim_{n\rightarrow \infty}  |\Psi_2(n,m) \rangle,
\end{eqnarray}
so that we may take these limits in any order that we like.

Having represented the states $|\Psi_1 \rangle$,
$|\Psi_2 \rangle$ in terms of the above limits,  we may use continuity of the inner product to write $\langle \Psi_1 |\Psi_2 \rangle$ as a limit of inner products
$\langle \Psi_1 (n, m) | \Psi_2(n', m') \rangle$.  Moreover, we can use the above freedom to choose the order of limits to first take $n,n'\rightarrow \infty$ while saving the limit $m,m'\rightarrow \infty$ for later.  Thus we write
\begin{eqnarray}
\label{eq:P1P2ip1}
\langle \Psi_1  | \Psi_2\rangle &:=& \lim_{m, m' \rightarrow \infty} \lim_{n, n' \rightarrow \infty}\langle \Psi_1 (n, m) | \Psi_2(n', m') \rangle \cr
&=& \lim_{m, m' \rightarrow \infty} \lim_{n, n' \rightarrow \infty}\tr(\tilde C_\beta a_n^\star b_{m'} \tilde C_{2\beta'} b_m^\star a_{n'} \tilde C_\beta) \cr
&=& \lim_{m, m' \rightarrow \infty} \lim_{n, n' \rightarrow \infty} \langle \tilde C_\beta |\hat{a}_{n,L}^\dagger \hat{b}_{m',L} \widehat{\tilde C_{2\beta'}}_L \hat{b}_{m,L}^\dagger \hat{a}_{n',L} | \tilde C_\beta \rangle \cr
&=& \lim_{m, m' \rightarrow \infty}  \langle \tilde C_\beta |a^\dagger \hat{b}_{m',L} \widehat{\tilde C_{2\beta'}}_L \hat{b}_{m,L}^\dagger a | \tilde C_\beta \rangle,
\end{eqnarray}
where the second line can be read off from figure \ref{fig:4bswap}.  In addition,  the final step used \eqref{eq:abCconv} and the fact that the operators $\hat{b}_{m',L}$, $\widehat{\tilde C_{2\beta'}}_L,$ and  $\hat{b}_{m,L}^\dagger$ are bounded. We may then use \eqref{eq:swapbetas3} (with $a$, $b$ replaced first by $\hat{b}_{m,L}^\dagger a$, $a^\dagger \hat{b}_{m',L}$ and then by $b^\dagger a$, $a^\dagger b$) to write
\begin{eqnarray}
\label{eq:P1P2ip}
\langle \Psi_1  | \Psi_2\rangle
&=& \lim_{m, m' \rightarrow \infty}  \langle\tilde C_{\beta'} |\hat{b}_{m,L}^\dagger a  \widehat{\tilde C_{2\beta}}_L  a^\dagger \hat{b}_{m',L}| \tilde C_{\beta'} \rangle \cr
&=&  \langle \tilde C_{\beta'} |b^\dagger a  \widehat{\tilde C_{2\beta}}_L  a^\dagger b| \tilde C_{\beta'} \rangle
=  \langle\tilde C_{\beta} |a^\dagger b  \widehat{\tilde C_{2\beta'}}_L  b^\dagger a | \tilde C_{\beta} \rangle.
\end{eqnarray}
Note that \eqref{eq:P1P2ip} is manifestly real and non-negative, as it is the norm squared of $\left|\widehat{\tilde C_{\beta'}}_L  b^\dagger a | \tilde C_{\beta} \right\rangle$.

The desired Lemma now follows by applying the Cauchy-Schwarz inequality to $|\Psi_1\rangle, |\Psi_2\rangle$ and using
\eqref{eq:limnorms} and \eqref{eq:P1P2ip}.
\end{proof}

Having derived \eqref{eq:regTrIn} for any $\beta, \beta' > 0$, we now take supremums of both sides over both $\beta$ and $\beta'$. By \eqref{eq:alttrace6}, taking $\sup_{\beta, \beta'>0}$ on the left-hand side gives $\tr(a^\dagger bb^\dagger a)$.  On the right-hand side we simply have
\begin{eqnarray}
\sup_{\beta, \beta'>0} \left(
\langle \tilde C_\beta |  a^\dagger a | \tilde C_\beta \rangle \langle \tilde C_{\beta'} |  b^\dagger b | \tilde C_{\beta'} \rangle \right) &=&
 \left(\sup_{\beta>0}
\langle \tilde C_\beta |  a^\dagger a | \tilde C_\beta \rangle\right) \left( \sup_{\beta'>0}  \langle \tilde C_{\beta'} |  b^\dagger b | \tilde C_{\beta'} \rangle \right) \cr
&=& \tr(a^\dagger a) \, \tr(b^\dagger b),
\end{eqnarray}
with the convention that if one supremum (or trace) is $0$ while the other supremum (or trace) is $+\infty$, their product is defined as $0$.

Thus for all $a, b \in {\cal A}^B_L$ we obtain
\begin{equation}
\label{eq:TrInvN}
\tr(a^\dagger bb^\dagger a) \le \tr(a^\dagger a) \, \tr(b^\dagger b).
\end{equation}
This is our trace inequality on the von Neumann algebra ${\cal A}^B_L$.
If we were allowed to cyclically permute $a^\dagger bb^\dagger a$ inside the left trace to $aa^\dagger bb^\dagger$, then this would be of the form $\tr(AB) \le \tr(A) \tr(B)$ for two positive operators $A= aa^\dagger$, $B=bb^\dagger$.  Such a cyclic permutation is not in fact allowed because the trace $\tr$ on ${\cal A}^B_L$ is defined only on positive elements of ${\cal A}^B_L$. While the operator $a^\dagger bb^\dagger a$ is positive, the operator $AB$ is known to be positive if and only if $A$ and $B$ commute.  Nevertheless, the trace inequality \eqref{eq:TrInvN} will suffice for our purposes below.

Before continuing, however, we pause to note three useful corollaries of the above argument.  The first is
\begin{corollary}
\label{cor:triprel}
For $a \in {\cal A}^B_L$ with $\tr(a^\dagger a)$ finite, the limit $\displaystyle \lim_{\beta \downarrow 0}\, a|\tilde C_\beta\rangle$ converges in ${\cal H}_{B \sqcup B}$.  Let us call the limit $|a\rangle$.  Then we also have
\begin{equation}
\label{eq:triprel}
\tr(a^\dagger a) = \langle a | a \rangle.
\end{equation}
\end{corollary}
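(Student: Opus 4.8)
The plan is to show that the net $\{a\ket{\tilde C_\beta}\}_{\beta>0}$ is Cauchy as $\beta\downarrow 0$, so that completeness of $\mathcal H_{B\sqcup B}$ supplies the limit $\ket a$, and then to read off $\braket{a}{a}=\tr(a^\dagger a)$ by continuity of the norm. Throughout I would write $P:=a^\dagger a\ge 0$, $w_\beta:=a\ket{\tilde C_\beta}$, and $f(\beta):=\bra{\tilde C_\beta}P\ket{\tilde C_\beta}=\|w_\beta\|^2$. The results already established give the two facts I need about $f$: by \eqref{eq:alttrace52} together with the monotonicity derived just above it, $f$ is a nonincreasing function of $\beta$ with $\lim_{\beta\downarrow 0}f(\beta)=\tr(P)$, which is finite by hypothesis.

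The heart of the argument is to control the cross term. For $0<\beta\le\beta'$ set $\delta:=\beta'-\beta\ge 0$ and use the cylinder semigroup relation $\ket{\tilde C_{\beta'}}=\widehat{\tilde C_\delta}_R\ket{\tilde C_\beta}$ to write
\begin{equation}
g:=\braket{w_\beta}{w_{\beta'}}=\bra{\tilde C_\beta}P\,\widehat{\tilde C_\delta}_R\ket{\tilde C_\beta}.
\end{equation}
Here $\widehat{\tilde C_\delta}_R$ is a positive contraction, $0\le\widehat{\tilde C_\delta}_R\le\mathbb 1$, exactly as used in the monotonicity argument leading to \eqref{eq:alttrace5}, and it commutes with $P$: indeed $\widehat{\tilde C_\delta}_R\in\hat A_R$ lies in the commutant $(\hat A_L)'$ by \eqref{eq:LRcomm}, while $P\in{\cal A}^B_L=(\hat A_L)''$, so the two commute by the bicommutant characterization. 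In particular $P\,\widehat{\tilde C_\delta}_R$ is self-adjoint and $g$ is real. The key inequalities I would then extract are the sandwich $f(\beta')\le g\le f(\beta)$. Both bounds follow because a product of mutually commuting positive operators is positive: the upper bound is $f(\beta)-g=\bra{\tilde C_\beta}(\mathbb 1-\widehat{\tilde C_\delta}_R)P\ket{\tilde C_\beta}\ge 0$, and, using $\widehat{\tilde C_\delta}_R^2=\widehat{\tilde C_{2\delta}}_R$ to write $f(\beta')=\bra{\tilde C_\beta}\widehat{\tilde C_{2\delta}}_R P\ket{\tilde C_\beta}$, the lower bound is $g-f(\beta')=\bra{\tilde C_\beta}\widehat{\tilde C_\delta}_R(\mathbb 1-\widehat{\tilde C_\delta}_R)P\ket{\tilde C_\beta}\ge 0$.

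With the sandwich in hand the Cauchy estimate telescopes: since $g$ is real,
\begin{equation}
\|w_\beta-w_{\beta'}\|^2=f(\beta)+f(\beta')-2g\le f(\beta)-f(\beta').
\end{equation}
Given $\epsilon>0$, the fact that $f(\beta)\uparrow\tr(P)<\infty$ lets me choose $\beta_0$ with $\tr(P)-f(\beta_0)<\epsilon$; then for all $0<\beta\le\beta'<\beta_0$ monotonicity gives $0\le f(\beta)-f(\beta')\le\tr(P)-f(\beta_0)<\epsilon$, so $\{w_\beta\}$ is Cauchy and converges to some $\ket a\in\mathcal H_{B\sqcup B}$. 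Continuity of the norm then gives $\braket{a}{a}=\lim_{\beta\downarrow 0}\|w_\beta\|^2=\lim_{\beta\downarrow 0}f(\beta)=\tr(a^\dagger a)$, which is \eqref{eq:triprel}.

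I expect the only genuine subtlety to be the cross-term sandwich $f(\beta')\le g\le f(\beta)$: it is precisely what converts the distance between two nearby states into a telescoping difference of the monotone quantity $f$, and it rests on (i) the left--right commutativity \eqref{eq:LRcomm} promoted to all of ${\cal A}^B_L$ via the bicommutant theorem, and (ii) the operator inequalities $\widehat{\tilde C_{2\delta}}_R=(\widehat{\tilde C_\delta}_R)^2\le\widehat{\tilde C_\delta}_R\le\mathbb 1$ for the positive contraction $\widehat{\tilde C_\delta}_R$, combined with the elementary fact that a product of commuting positive operators is positive. Once these two operator-monotonicity points are secured, the remainder of the argument is routine.
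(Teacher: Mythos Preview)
Your proof is correct and follows essentially the same strategy as the paper: control the cross term $\langle w_\beta\,|\,w_{\beta'}\rangle$ by exploiting that $\widehat{\tilde C_\delta}_R$ commutes with $P=a^\dagger a\in{\cal A}^B_L$, and then use monotonicity of $f(\beta)=\langle\tilde C_\beta|P|\tilde C_\beta\rangle$ together with finiteness of $\tr(P)$ to conclude the Cauchy property. The only difference is that where you establish the sandwich $f(\beta')\le g\le f(\beta)$ via operator inequalities for commuting positives, the paper obtains the exact identity $g=f\bigl(\tfrac{\beta+\beta'}{2}\bigr)$ in one step by splitting $\widehat{\tilde C_\delta}_R=\widehat{\tilde C_{\delta/2}}_R\,\widehat{\tilde C_{\delta/2}}_R$ and moving one factor to the bra; this makes $\|w_\beta-w_{\beta'}\|^2=f(\beta)+f(\beta')-2f\bigl(\tfrac{\beta+\beta'}{2}\bigr)\to 0$ immediate without any inequality. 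Your sandwich argument is perfectly valid, but the paper's identity is the cleaner route and avoids invoking positivity of products of commuting positive operators.
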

\begin{proof}
Note that for $\beta' > \beta$ we have
\begin{equation}
\langle \tilde C_\beta | a^\dagger a | \tilde C_{\beta'} \rangle = \Braket{ \tilde C_\beta | a^\dagger a \widehat{\tilde C_{\frac{\beta'-\beta}{2}}}_R | \tilde C_{\frac{\beta'+\beta}{2}} } = \Braket{ \tilde   C_{\frac{\beta'+\beta}{2}}| a^\dagger a | \tilde C_{\frac{\beta'+\beta}{2}} },
\end{equation}
since $\widehat{\tilde C_{\frac{\beta'-\beta}{2}}}_R$ commutes with any operator in the left algebra  ${\cal A}^B_L$.
The right-hand side approaches a finite limit $\tr(a^\dagger a)$ as $\beta, \beta' \to 0$, and thus we may use steps much like those above to find that any sequence $\beta_n \rightarrow 0$ defines a Cauchy sequence $a|\tilde C_{\beta_n}\rangle$ in ${\cal H}_{B\sqcup B}$, and that the limit is the same for all such sequences.  Calling the limit $|a\rangle$ and using \eqref{eq:alttrace52} then immediately gives \eqref{eq:triprel}.
\end{proof}

We also have
\begin{corollary}
\label{cor:faithfulvN}
The trace $\tr$ defined by \eqref{eq:alttrace5} is faithful on the von Neumann algebras ${\cal A}^B_L$ and ${\cal A}^B_R$.
\end{corollary}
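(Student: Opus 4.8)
The plan is to prove only the nontrivial direction, since $a=0\Rightarrow\tr(a)=0$ is immediate from linearity of $\tr$. So I would assume $a\in{\cal A}^B_L$ is positive with $\tr(a)=0$ and show $a=0$.

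First I would pass to the square root. Positivity lets me write $a=\gamma^\dagger\gamma$ with $\gamma:=a^{1/2}$ the self-adjoint positive square root, which again lies in ${\cal A}^B_L$; then $\tr(\gamma^\dagger\gamma)=\tr(a)=0$ is finite. Corollary \ref{cor:triprel} therefore applies and produces a vector $|\gamma\rangle:=\lim_{\beta\downarrow 0}\gamma|\tilde C_\beta\rangle\in{\cal H}_{B\sqcup B}$ with $\langle\gamma|\gamma\rangle=\tr(\gamma^\dagger\gamma)=0$. Hence $|\gamma\rangle=0$, i.e.\ $\gamma|\tilde C_\beta\rangle\to 0$ as $\beta\downarrow 0$. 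This is the step that imports the trace inequality, since Corollary \ref{cor:triprel} rests on \eqref{eq:TrInvN}.

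The remaining work is to promote ``$\gamma$ kills the cylinder states in the limit'' to ``$\gamma=0$,'' which I would do by evaluating $\gamma$ on the dense family $\{|d\rangle:d\in\U{Y}^d_{B \sqcup B}\}$. The idea is to reach each $|d\rangle$ from a cylinder using a \emph{right} action, $\hat d_R|\tilde C_\beta\rangle=|\tilde C_\beta\cdot_L d\rangle$, and then commute $\gamma$ past it. Concretely, I would first record the limit
\begin{equation*}
\lim_{\beta\downarrow 0}\hat d_R|\tilde C_\beta\rangle=|d\rangle,
\end{equation*}
which I establish exactly as in Corollary \ref{cor:triprel}: expanding $\big\||\tilde C_\beta\cdot_L d\rangle-|d\rangle\big\|^2=\tr(d^\star\tilde C_{2\beta}d)-2\,\mathrm{Re}\,\tr(d^\star\tilde C_\beta d)+\tr(d^\star d)$ and using Axiom \ref{ax:continuity} together with $\|C_\beta\|\to 1$ to send each term to $\tr(d^\star d)$, so that the difference vanishes. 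Since $\hat d_R\in\hat A_R$ lies in the commutant of ${\cal A}^B_L$ (by \eqref{eq:LRcomm} and strong closure) it commutes with $\gamma$, and because both $\gamma$ and $\hat d_R$ are bounded, hence continuous, I may move $\gamma$ through the limit and past $\hat d_R$:
\begin{equation*}
\gamma|d\rangle=\lim_{\beta\downarrow 0}\gamma\,\hat d_R|\tilde C_\beta\rangle=\lim_{\beta\downarrow 0}\hat d_R\,\gamma|\tilde C_\beta\rangle=\hat d_R|\gamma\rangle=0.
\end{equation*}
Because the $|d\rangle$ are dense in ${\cal H}_{B\sqcup B}$ and $\gamma$ is bounded, this forces $\gamma=0$, whence $a=\gamma^\dagger\gamma=0$. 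The statement for ${\cal A}^B_R$ follows by the same argument with the left and right representations interchanged.

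I expect the only delicate point to be the auxiliary limit $\hat d_R|\tilde C_\beta\rangle\to|d\rangle$, which is exactly where the continuity axiom does its work; the square-root reduction, the appeal to Corollary \ref{cor:triprel}, the left--right commutation, and the density argument are all routine. A variant that displays the role of the trace inequality even more directly is to instead set $b=\gamma$ in \eqref{eq:TrInvN}, obtaining $\tr\big((\gamma\hat d_L)^\dagger(\gamma\hat d_L)\big)=\tr(\hat d_L^\dagger\gamma^2\hat d_L)\le\tr(\hat d_L^\dagger\hat d_L)\,\tr(\gamma^\dagger\gamma)=0$, and then applying Corollary \ref{cor:triprel} to $\gamma\hat d_L$; this again yields $\gamma|d\rangle=0$ on the same dense set (now using the left form of the auxiliary limit), hence $\gamma=0$.
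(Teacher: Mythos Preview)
Your proof is correct. The main route differs from the paper's in an interesting way. The paper argues directly from the trace inequality \eqref{eq:TrInvN}: for any $a\in\U{Y}^d_{LR}$ it computes $\langle a|bb^\dagger|a\rangle=\tr(\hat a_L^\dagger bb^\dagger\hat a_L)\le\tr(\hat a_L^\dagger\hat a_L)\,\tr(b^\dagger b)$, so $\tr(bb^\dagger)=0$ forces $b^\dagger|a\rangle=0$ on the dense set and hence $b=0$. Your main argument instead passes to $\gamma=a^{1/2}$, uses Corollary~\ref{cor:triprel} to produce a null vector $|\gamma\rangle=\lim_{\beta\downarrow0}\gamma|\tilde C_\beta\rangle$, and then exploits left--right commutation ($\gamma\hat d_R=\hat d_R\gamma$) to transport this to $\gamma|d\rangle=0$ on the dense set. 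Both proofs need the same auxiliary limit $\hat d_{L/R}|\tilde C_\beta\rangle\to|d\rangle$, which is where the continuity axiom enters.

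One small correction of attribution: Corollary~\ref{cor:triprel} does not actually rest on the trace inequality \eqref{eq:TrInvN}; its proof uses only the monotone/supremum characterization of $\tr$ and the Cauchy-sequence technique from the surrounding lemma. So your main route is in fact \emph{independent} of the trace inequality, which is a mild simplification relative to the paper. Your closing ``variant,'' on the other hand, is essentially the paper's own argument (with $\hat a_L\mapsto\hat d_L$ and $b\mapsto\gamma$), so you have in effect written down both proofs.
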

\begin{proof}
Together with Lemma \ref{lemma:Cnorm} from appendix \ref{app:lemmas}, the continuity axiom (Axiom \ref{ax:continuity}) implies that for any rimmed surface $a \in \underline{Y}^d_{LR}$ the states $\hat{a}_L |\tilde C_\beta \rangle = |a\tilde C_\beta\rangle$ converge in the Hilbert space norm to  $|a\rangle$ as $\beta \rightarrow 0$.  Since any $b \in {\cal A}^B_L$ is bounded, we also find $b^\dagger\hat{a}_L |\tilde C_\beta \rangle \rightarrow b^\dagger|a\rangle$.  Thus
\begin{equation}
\label{eq:cor2eq}
\langle a | b b^\dagger |a \rangle = \lim_{\beta \downarrow 0} \langle \tilde C_\beta |\hat{a}_L^\dagger bb^\dagger \hat{a}_L|\tilde C_\beta \rangle =  \tr(\hat a_L^\dagger b b^\dagger \hat a_L) \le \tr(\hat a_L^\dagger \hat a_L) \, \tr(b^\dagger b),
\end{equation}
where in the last step we have used the trace inequality \eqref{eq:TrInvN}. It follows that if $\tr(b b^\dagger)$ vanishes for any $b$, then \eqref{eq:cor2eq} requires $b^\dagger |a \rangle$ to vanish for all $a \in \underline{Y}^d_{LR}$.  But $b^\dagger$ is bounded, and the states $|a\rangle$ define a dense subspace of the Hilbert space, so we must have $b b^\dagger=0$.  This establishes faithfulness on ${\cal A}^B_L$, and the argument on ${\cal A}^B_L$ is identical.
\end{proof}

Finally, we have
\begin{corollary}
\label{cor:tenprod}
Given a Hilbert space sector ${\cal H}_{B'}$, the $n$-fold tensor product ${\cal H}_{B'}^{\otimes n} : =  \otimes_{i=1}^n {\cal H}_{B'} = {\cal H}_{B'} \otimes \dots \otimes {\cal H}_{B'}$ (with $n$ factors on the right-hand side) is naturally identified with a subspace of the Hilbert space ${\cal H}_{\sqcup_{i=1}^n B'} : = {\cal H}_{B' \sqcup \dots \sqcup B'}$  associated with $n$ copies of $B'$.
\end{corollary}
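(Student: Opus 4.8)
The plan is to construct an explicit isometric embedding $\iota : {\cal H}_{B'}^{\otimes n} \to {\cal H}_{\sqcup_{i=1}^n B'}$ that sends product states to disjoint unions of surfaces. First I would work at the level of rimmed surfaces: given $N_1,\dots,N_n \in Y^d_{B'}$, their disjoint union $N_1 \sqcup \dots \sqcup N_n$ is a rimmed surface in $Y^d_{\sqcup_i B'}$ whose labelled boundary is precisely $\sqcup_i B'$, and so defines a state $|N_1 \sqcup \dots \sqcup N_n\rangle \in H_{\sqcup_i B'}$. I would then set $\iota(|N_1\rangle \otimes \dots \otimes |N_n\rangle) := |N_1 \sqcup \dots \sqcup N_n\rangle$ on product vectors. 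Extending the disjoint union to formal linear combinations via the distributive law makes $\iota$ manifestly multilinear, so it factors through the algebraic tensor product of the ${\cal H}_{B'}$.

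The heart of the argument is the computation of inner products. Because the boundary points of $\sqcup_i B'$ are labelled copy-by-copy, the gluing defining the inner product \eqref{eq:ip} proceeds component-wise: the reflection acts locally on each component, so $(N_1 \sqcup \dots \sqcup N_n)^* = N_1^* \sqcup \dots \sqcup N_n^*$, and gluing the $i$-th bra-component to the $i$-th ket-component yields the disconnected closed manifold $M_{(\sqcup_i N_i)^* (\sqcup_i M_i)} = M_{N_1^* M_1} \sqcup \dots \sqcup M_{N_n^* M_n}$. The factorization Axiom \ref{ax:factorize} then gives
\begin{equation}
\langle N_1 \sqcup \dots \sqcup N_n \,|\, M_1 \sqcup \dots \sqcup M_n\rangle = \zeta\Big(\bigsqcup_{i=1}^n M_{N_i^* M_i}\Big) = \prod_{i=1}^n \zeta(M_{N_i^* M_i}) = \prod_{i=1}^n \langle N_i | M_i\rangle,
\end{equation}
which is exactly the inner product of the corresponding product states in ${\cal H}_{B'}^{\otimes n}$. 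Expanding a general element of the algebraic tensor product as a finite sum of product vectors and using sesquilinearity shows that $\iota$ preserves the inner product on all of the algebraic tensor product; i.e.\ $\iota$ is an isometry there.

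It then remains to upgrade this to the Hilbert-space statement, which is routine. Since $\iota$ is an isometry, it automatically maps null vectors of the (possibly degenerate) inner product on the algebraic tensor product to null vectors of $H_{\sqcup_i B'}$, so it is well-defined on equivalence classes and hence on the Hilbert-space tensor product ${\cal H}_{B'}^{\otimes n}$ (whose inner product is nondegenerate by construction). As an isometry defined on a dense subspace, $\iota$ extends uniquely by continuity to an isometric linear map on the completion ${\cal H}_{B'}^{\otimes n}$, whose image is a closed subspace of ${\cal H}_{\sqcup_i B'}$; this is the desired identification.

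I do not expect any serious obstacle: the only point requiring genuine care is the claim that the gluing of disjoint unions factorizes, which becomes immediate once one tracks the boundary labels and invokes Axiom \ref{ax:factorize}. I would also emphasize, to pre-empt confusion, that $\iota$ is generally \emph{not} surjective, since connected rimmed surfaces with boundary $\sqcup_i B'$ (which link the different copies of $B'$) produce states in ${\cal H}_{\sqcup_i B'}$ lying outside the image; this is fully consistent with the corollary, which asserts only an identification with a subspace.
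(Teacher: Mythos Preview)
Your proposal is correct and follows essentially the same strategy as the paper: define the map on surfaces via disjoint union, use Axiom~\ref{ax:factorize} to verify that inner products factorize, and then pass to the completion. The only cosmetic difference is that the paper presents the extension step by picking approximating sequences $|a_m\rangle,|b_m\rangle$ and showing $|a_m\sqcup b_m\rangle$ is Cauchy (invoking the earlier $|\Psi_1(m,m)\rangle$ argument), whereas you phrase it as ``isometry on the algebraic tensor product, hence extends by continuity''; these are equivalent packagings of the same fact.
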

\begin{proof}
We first prove this for $n=2$. We wish to show that any $|a\rangle \otimes |b\rangle \in {\cal H}_{B'} \otimes {\cal H}_{B'}$ is naturally mapped to a state in ${\cal H}_{B' \sqcup B'}$. Since $|a\rangle, |b\rangle \in {\cal H}_{B'}$, we can find sequences $|a_m\rangle, |b_m\rangle$ in $H_{B'}$ that converge to $|a\rangle, |b\rangle$, respectively. Using steps much like the ones used above in showing $|\Psi_1(m,m)\rangle$ to be Cauchy, we find that $|a_m \sqcup b_m\rangle$ is a Cauchy sequence in $H_{B' \sqcup B'}$. Moreover, its limit in ${\cal H}_{B' \sqcup B'}$ is independent of the choices for the sequences $|a_m\rangle, |b_m\rangle$ so long as they converge to $|a\rangle, |b\rangle$. Thus we may call this limit $|a\sqcup b\rangle$, and so we have defined a natural map from ${\cal H}_{B'} \otimes {\cal H}_{B'}$ to ${\cal H}_{B' \sqcup B'}$ by mapping $|a\rangle \otimes |b\rangle$ to $|a\sqcup b\rangle$. Moreover, this map is linear and preserves the inner product. Therefore, it provides a natural isomorphism between ${\cal H}_{B'} \otimes {\cal H}_{B'}$ and a subspace of ${\cal H}_{B' \sqcup B'}$. This argument clearly generalizes to all $n\in {\mathbb Z}^+$, thus establishing this corollary.
\end{proof}

This corollary allows us to embed ${\cal H}_{B'}^{\otimes n}$ into ${\cal H}_{\sqcup_{i=1}^n B'}$ as a subspace. Thus, for any operator acting on any one of the $n$ tensor factors of ${\cal H}_{B'}$, we can now also allow it to act on this subspace of ${\cal H}_{\sqcup_{i=1}^n B'}$. For the case of $n=2$ and $B' = B \sqcup B$, we can use this fact to write $|\Psi_1\rangle$ from \eqref{eq:PSI12lim} in the form:
\begin{equation}
|\Psi_1 \rangle  = a_{L_1} b_{L_2} \left|(\tilde C_\beta)_{L_1,R_1}, (\tilde C_{\beta'})_{L_2,R_2} \right\rangle,
\end{equation}
where we have used \eqref{eq:abCconv} and \eqref{eq:PSI12}. In the limit $\beta,\beta'\to 1$, $|\Psi_1 \rangle$ converges to $|a_{L_1,R_1}, b_{L_2,R_2}\rangle$ (called $|a\sqcup b\rangle$ in the previous paragraph) and the inner product \eqref{eq:P1P2ip} becomes
\begin{equation}\label{eq:P1P2ip2}
\langle a_{L_1,R_1}, b_{L_2,R_2}|{\cal S}_{L_1, L_2}|a_{L_1,R_1}, b_{L_2,R_2}\rangle = \langle b| a a^\dag |b\rangle = \langle a| b b^\dagger |a\rangle.
\end{equation}

Note, however, that Corollary \ref{cor:tenprod} states only that  ${\cal H}_{B'}^{\otimes n} = \otimes_{i=1}^n {\cal H}_{B'}$ gives a subspace of ${\cal H}_{\sqcup_{i=1}^n B'}$, allowing that
${\cal H}_{\sqcup_{i=1}^n B'}$ may well be strictly larger than $\otimes_{i=1}^n {\cal H}_{B'}$.  This is in particular true for the topological model of \cite{Marolf:2020xie} without end-of-the-world branes, as well as for models with end-of-the-world branes studied in \cite{Marolf:2020xie} when considered in baby universe superselection sectors where the partition function is larger than the number of flavors of such branes.  As noted in \cite{Marolf:2020xie}, this discussion is directly analogous to the considerations of \cite{Harlow:2015lma} (see also \cite{Harlow:2018tqv}), so the issue may be called the `Harlow factorization question'. When such extra states exist, and if one wishes to insist that there be a dual formulation as a standard non-gravitating quantum field theory (for which locality would strictly require
${\cal H}_{\sqcup_{i=1}^n B'} = \otimes_{i=1}^n {\cal H}_{B'}$), one might wish to  call this phenomenon the `Harlow factorization problem'.

\subsection{\texorpdfstring{$\mathcal{A}_{L}^B$}{} and \texorpdfstring{$\mathcal{A}_{R}^B$}{} contain only type I factors}
\label{subsec:typeI}

We can now say much more about the structure of the von Neumann algebras $\mathcal{A}_{L}^B$ and $\mathcal{A}_{R}^B$ defined by a diagonal Hilbert space ${\cal H}_{B \sqcup B}$.  This is the part of the paper where we have developed enough control over our algebras, and in particular over our trace $\tr$, to reach into the mathematics literature and make use of powerful results (even if they are nevertheless elementary by the standards of theorems about von Neumann algebras).

It turns out that much of the study of a von Neumann algebra ${\cal A}$  can be reduced to the study of projections $P \in {\cal A}$.  Here as usual a projection is defined as an operator that satisfies $P = P^\dagger$ and $P^2 =P$.  It will thus be useful to better understand the implications of our results for such $P$.

Let us in particular apply our von Neumann algebra trace inequality \eqref{eq:TrInvN} to the case $a=b=P\in {\cal A}^B_L$. Since $a^\dagger b b^\dagger a = P^4 = P$, we quickly obtain
\begin{equation}
\label{eq:Projbound}
    \tr P \leq (\tr P)^2.
\end{equation}
Since $\tr$ is faithful, unless $P$ is the trivial projection $P=0$ we must have
\begin{equation}
\label{eq:trPbound}
\tr P \geq 1.
\end{equation}
Thus the trace of any non-zero projection in $\mathcal{A}^B_L$ is bounded below by 1.

Now, any von Neumann algebra ${\cal A}$ can be decomposed as the direct sum/integral of so-called von Neumann {\it factors} ${\cal A}_\mu$ which are just von Neumann algebras with trivial centers.  Furthermore, any faithful normal semifinite trace on ${\cal A}$ induces a faithful normal semifinite trace on every factor ${\cal A}_\mu$.  Since it is known that any von Neumann factor is of type I, II, or III, and since there is no such trace on any type III factor, all factors of our ${\cal A}^B_L$ must be of type I or type II; see e.g.\ \cite{Takesaki}. (As usual, we should in principle allow for exceptions on sets of measure zero. However, Lemma \ref{lemma:discrete} below will show that ${\cal A}^B_L$ is a discrete direct sum of factors so that no interesting such exceptions can arise.)

However, \eqref{eq:trPbound} quickly leads to a much stronger result.
The crucial point is that for any faithful normal semifinite trace on a non-trivial type II von Neumann factor, there is decreasing family of non-zero projections $P_\lambda$ with $\lambda \in {\mathbb R}^+$ such that $\tr (P_\lambda) \rightarrow 0$ as $\lambda \rightarrow 0$; see e.g.\ proposition 8.5.5 of \cite{KR:1997}.    But for small enough $\lambda$ such $P_\lambda$ clearly violate \eqref{eq:trPbound}.  We thus see that ${\cal A}^B_L$ must contain only type I factors.  Such von Neumann algebras are said to be of type I.  This is the first key result of this paper.

Another important result from the literature is the so-called commutation theorem for semifinite traces; see e.g.\ theorem 2.22 of \cite{Takesaki}.  This theorem states that a von Neumann algebra $\mathcal{A}$ with a semifinite trace $\tr$ is the commutant of its opposite algebra $\mathcal{A}^{op}$ ($\mathcal{A}$ with reversed multiplication rule) when acting on the Hilbert space $\mathcal{H}=\left\{a \in \mathcal{A}: \tr (a^{\dagger} a) <\infty\right\}$.  Here  the two algebras act by left and right multiplication $\hat b_L |a\rangle = |ba\rangle$ and $\hat b_R |a \rangle = |ab \rangle$, and the above notation means that a dense subspace of $\mathcal{H}$ is defined by operators with finite $\tr (a^{\dagger} a)$, and that in this subspace we have $\langle a | a\rangle = \tr (a^\dagger a)$ as in \eqref{eq:triprel}.    This is precisely the structure of any diagonal Hilbert space ${\cal H}_{B \sqcup B}$, on which the algebras $\A_L^B$ and $\A_R^B$ act as opposites. It thus follows that $\A_L^B$ and $\A_R^B$ are commutants on ${\cal H}_{B \sqcup B}$. Alternately, without using \eqref{eq:triprel} in Corollary \ref{cor:triprel}, one can check that our algebras satisfy the conditions for the commutation theorems in \cite{Rieffel:1976} which again imply that $\A_L^B$ and $\A_R^B$ are commutants.   In either case, saying that
$\A_L^B$ and $\A_R^B$ are commutants on ${\cal H}_{B \sqcup B}$ means that $\A_R^B$ consists precisely of those bounded operators on ${\cal H}_{B \sqcup B}$ that commute with all operators in $\A_L^B$, and also that $\A_L^B$ consists precisely of those bounded operators on ${\cal H}_{B \sqcup B}$ that commute with all operators in $\A_R^B$.

The above observations now tell us much about the structure of a diagonal Hilbert space sector ${\cal H}_{B \sqcup B}$.
In analyzing this structure, it is useful to consider the center ${\cal Z}^B_L$  of $\A_L^B$, which is defined to the  subalgebra of operators in $\A_L^B$ that commute with all operators in $\A_L^B$; i.e., ${\cal Z}^B_L := \{a  |\, a\in \A_L^B,\  ab=ba \,\,\,\, \forall b \in \A_L^B \}$.  In particular, any $z \in {\cal Z}^B_L$ commutes with its adjoint $z^\dagger$.  This means that central operators are normal and can be diagonalized on ${\cal H}_{B \sqcup B}$.  In fact, since all elements of
${\cal Z}^B_L$ commute with each other, we can simultaneously diagonalize all operators in ${\cal Z}^B_L$ on the Hilbert space
${\cal H}_{B \sqcup B}$.

Interestingly, we can  use  the bound \eqref{eq:trPbound} to show that any central operator $z \in {\cal Z}^B_L$ has a purely discrete spectrum. For clarity, we state this as the following lemma:
\begin{lemma}\label{lemma:discrete}
The spectrum of any  $z \in {\cal Z}^B_L$ is purely discrete in the sense that  ${\cal H}_{B \sqcup B}$ is the closure of the linear span of all normalizable eigenstates of $z$.
\end{lemma}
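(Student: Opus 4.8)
The plan is to reduce the statement to a structural claim about the center $\mathcal{Z}^B_L$, namely that $\mathcal{Z}^B_L$ is \emph{atomic}: that the identity $\mathbb{1}$ is the (strong-operator) sum $\sum_i E_i$ of an orthogonal family of minimal projections $E_i \in \mathcal{Z}^B_L$. This reduction is the easy half. Since $z$ is normal, all of its spectral projections lie in the abelian von Neumann algebra it generates, which is contained in $\mathcal{Z}^B_L$, so it suffices to control the projections of $\mathcal{Z}^B_L$. If $\mathbb{1} = \sum_i E_i$ with each $E_i$ minimal, then because $E_i \mathcal{Z}^B_L E_i = \mathbb{C} E_i$ we have $z E_i = \lambda_i E_i$ for some $\lambda_i \in \mathbb{C}$, so every vector in $E_i \mathcal{H}_{B \sqcup B}$ is an eigenvector of $z$. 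The closed span of all eigenstates then contains $(\sum_i E_i)\mathcal{H}_{B\sqcup B} = \mathcal{H}_{B \sqcup B}$, which is exactly the assertion of the lemma.

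It then remains to prove that $\mathcal{Z}^B_L$ is atomic, and here the bound $\tr P \ge 1$ of \eqref{eq:trPbound} does all the work. I argue by contradiction, supposing there is a nonzero \emph{diffuse} central projection $Q \in \mathcal{Z}^B_L$, i.e.\ one dominating no minimal projection of $\mathcal{Z}^B_L$ (if $\mathcal{Z}^B_L$ failed to be atomic, such a $Q$ would exist as the complement of the supremum of a maximal orthogonal family of minimal projections). Using semifiniteness (property 4) applied to the positive operator $Q$, there is a nonzero $b \in \mathcal{A}^B_L$ with $0 \le b \le Q$ and $\tr b < \infty$; taking a spectral projection $p := \mathbb{1}_{[\epsilon,\infty)}(b)$ for small enough $\epsilon > 0$ produces a nonzero projection $p \in \mathcal{A}^B_L$ with $p \le Q$ and $\tr p \le \epsilon^{-1}\tr b < \infty$, where $p \le Q$ follows from $\mathrm{range}(b) \subseteq \mathrm{range}(Q)$ and the finiteness of $\tr p$ from monotonicity of the trace on positive elements (a consequence of linearity).

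The contradiction then comes from chopping $p$ into too many pieces. The central support $c(p) \le Q$ inherits diffuseness from $Q$, so $c(p)$ is not minimal and can be split into two nonzero orthogonal central projections; iterating, for each $n$ I obtain $2^n$ nonzero mutually orthogonal central projections $z^{(n)}_1, \dots, z^{(n)}_{2^n}$ summing to $c(p)$. Setting $p^{(n)}_i := z^{(n)}_i p$ gives $2^n$ mutually orthogonal projections in $\mathcal{A}^B_L$ with $\sum_i p^{(n)}_i = p$, each nonzero because $z^{(n)}_i \le c(p)$ is a nonzero subprojection of the central support of $p$. By linearity of the trace and \eqref{eq:trPbound} we then get $\tr p = \sum_{i=1}^{2^n} \tr p^{(n)}_i \ge 2^n$ for every $n$, contradicting $\tr p < \infty$. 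Hence no diffuse $Q$ exists, $\mathcal{Z}^B_L$ is atomic, and the lemma follows.

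I expect the main obstacle to be the two facts borrowed from the structure theory of abelian von Neumann algebras: that failure of atomicity yields a genuinely diffuse central projection, and that a diffuse projection can be repeatedly bisected into nonzero orthogonal central pieces (most transparently via the identification $\mathcal{Z}^B_L \cong L^\infty(X,\mu)$, under which minimal projections are atoms of the measure and diffuseness is non-atomicity). The second subtlety worth care is that the trace bound \eqref{eq:trPbound} must be applied to the \emph{non-central} projections $p^{(n)}_i$; it is essential that $\tr P \ge 1$ holds for all nonzero projections of $\mathcal{A}^B_L$ and not merely for central ones, since a purely central argument cannot detect the diffuseness (central projections over an infinite-dimensional fiber can have infinite trace and so never violate the bound).
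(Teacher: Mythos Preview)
Your proof is correct. Both your argument and the paper's share the same skeleton: use semifiniteness to produce a nonzero projection $p \in \mathcal{A}^B_L$ of finite trace sitting under the ``continuous'' part, then invoke the bound $\tr P \ge 1$ after splitting $p$ into arbitrarily many nonzero orthogonal subprojections to force a contradiction with $\tr p < \infty$. The difference is in how the splitting is performed. The paper fixes a self-adjoint $z$, forms $z P_{a>\epsilon}$, argues this operator has no normalizable eigenvector with nonzero eigenvalue on $\mathcal{H}^{\perp D}_{B\sqcup B}$, and then chooses $n$ disjoint spectral intervals accumulating at a non-eigenvalue spectral endpoint to obtain $n$ nonzero subprojections whose traces sum to at most $\tr P_{[\lambda_0,\lambda_{max}]} < \infty$. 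You instead reduce the lemma to atomicity of $\mathcal{Z}^B_L$, pass to a diffuse central $Q$, localize via the central support $c(p)$, and recursively bisect $c(p)$ using the abelian-von-Neumann-algebra fact that a projection dominating no minimal projection can always be split into two nonzero orthogonal central pieces. Your route proves the slightly stronger structural statement (atomicity of the center) directly and handles all $z$ at once; the paper's route is more hands-on and self-contained, needing only the spectral theorem rather than central supports and the atomic/diffuse dichotomy.
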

\begin{proof}
Without loss of generality, let us take $z$ to be self-adjoint.   To establish the desired result, let us first write ${\cal H}_{B \sqcup B} ={\cal H}_{B \sqcup B}^D \oplus  {\cal H}_{B \sqcup B}^{\perp D}$ where ${\cal H}_{B \sqcup B}^D$ with superscript $D$ for ``discrete spectrum'') is the closure of the linear span of all normalizable eigenstates of $z$.  Note that the projection $P_{z_0}$ onto normalizable states with eigenvalue $z_0$ also defines a central element of our von Neumann algebra ${\cal A}^B_L$.  Summing over all such $z_0$ then shows that the projection $P_D$ onto ${\cal H}_{B \sqcup B}^D$ again lies in ${\cal Z}^B_L$, so that the complementary projection $P_D^\perp  = \mathbb{1} - P_D$  onto ${\cal H}_{B \sqcup B}^{\perp D}$ must lie in ${\cal Z}^B_L$ as well.

Now assume that $P_D^\perp$ is not the zero operator. The semifinite property of our trace then implies that there is some non-zero positive operator $a \in \A^B_L$ with finite trace such that $P_D^\perp -a$ is positive, and thus in particular for which $a$ annihilates all states in ${\cal H}_{B \sqcup B}^D$.

Consider now the projection $P_{a > \epsilon}$ onto the part of the spectrum of $a$ with eigenvalues greater than $\epsilon$ for some $\epsilon > 0$ and note that $P_{a > \epsilon} \in {\cal A}^B_L$.    Since $a$ has finite trace and $a-\epsilon P_{a>\epsilon}$ is positive, the normality property of our trace then requires that $\tr (P_{a > \epsilon})$ is again finite.
Furthermore, since $a$ is not the zero operator, the spectral theorem (say, in the form of theorem 5.2.2 of \cite{KR1}) implies that $P_{a > \epsilon}$ must be non-vanishing for some $\epsilon > 0 $.

It will also be useful to construct the operator $z P_{a > \epsilon}$.  Since $z$ commutes with $P_{a > \epsilon}$, the operator $z P_{a > \epsilon}$ is self-adjoint and so can be diagonalized in the sense of the spectral theorem (see again theorem 5.2.2 of \cite{KR1}). Furthermore, $z P_{a > \epsilon}$ can have no normalizable eigenvector $|\psi\rangle$ with non-zero eigenvalue $z_0$ since then $P_{a > \epsilon} |\psi\rangle$ would be a (necessarily non-vanishing) normalizable eigenvector of $z$ in ${\cal H}_{B \sqcup B}^{\perp D}$.  Let us now define $\lambda_{max} : = \|z P_{a > \epsilon}\|$ to be the operator norm of $z P_{a > \epsilon}$. Note that $\lambda_{max}$ cannot be zero since then $z P_{a > \epsilon}$ would vanish and any $|\psi\rangle \in \H_{B \sqcup B}$ not annihilated by $P_{a > \epsilon}$ would define a normalizable eigenvector $P_{a > \epsilon} |\psi\rangle$ of $z$ in ${\cal H}_{B \sqcup B}^{\perp D}$ with eigenvalue $0$. Moreover, at least one of $\lambda_{max}$ or $-\lambda_{max}$ is a spectral value of $z P_{a > \epsilon}$ (see e.g.\ proposition 3.2.15 of \cite{KR1}). Without loss of generality, we assume that $\lambda_{max}$ is a spectral value of $z P_{a > \epsilon}$ (if not, simply replace $z P_{a > \epsilon}$ by $-z P_{a > \epsilon}$ below).  Then since $z P_{a > \epsilon}$ has no normalizable eigenvector of eigenvalue $\lambda_{max}$, the above spectral theorem implies that for any $\lambda_0 < \lambda_{max}$ and any positive integer $n \in \mathbb{Z}^+$ there are real numbers $\lambda_1, \lambda_2, \dots, \lambda_n$ with  $\lambda_0 < \lambda_1 < \lambda_2 < \dots < \lambda_n < \lambda_{max}$ for which the projections $P_{[\lambda_{i-1}, \lambda_{i}]}$  onto the spectral intervals $[\lambda_{i-1}, \lambda_{i}]$ of $z P_{a > \epsilon}$ are non-vanishing for $i=1,\dots n$.  If we choose $\lambda_0 > 0$, we have $P_{[\lambda_0, \lambda_{max}]} \le P_{a > \epsilon}$ (as any state annihilated by $P_{a > \epsilon}$ must be annihilated by $z P_{a > \epsilon}$, and thus also by $P_{[\lambda_0, \lambda_{max}]}$). Since $\tr(P_{a > \epsilon})$ is finite, normality of our trace again requires $\tr(P_{[\lambda_0, \lambda_{max}]})$ to be finite, which yields the bound
\begin{equation}
\sum_{i=1}^n \tr(P_{[\lambda_{i-1}, \lambda_{i}]}) \le \tr(P_{[\lambda_0, \lambda_{max}]}).
\end{equation}
But since all of these traces are positive (and, in particular, non-zero since the projections are non-trivial), for $n > \tr(P_{[\lambda_0, \lambda_{max}]})$ some
$P_{[\lambda_{i-1}, \lambda_{i}]}$ must have trace less than $1$.  This contradicts the bound \eqref{eq:trPbound}, so that $P_D^\perp$ must in fact vanish.  Thus $z$ has purely discrete spectrum in the sense that ${\cal H}_{B \sqcup B}$ is the closure of the linear span of all normalizable eigenstates of $z$.
\end{proof}

The analogous statement will again hold when we simultaneously diagonalize all central operators in ${\cal Z}^B_L$.
Let us denote the simultaneous eigenspaces by ${\cal H}^\mu_{B \sqcup B}$ for $\mu$ in some index set ${\cal I}$. In particular, for each $\mu \in {\cal I}$ there is a set of complex numbers $z_\mu$ such that the states $|\psi\rangle$ in  ${\cal H}^\mu_{B \sqcup B}$ are precisely the set of states for which $z|\psi\rangle = z_\mu |\psi\rangle$ for any $z \in {\cal Z}^B_L$. The Hilbert space ${\cal H}_{B \sqcup B}$ then decomposes as a direct {\it sum} (not a more general integral) over such eigenspaces:
\begin{equation}
\label{eq:HBBdecomp}
{\cal H}_{B \sqcup B} = \bigoplus_{\mu \in {\cal I}} \ {\cal H}^\mu_{B \sqcup B}.
\end{equation}
  There is of course a corresponding resolution of the identity on ${\cal H}_{B \sqcup B}$ in terms of orthogonal projections $P_\mu$ onto ${\cal H}^\mu_{B \sqcup B}$:
\begin{equation}
\label{eq:muresolve}
\mathbb{1}_{B \sqcup B} = \bigoplus_{\mu \in {\cal I}} \  P_\mu,
\end{equation}
where $P_\mu P_\nu = P_\mu\delta_{\mu \nu}$ and  $P_\mu^\dagger = P_\mu$.

The fact that ${\cal A}^B_L$ is weakly closed means that the projections $P_\mu$ lie in ${\cal A}^B_L$, and thus in fact also lie in the center ${\cal Z}^B_L$.  As a result,
the decomposition \eqref{eq:HBBdecomp} also has an analogue at the level of the von Neumann algebra ${\cal A}^B_L$.  To see this, simply note that for each $\mu \in {\cal I}$  we can define a subalgebra ${\cal A}^B_{L, \mu}$ of operators of the form $P_\mu a$ for $a \in {\cal A}^B_L$.
We may thus use the resolution of the identity \eqref{eq:muresolve} to write
\begin{equation}
\label{eq:Algdecomp}
{\cal A}^B_L = \bigoplus_{\mu \in {\cal I}}\  {\cal A}^B_{L,\mu}.
\end{equation}
It is also clear that each ${\cal A}^B_{L,\mu}$ annihilates any
${\cal H}^\nu_{B \sqcup B}$ with $\mu \neq \nu$, and that the subalgebras ${\cal A}^B_{L,\mu}$ are von Neumann algebras in their own right (acting on ${\cal H}^\mu_{B \sqcup B}$).

Furthermore, consider any operator $z_\mu$ in the center of ${\cal A}^B_{L,\mu}$.  Then since $P_\mu$ projects the full von Neumann algebra onto ${\cal A}^B_{L,\mu}$, the operator $P_\mu z_\mu$ will commute with all $a$ in the full von Neumann algebra ${\cal A}^B_{L}$ and thus lies in the original center.  But we have already diagonalized all operators in the center of ${\cal A}^B_{L}$. So, on the subspace
${\cal H}^{\mu}_{B \sqcup B}$, the operator $P_\mu z_\mu$ must act as a multiple of the identity $\mathbb{1}_\mu$. But on ${\cal H}^{\mu}_{B \sqcup B}$  the operator $P_\mu$ is already proportional to $\mathbb{1}_\mu$ (with non-zero coefficient), so this must be true of our $z_\mu$ as well.  It follows that each ${\cal A}^B_{L,\mu}$ is a von Neumann algebra with trivial center.

A von Neumann algebra with trivial center is known as a von Neumann factor.  Such factors can be classified as being of type I, II, or III, and our arguments above showed that each ${\cal A}^B_{L,\mu}$ must be of type I.

There is, of course, also a corresponding decomposition of ${\cal A}^B_R$.  In fact, since
${\cal A}^B_L$ and ${\cal A}^B_R$ are commutants of each other, their central subalgebras must define the {\it same} set of operators on ${\cal H}_{B \sqcup B}$; i.e., any operator on ${\cal H}_{B \sqcup B}$ that lies in the center of
${\cal A}^B_L$ must also lie in the center of ${\cal A}^B_R$.  The decomposition of  ${\cal A}^B_R$ into  ${\cal A}^B_{R,\mu}$ thus uses precisely the same index set ${\cal I}$, and it is associated with the identical decomposition of the Hilbert space \eqref{eq:HBBdecomp}.  Furthermore, the subalgebras ${\cal A}^B_{L,\mu}$, ${\cal A}^B_{R,\mu}$ are both type I von Neumann factors that are commutants of each other on the corresponding Hilbert space ${\cal H}_{B \sqcup B}^\mu$.

Now, it is also known that any type I von Neumann factor is isomorphic to the algebra ${\cal B}({\cal H})$ of all bounded operators on some Hilbert space ${\cal H}$.  Since this is true of both ${\cal A}^B_{L,\mu}$ and ${\cal A}^B_{R,\mu}$, the fact that these two algebras are commutants on ${\cal H}_{B \sqcup B}^\mu$ can be used to show that this Hilbert space admits a factorization
\begin{equation}
\label{eq:mufactor}
{\cal H}_{B \sqcup B}^\mu = {\cal H}_{B \sqcup B,L}^\mu \otimes {\cal H}_{B \sqcup B,R}^\mu,
\end{equation}
such that the action of every $a \in {\cal A}^B_{L,\mu}$ on \eqref{eq:mufactor} is of the form $a_L \otimes \mathbb{1}_{\mu,R}$, where $\mathbb{1}_{\mu,R}$ is the identity on ${\cal H}_{B \sqcup B,R}^\mu$.   Furthermore,
{\it any} bounded operator on \eqref{eq:mufactor} of the form $a_L \otimes \mathbb{1}_{\mu,R}$ is a member of ${\cal A}^B_{L,\mu}$. The operators
$a \in {\cal A}^B_{R,\mu}$ on \eqref{eq:mufactor} are analogously the set of operators of the form $\mathbb{1}_{\mu,L} \otimes  a_R$.
Since we can extend \eqref{eq:starop} to show that ${}^\dagger$ defines an anti-linear isomorphism between ${\cal A}^B_{L,\mu}$ and ${\cal A}^B_{R,\mu}$, it follows that the Hilbert space factor ${\cal H}_{B \sqcup B,L}^\mu$ is similarly isomorphic to ${\cal H}_{B \sqcup B,R}^\mu$.  Nevertheless, we maintain the labels $L,R$ for clarity below.

This is precisely the structure advertised in the introduction.  In particular, by restricting it to ${\cal H}_{B\sqcup B}$, any density matrix on the quantum gravity Hilbert space clearly defines a  density matrix $\rho_{B \sqcup B}$ on the sector ${\cal H}_{B\sqcup B}$.  For convenience, let us suppose that $\rho_{B \sqcup B}$ is normalized in the sense that it gives an expectation value of $1$ for the identity on ${\cal H}_{B\sqcup B}$.  This is equivalent to saying that the standard Hilbert space trace of $\rho_{B \sqcup B}$ (defined by summing diagonal matrix elements of  $\rho_{B \sqcup B}$  over an orthonormal basis of the Hilbert space ${\cal H}_{B\sqcup B}$) yields $1$.  This $\rho_{B \sqcup B}$ then defines density matrices $\rho^\mu$ for which
\begin{equation}
p_\mu \rho^\mu : = P_\mu \rho_{B \sqcup B} P_\mu,
\end{equation}
where the $p_\mu$ are probabilities given by the expectation values of the operators $P_\mu$ in the state $\rho_{B \sqcup B}$ and the $\rho^\mu$ are normalized density matrices on ${\cal H}^\mu_{B\sqcup B}$.

The key point is that each such $\rho^\mu$ now induces normalized density matrices $\rho^\mu_L, \rho^\mu_R$ on the Hilbert space factors ${\cal H}_{B \sqcup B,L}^\mu , {\cal H}_{B \sqcup B,R}^\mu$.  If one thinks of density matrices as positive linear functionals on the algebra of observables then $\rho^\mu_L, \rho^\mu_R$ are the restrictions of $\rho^\mu$ to the left and right von Neumann algebras ${\cal A}^B_{L,\mu}$, ${\cal A}^B_{R,\mu}$.  But one may equivalently think of
$\rho^\mu_L$ as the trace of $\rho^\mu$ over ${\cal H}_{B \sqcup B,R}^\mu$, and $\rho^\mu_R$ is similarly the trace of $\rho^\mu$ over ${\cal H}_{B \sqcup B,L}^\mu$.

This structure will allow us to discuss entropies in what one might call ``standard physics terms'' in section \ref{subsec:SfromA} below.  However, it will simplify our discussion of entropies if we first make a brief digression (section \ref{subsec:normoftrace}) in order to more carefully analyze the normalization of the trace $\tr$.

\subsection{The normalization of \texorpdfstring{$\tr$}{}}
\label{subsec:normoftrace}

We saw in section \ref{subsec:typeI} that the von Neumann algebra ${\cal A}^B_L$ defined by a diagonal Hilbert space sector ${\cal H}_{B \sqcup B}$ can be written in terms of type I von Neumann factors ${\cal A}^B_{L,\mu}$.  Since the operators in ${\cal A}^B_{L,\mu}$ are just operators in ${\cal A}^B_L$ of the form $P_\mu a$, the trace $\tr$ on ${\cal A}^B_{L}$ is also defined on positive operators in ${\cal A}^B_{L,\mu}$.

However,  as noted above, our ${\cal A}^B_{L,\mu}$ can be thought of as the algebra of all bounded operators on the Hilbert space factor ${\cal H}_{B \sqcup B,L}^\mu$.  Faithful, normal, semifinite traces on such algebras are known to be unique up to an overall normalization constant. In particular, there must be real numbers ${\cal C}_\mu > 0$ such that for all positive $a \in {\cal A}_{L, \mu}^B$ we have
\begin{equation}
\label{eq:Ltr}
{\cal C}_\mu \tr (a) = \Tr_{\mu}(a) : = \sum_{i} {}_L\langle i | a | i\rangle_L,
\end{equation}
where $|i \rangle_L$ is a (discrete) orthonormal basis\footnote{Such a basis exists for {\it any} Hilbert space; see e.g.\ \cite{RS}.  Thus, even at this point, we have had no need to assume that any of our Hilbert spaces are separable, though in any physical context it would be natural to assume that this is the case, especially in each subsector ${\cal H}_{B \sqcup B}^\mu$.} for   ${\cal H}_{B \sqcup B,L}^\mu$, and where the right-hand side defines the operation $\Tr_{\mu}$ on our von Neumann algebra.
Similarly,
 for all positive $a \in {\cal A}_{R, \mu}^B$ we have
\begin{equation}
\label{eq:Rtr}
{\cal C}_\mu \tr (a) = \Tr_{\mu}(a) : = \sum_{i} {}_R\langle i | a | i\rangle_R,
\end{equation}
where $|i \rangle_R$ is a (discrete) orthonormal basis for   ${\cal H}_{B \sqcup B,R}^\mu$.  In particular, since the left and right algebras and Hilbert spaces are isomorphic, and since the trace on positive operators is invariant under this isomorphism (i.e., the trace on the right algebra acts in just the same way as the trace on the left algebra), the constants ${\cal C}_\mu$ in \eqref{eq:Ltr} are identical to the constants ${\cal C}_\mu$ in \eqref{eq:Rtr}.

The trace inequality \eqref{eq:TrInvN} can be seen to constrain the values of the constants ${\cal C}_\mu$.   In particular, let $P$ be a one-dimensional projection onto a state in  ${\cal H}_{B \sqcup B,L}^\mu$.  Then $\Tr_{\mu}(P)=1$, so
$\tr(P) = 1/{\cal C}_\mu$.  But we saw in section \ref{subsec:typeI} that any non-zero projection must have $\tr(P) \ge 1$.  Thus we see that the constants ${\cal C}_\mu$ satisfy
\begin{equation}
\label{eq:Cmubound}
{\cal C}_\mu \le 1, \ \ \ \forall \mu.
\end{equation}

However, there are also further constraints on the constants ${\cal C}_\mu$.
To see this, recall that section \ref{subsec:exTrIn} derived the trace inequality implying \eqref{eq:Cmubound} by considering a standard consequence (the Cauchy-Schwarz inequality) of positivity of the inner product on the Hilbert space sector ${\cal H}_{B \sqcup B \sqcup B \sqcup B}$ associated with 4 copies of the boundary $B$.  It is thus natural to ask if further information about the $C_\mu$ can be obtained by considering Hilbert spaces associated with even more copies of $B$.

This turns out to be the case.  To proceed, note that by Corollaries \ref{cor:triprel} and \ref{cor:tenprod} from section \ref{subsec:exTrIn}, for any allowed $B$ and any operators $a_1, \dots, a_n \in {\cal A}^B_L$ with finite $\tr(a_i^\dag a_i)$ there is a state that we may call $|a_1, \dots, a_n\rangle$ in the Hilbert space  ${\cal H}_{\sqcup_{i=1}^{2n} B}$ associated with $n$ copies of the boundary $B \sqcup B$.  Furthermore, the Cauchy-Schwarz inequality used to derive the trace inequality \eqref{eq:TrInvN} follows in the standard way from positivity of the norm squared of the state $|a_{L_1, R_1}, b_{L_2, R_2} \rangle - |a_{L_2, R_1}, b_{L_1, R_2} \rangle$ in ${\cal H}_{\sqcup_{i=1}^{2n} B}$ with $n=2$.  One would thus like to investigate the positivity of analogous totally anti-symmetric combinations of states for general $n$.

For simplicity, we do so here only for the simple case when all operators $a_1, \dots, a_n$ agree and where they are equal to a finite-dimensional projection $P$.   In particular, we will establish the following lemma which strengthens the bound \eqref{eq:trPbound}:
\begin{lemma}
\label{lemma:Pquant}
For any non-zero finite-dimensional projection $P \in {\cal A}_L^B$, the trace $\tr(P)$ is a positive integer.
\end{lemma}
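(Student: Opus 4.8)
The plan is to extract the integrality of $\tr(P)$ from the positivity of the norm of a totally antisymmetrized state built from $n$ copies of $P$, for every $n \in \mathbb{Z}^+$. Since $P$ is finite-dimensional, $\tr(P)$ is finite, so Corollaries \ref{cor:triprel} and \ref{cor:tenprod} let me form, for each permutation $\sigma \in S_n$, a state $|\Psi_\sigma\rangle$ in the $2n$-boundary Hilbert space ${\cal H}_{\sqcup_{i=1}^{2n} B}$ obtained by placing a copy of $P$ on each of the $n$ blocks $B \sqcup B$, with the right labels fixed as $R_1, \dots, R_n$ but the left labels permuted as $L_{\sigma(1)}, \dots, L_{\sigma(n)}$. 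I would then consider the antisymmetric combination $|\Psi_{\rm anti}\rangle := \sum_{\sigma \in S_n} \mathrm{sgn}(\sigma)\, |\Psi_\sigma\rangle$ and impose $\langle \Psi_{\rm anti} | \Psi_{\rm anti}\rangle \ge 0$, exactly the $n=2$ case of which produced the trace inequality in section \ref{subsec:tr1}.

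The core computation is the Gram matrix $\langle \Psi_\sigma | \Psi_\tau\rangle$. Gluing bra to ket identifies matching left and right labels just as in section \ref{subsec:tr1}, so the resulting closed source manifold decomposes into one connected trace-loop per cycle of the permutation $\pi := \sigma^{-1}\tau$. Along each such loop the factors $P$ and $P^\star$ alternate, and using $P^\star = P$ together with $P^2 = P$ every loop collapses to a single factor $\tr(P)$. Hence $\langle \Psi_\sigma|\Psi_\tau\rangle = (\tr P)^{c(\pi)}$, where $c(\pi)$ denotes the number of cycles of $\pi$. Writing $x := \tr(P)$ and summing, and using $\mathrm{sgn}(\sigma)\mathrm{sgn}(\tau) = \mathrm{sgn}(\sigma^{-1}\tau)$, I obtain $\langle \Psi_{\rm anti}|\Psi_{\rm anti}\rangle = n! \sum_{\pi \in S_n} \mathrm{sgn}(\pi)\, x^{c(\pi)} = n!\, x(x-1)(x-2)\cdots(x-n+1)$, the last equality being the standard signed-Stirling generating-function identity (equivalently $\mathrm{sgn}(\pi) = (-1)^{n-c(\pi)}$ combined with $\sum_\pi y^{c(\pi)} = y(y+1)\cdots(y+n-1)$).

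Positivity of this norm squared for all $n$ then finishes the argument. We already know $x = \tr(P) \ge 1$ from \eqref{eq:trPbound}, and $x < \infty$ since $P$ is finite-dimensional. If $x$ were not an integer, set $m = \lceil x \rceil$, so that $m-1 < x < m$, and take $n = m+1$: then the falling factorial $\prod_{j=0}^{m}(x-j)$ has exactly one negative factor $x-m$ and otherwise positive factors, making $\langle \Psi_{\rm anti}|\Psi_{\rm anti}\rangle < 0$, a contradiction. Therefore $\tr(P)$ must be an integer, and with $\tr(P)\ge 1$ this gives $\tr(P) \in \mathbb{Z}^+$.

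I expect the one genuinely technical step to be making the construction of $|\Psi_\sigma\rangle$ and the Gram-matrix formula rigorous, rather than the combinatorics or the final positivity argument. As in the proof of the Lemma above, $P$ is only a strong-operator-topology limit of surface operators $\hat a_{\nu,L}$, so each $|\Psi_\sigma\rangle$ must be realized as a Hilbert-space limit of states built from finite rimmed surfaces, and one must verify (via the same Cauchy-sequence estimates and the freedom to interchange the order of the various limits, exactly as in \eqref{eq:PSI1Cauchy}--\eqref{eq:P1P2ip}) both that these limits exist and that the inner products converge to $(\tr P)^{c(\pi)}$. Once this bookkeeping is in place --- a direct, if lengthy, generalization of the $n=2$ analysis already carried out --- the remainder is immediate.
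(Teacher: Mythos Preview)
Your proposal is correct and follows essentially the same strategy as the paper: both form the totally antisymmetrized state from $n$ copies of $P$ in ${\cal H}_{\sqcup_{i=1}^{2n} B}$ and use positivity of its norm, arriving at the falling factorial $n!\,x(x-1)\cdots(x-n+1)$ with $x=\tr(P)$. The only cosmetic difference is that the paper computes this norm recursively via $\langle P^{\wedge n}|P^{\wedge n}\rangle = n\big(\tr(P)-(n-1)\big)\langle P^{\wedge(n-1)}|P^{\wedge(n-1)}\rangle$, whereas you evaluate the full Gram matrix $\langle \Psi_\sigma|\Psi_\tau\rangle = (\tr P)^{c(\sigma^{-1}\tau)}$ in one shot and then invoke the signed Stirling-number identity; iterating the paper's recursion of course reproduces exactly your closed form.
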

\begin{proof}
To begin,  recall that our trace is normal and semifinite.  These properties imply that for any non-zero positive operator (and thus any non-zero projection $P$) there is some non-zero positive operator $Q$ with $P-Q$ positive such that $Q$ has finite non-zero trace.  In particular, for any one-dimensional projection $P$ on a Hilbert space, any $Q$ with $P-Q$ positive must annihilate states orthogonal to the image of $P$.  As a result, $Q = \alpha P$ for some real number $\alpha$ with $1 \ge \alpha > 0$.  As a result, $\tr(P) = \alpha^{-1} \tr (Q)$ must be finite, showing that any one-dimensional projection does indeed have a finite trace.  The same is then necessarily true for any finite-dimensional projection $P$ by linearity.

It is then useful to define the notation
\begin{equation}
|P^{\otimes n}\rangle : = |P_{L_1 R_1}, \dots ,P_{L_n R_n} \rangle.
\end{equation}
For any permutation $\pi$ on $n$ labels we may also define the (left) $n$-boundary swap operator ${\cal S}^L_\pi$ which acts on ${\cal H}_{\sqcup_{i=1}^{2n} B}$ by permuting the labels of the $n$ left boundaries as dictated by $\pi$. We then wish to consider the norm of the state
\begin{equation}
\label{eq:Pwedge}
|P^{\wedge n}\rangle := \sum_{\pi} (-1)^\pi {\cal S}^L_\pi |P^{\otimes n}\rangle,
\end{equation}
where the sum is over all $n$-object permutations and $(-1)^\pi$ is $+1$ ($-1$) for even (odd) permutations $\pi$.

To compute this norm, it will be useful to write \eqref{eq:Pwedge} in the form
\begin{equation}
\label{eq:Pwedge2}
|P^{\wedge n}\rangle = \sum_{i=1}^n  (-1)^{\delta_{i,1}-1}{\cal S}_{L_1,L_i} \left|P_{L_1,R_1}, P^{\wedge (n-1)}_{L_2,R_2,\dots, L_n, R_n} \right\rangle
\end{equation}
where $\delta_{i,1}$ is a Kronecker $\delta$ symbol and,  in analogy with section \ref{subsec:exTrIn}, we have defined the operators ${\cal S}_{L_1,L_i}$ that swap the labels on the 1st and $i$th left boundaries (so that, in particular, ${\cal S}_{L_1,L_1}$ is the identity). The norm-squared of \eqref{eq:Pwedge2} is

\begin{eqnarray}
\label{eq:Pwedgecomp}
\langle P^{\wedge n}| P^{\wedge n}\rangle &=& \sum_{i,j=1}^n  (-1)^{\delta_{i,1}+\delta_{j,1}} \Braket{P_{L_1,R_1}, P^{\wedge (n-1)}_{L_2,R_2,\dots, L_n, R_n} | {\cal S}_{L_1,L_i}{\cal S}_{L_1,L_j} |P_{L_1,R_1}, P^{\wedge (n-1)}_{L_2,R_2,\dots, L_n, R_n} } \cr
&=& n \langle P | P \rangle \langle P^{\wedge (n-1)} | P^{\wedge (n-1)} \rangle - n (n-1) \langle  P^{\wedge (n-1)} | P^{\wedge (n-1)} \rangle \cr
 &=& n \big( \langle P | P \rangle - (n-1) \big) \langle  P^{\wedge (n-1)} | P^{\wedge (n-1)} \rangle,
\end{eqnarray}
where we pass from the first to the second line by using that, when $i$, $j$, and $1$ are all distinct, we may write
\begin{eqnarray}
&& \Braket{P_{L_1,R_1}, P^{\wedge (n-1)}_{L_2,R_2,\dots, L_n, R_n} | {\cal S}_{L_1,L_i}{\cal S}_{L_1,L_j} |P_{L_1,R_1}, P^{\wedge (n-1)}_{L_2,R_2,\dots, L_n, R_n} } \cr
&=& \Braket{P_{L_1,R_1}, P^{\wedge (n-1)}_{L_2,R_2,\dots, L_n, R_n} | {\cal S}_{L_i,L_j}{\cal S}_{L_1,L_i} |P_{L_1,R_1}, P^{\wedge (n-1)}_{L_2,R_2,\dots, L_n, R_n} } \cr
&=& -\Braket{P_{L_1,R_1}, P^{\wedge (n-1)}_{L_2,R_2,\dots, L_n, R_n} | {\cal S}_{L_1,L_i} |P_{L_1,R_1}, P^{\wedge (n-1)}_{L_2,R_2,\dots, L_n, R_n} } \cr
&=& -\Braket{P^{\wedge (n-1)}_{L_2,R_2,\dots, L_n, R_n} | P_{L_i} P_{L_i}^\dag |P^{\wedge (n-1)}_{L_2,R_2,\dots, L_n, R_n} } \cr
&=& -\Braket{P^{\wedge (n-1)}_{L_2,R_2,\dots, L_n, R_n} |P^{\wedge (n-1)}_{L_2,R_2,\dots, L_n, R_n} } = -\langle  P^{\wedge (n-1)} | P^{\wedge (n-1)} \rangle.
\end{eqnarray}
Here in going to the 3rd line, we used that $\left|P^{\wedge (n-1)}_{L_2,R_2,\dots, L_n, R_n}\right\rangle$ is odd under ${\cal S}_{L_i,L_j}$ when $i\ne j$. The 4th line then follows from a derivation similar to how we derived \eqref{eq:P1P2ip2}; see again figure \ref{fig:4bswap}. In particular, the swap operator ${\cal S}_{L_1,L_i}$ on the 3rd line effectively pulls the $P_{L_1,R_1}$ in the bra and ket into the middle of the 4th line, where they now act at $L_i$. In going to the 5th line, we used that $\left|P^{\wedge (n-1)}_{L_2,R_2,\dots, L_n, R_n}\right\rangle$ is invariant under any $P_{L_i}^\dag =P_{L_i}$ since $P^2=P$.

Now \eqref{eq:Pwedgecomp} must be non-negative. But $\langle P^{\wedge (n-1)} | P^{\wedge (n-1)} \rangle$ is also non-negative. Therefore, for any $n\in {\mathbb Z}^+$ with $n>1$, unless $ | P^{\wedge (n-1)} \rangle = 0$, we must have
\begin{equation}
\label{eq:endP}
n-1 \le \langle P | P \rangle = \tr(P^2) = \tr(P).
\end{equation}
But finiteness of $\tr(P)$ means that \eqref{eq:endP} must fail for some $n$.

The above argument then requires the state
$ |P^{\wedge (n-1)} \rangle$ to vanish for such $n$.
On the other hand, \eqref{eq:Pwedgecomp} also calculates the norm of $ |P^{\wedge n} \rangle $ recursively in terms of the norms of the states $ |P^{\wedge m} \rangle$ with $m < n$.  We thus see that this norm can vanish only if $\tr(P) = \tr(P^2) = \langle P | P \rangle$ is a non-negative integer, so that the trace of any non-zero finite-dimensional projection $P$ must be some positive integer $n$.
\end{proof}

We may now apply Lemma \ref{lemma:Pquant} to a one-dimensional projection $P$ onto a state in some ${\cal H}_{B \sqcup B,L}^\mu$.  Writing $\tr(P)=n$, we then have
\begin{equation}
1 = \Tr_{\mu}(P)= {\cal C}_\mu \tr(P)= n{\cal C}_\mu,
\end{equation}
which requires each constant ${\cal C}_\mu$ to be of the form
\begin{equation}
\label{eq:quantC}
{\cal C}_\mu^{-1} = n_\mu  \in \mathbb{Z}^+.
\end{equation}

The quantization condition \eqref{eq:quantC} allows us to give a particularly nice physical and mathematical description of our trace $\tr$. For $n \in \mathbb{Z}^+$, let us introduce the $n$-dimensional Hilbert spaces ${\cal H}_n$.  We may then define the extended Hilbert space factors
\begin{eqnarray}
\widetilde {\cal H}_{B \sqcup B,L}^\mu &: =& {\cal H}_{B \sqcup B,L}^\mu \otimes {\cal H}_{n_\mu}, \cr
\widetilde {\cal H}_{B \sqcup B,R}^\mu &: =& {\cal H}_{B \sqcup B,R}^\mu \otimes {\cal H}_{n_\mu},
\end{eqnarray}
and the modified summed Hilbert space
\begin{equation}
\label{eq:BBtilde}
\widetilde {\cal H}_{B \sqcup B} :=\bigoplus_{\mu \in {\cal I}}  \, \left(\widetilde {\cal H}_{B \sqcup B,L}^\mu \otimes \widetilde {\cal H}_{B \sqcup B,R}^\mu \right).
\end{equation}
Let us also define the trace $\widetilde{\Tr}_\mu$ on the algebra ${\cal B}\left(\widetilde {\cal H}_{B \sqcup B,L}^\mu\right)$ of bounded operators on $\widetilde {\cal H}_{B \sqcup B,L}^\mu$ by the standard Hilbert space sum
\begin{equation}
\widetilde {\Tr}_{\mu} (\tilde a)  := \sum_{\tilde i} {}_L \langle\tilde i | \tilde a | \tilde i \rangle{}_L
\end{equation}
in terms of an orthonormal basis $|\tilde i\rangle_L$ of $\widetilde {\cal H}_{B \sqcup B,L}^\mu$ for any $\tilde a \in {\cal B}\left(\widetilde {\cal H}_{B \sqcup B,L}^\mu\right)$.  Then any $a \in {\cal A}^B_{L, \mu}$ defines an operator
\begin{equation}
\tilde a : = \left(a \otimes \mathbb{1}_{n_\mu} \right)
\end{equation}
on $\widetilde {\cal H}_{B \sqcup B,L}^\mu= {\cal H}_{B \sqcup B,L}^\mu \otimes {\cal H}_{n_\mu}$ and satisfies the relation
\begin{equation}
\label{eq:tildeTrreln}
\tr (a) = \widetilde{\Tr}_\mu(\tilde a) =  \sum_{\tilde i}{}_L \langle\tilde i | \tilde a | \tilde i \rangle{}_L.
\end{equation}
Furthermore, the operators $\tilde a$ again define a faithful representation of ${\cal A}^B_{L, \mu}$ on $\widetilde {\cal H}_{B \sqcup B,L}^\mu$.  We have thus found a representation of our trace $\tr$ in terms of a standard Hilbert space sum over diagonal matrix elements\footnote{We have done this here separately for each $\mu$, but we are free to combine the operations ${\Tr}_\mu$ by summing over $\mu$.  This will be discussed further in section \ref{subsec:SfromA}.}.    Physically, one might say that our trace $\tr$ gives the Hilbert space trace in a context where there are `hidden sectors' ${\cal H}_{n_\mu}$ on which operators in ${\cal A}^B_{L, \mu}$ act as the identity.

We now make a final further comment by again using Corollary \ref{cor:tenprod}.  In particular, let us consider  the tensor product ${\cal H}_B \otimes {\cal H}_B$ of two Hilbert spaces that are each associated with a single copy of $B$.  Corollary \ref{cor:tenprod} then guarantees that ${\cal H}_B \otimes {\cal H}_B$ is a
subspace of the Hilbert space ${\cal H}_{B \sqcup B}$ associated with two copies of $B$.  Furthermore, it is clear that any $a \in {\cal A}^B_L$ must act on this space as some $a_L \otimes \mathbb{1}_{{\cal H}_B}$, and analogously for elements of ${\cal A}^B_R$.  As a result, this ${\cal H}_B \otimes {\cal H}_B$ must be one of the terms ${\cal H}_{B \sqcup B}^\mu = {\cal H}_{B \sqcup B,L}^\mu \otimes {\cal H}_{B \sqcup B,R}^\mu$ in the decomposition \eqref{eq:HBBdecomp}.  Let us denote this term by {$\mu=\otimes$.  Then in particular we have
\begin{equation}
{\cal H}_{B \sqcup B,L}^{\otimes} = {\cal H}_B,  \ \ \ {\rm and} \ \ \  {\cal H}_{B \sqcup B,R}^{\otimes} = {\cal H}_B.
\end{equation}

Assuming that ${\cal H}_B$ is not empty, this observation will also tell us that the trace normalization factor ${\cal C}_{\otimes} = 1/n_{\otimes}$ associated with this subspace must be $1$.  To see this, consider any $a \in \U{Y}^d_{B}$ (associated with just one copy of $B$) that is not in the null space ${\cal N}_{B}$.  This $a$ of course defines a non-zero state $|a\rangle \in {\cal H}_B$.  For convenience, let us normalize $a$ so that $\langle a | a \rangle =1$.  From this $a$ we can of course construct $a \sqcup a^* \in \U{Y}^d_{B} \otimes \U{Y}^d_{B} \subset \U{Y}^d_{B\sqcup B}$, which then defines an operator $\hat P_{a,L} := \widehat{(a \sqcup a^*})_L \in \hat A^{B \sqcup B}_L$.  We use the notation $\hat P_{a,L}$ because a short computation (see figure \ref{fig:Pacomp}) yields
\begin{equation}
\hat P_{a,L}^\dagger = \hat P_{a,L},  \ \ \ {\rm{and}} \ \ \ \hat P_{a,L}^2 = \hat P_{a,L},
\end{equation}
showing that it is a projection. We also find
\begin{equation}
\hat P_{a,L} \left( |c\rangle \otimes |d\rangle \right) =  \left( \langle a |c\rangle\right) \left( | a\rangle \otimes |d\rangle \right), \quad \forall |c\rangle \otimes |d\rangle \in {\cal H}_B \otimes {\cal H}_B.
\end{equation}
Furthermore, it is clear that for any $b \in \U{Y}^d_{B\sqcup B}$ (with two $B$-boundaries), the state $\hat P_{a,L} |b\rangle = |a\rangle \otimes b^t| a^*\rangle$ lies in ${\cal H}^\otimes_{B \sqcup B} = {\cal H}_B \otimes {\cal H}_B$; see again figure \ref{fig:Pacomp}.  Self-adjointness then requires that $\hat P_{a,L}$ annihilate the orthogonal complement of ${\cal H}^{\otimes}_{B \sqcup B}$.  In particular, we see that any such operator is of the form
\begin{equation}
\hat P_{a,L} =  \left(|a\rangle \langle a| \otimes \mathbb{1}_{{\cal H}_B} \right)  P_{\otimes},
\end{equation}
where $P_{\otimes}$ is the projection onto the product sector ${\cal H}^\otimes_{B \sqcup B} = {\cal H}_B \otimes {\cal H}_B$ in \eqref{eq:HBBdecomp}.
\begin{figure}[t]
        \centering
\includegraphics[scale=0.78]{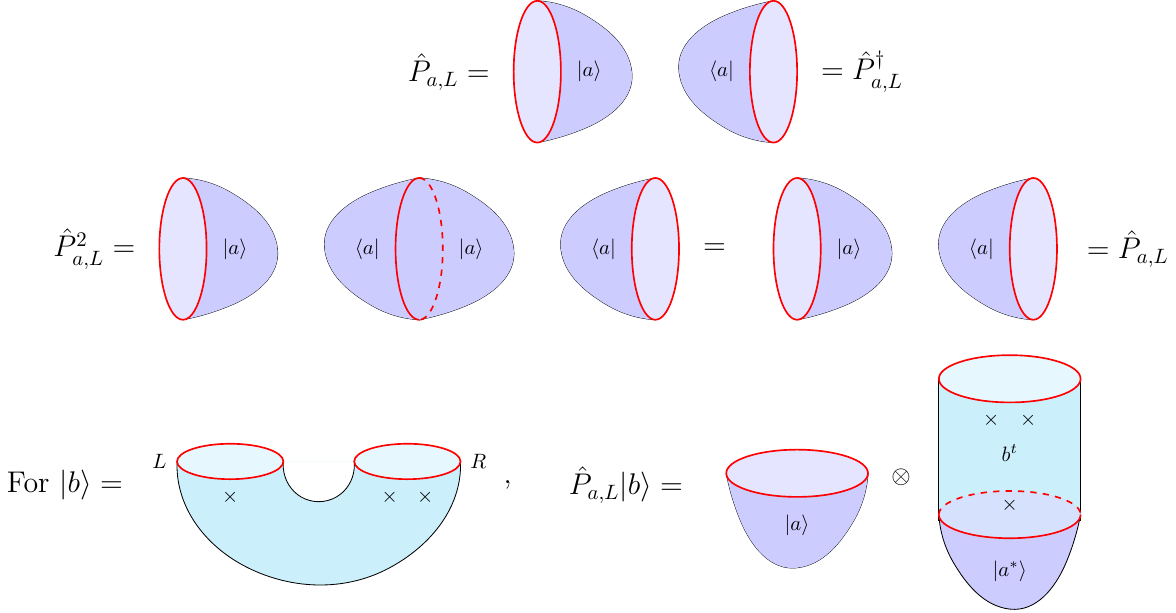}\caption{ Consider any $a \in \U{Y}^d_{B}$ (associated with the `one-boundary' state $|a\rangle$) that is normalized so that $\langle a | a \rangle=1$.
 The operator $\hat P_{a,L}$ is a projection since it is manifestly self-adjoint (top line) and squares to itself (second line).  Furthermore, for any $|b\rangle \in H_{B \sqcup B}$ (not necessarily in $H_B \otimes H_B$), we have $\hat P_{a,L} |b\rangle = |a\rangle \otimes b^t| a^*\rangle$ (bottom line).   Thus $\hat P_{a,L} |b\rangle$ clearly lies in ${\cal H}_B \otimes {\cal H}_B$.}
\label{fig:Pacomp}
\end{figure}
As a result, the trace operation $\Tr_{\otimes}$ defined by ${\cal H}_{B \sqcup B,L}^{\otimes} = {\cal H}_B$
is well-defined on $\hat P_{a,L}$ and we find
\begin{equation}
\Tr_{\otimes} (\hat P_{a,L} ) = \langle a | a \rangle = \tr(\hat P_{a, L}),
\end{equation}
where the last step is again apparent from figure \ref{fig:Pacomp}.   We must thus also have ${\cal C}_{\otimes}=1 = n_\otimes$ as claimed above.

\subsection{Hidden sectors and entropy from algebras}
\label{subsec:SfromA}

As noted above, the extended Hilbert space $\widetilde {\cal H}_{B \sqcup B}$ is mathematically useful.  We now turn to the question of whether this space may be physically useful as well.

To do so, let us suppose for the moment that the $B\sqcup B$ sector of the physical Hilbert space of our quantum gravity theory were actually of the form \eqref{eq:BBtilde}, and in particular that it contained factors ${\cal H}_{n_\mu}$ in some $\widetilde {\cal H}_{B \sqcup B}^\mu$ such that all of the operators defined by our path integral act trivially on these
${\cal H}_{n_\mu}$.   Note that we suppose this to be true not only for operators in our algebras ${\cal A}^B_L$ and ${\cal A}^B_R$, but also for any operator on ${\cal H}_{B \sqcup B}$ defined by any surface in $Y^d_{B \sqcup B \sqcup B \sqcup B}$  (i.e., defined by any surface that has four $B$-boundaries), and in fact even for operators defined by surfaces in $Y^d_{B \sqcup B \sqcup \tilde B}$ for general $\tilde B$ that map ${\cal H}_{B \sqcup B}$ to other sectors of the Hilbert space.   So long as these are the only observables to which we have access, we will never know that such `hidden sectors' actually exist. Furthermore, none of our observables would be able to change the parts of the state associated with such hidden sectors no matter how hard we might try.

Indeed, let us suppose that for each $\mu \in {\cal I}$ (i.e., for each term ${\cal H}_{B \sqcup B}^\mu$ in the decomposition \eqref{eq:HBBdecomp}) there is a preferred (normalized) maximally entangled state $|\chi_\mu\rangle \in {\cal H}_{n_\mu} \otimes {\cal H}_{n_\mu}$.  We can use such states to map any state $|\psi\rangle \in {\cal H}_{B \sqcup B}$ isometrically to the state
\begin{equation}
|\tilde \psi\rangle : = \bigoplus_{\mu \in {\cal I}} \   \left( |\chi_\mu\rangle \otimes P_\mu | \psi\rangle \right) \in \widetilde  {\cal H}_{B \sqcup B}.
\end{equation}
This is the relation that arises when our original ${\cal H}_{B \sqcup B}$ defines a code subspace of
$\widetilde{\cal H}_{B \sqcup B}$ for a quantum error correcting code\footnote{\label{foot:sum}More precisely, it is a sum of such codes, since even our `physical Hilbert space' $\widetilde{\cal H}_{B \sqcup B}$ does not necessarily factorize, but is only required to be a direct sum of such factors.} with two-sided recovery associated with the algebras ${\cal A}^B_L$ and ${\cal A}^B_R$ \cite{Harlow:2016vwg}.  In particular, if we call the above isometric embedding $\chi: {\cal H}_{B \sqcup B} \rightarrow \widetilde{\cal H}_{B \sqcup B} $, then $\chi$ can be used to translate operators on ${\cal H}_{B \sqcup B}$ into operators that act on the image of $\chi$ in
$\widetilde  {\cal H}_{B \sqcup B}$.

The insertion of the maximally entangled state $|\chi_\mu\rangle$ will clearly lead to differences in quantitative measures of left-right entanglement as defined by the Hilbert spaces ${\cal H}_{B \sqcup B}$ and $\widetilde {\cal H}_{B \sqcup B}$.  The two descriptions of the Hilbert space thus also clearly lead to different notions of entropy.  This is in particular familiar from the discussion of \cite{Harlow:2016vwg}.  Our setting here is slightly more general than that of \cite{Harlow:2016vwg} (see again footnote \ref{foot:sum}), so we will postpone writing detailed formulas until we define further notation and terminology.

In the theory of von Neumann algebras one can introduce a notion of entropy whenever one has a faithful normal semifinite trace $\tr$ on (positive elements of) a von Neumann algebra ${\cal A}$.  For positive $a \in {\cal A}$ with $\tr(a) =1$, one may attempt to define
\begin{equation}
\label{eq:VNentropy}
S_{vN}(a) : = \tr(-a\ln a).
\end{equation}
Since $a$ is a positive bounded (and thus self-adjoint) operator on some Hilbert space, the operator $a \ln a$ can be defined using the spectral representation of $a$, and since it is bounded, it must also lie in ${\cal A}$.  In a truly general von Neumann algebra the operator $-a\ln a$ need not be positive, which means that \eqref{eq:VNentropy} is not obviously well-defined for all such positive $a$.  But \eqref{eq:TrInvN} and \eqref{eq:triprel} imply that the operator norm $\|a\|$ of $a$ is bounded by $\sqrt{\tr(a^\dag a)} = \sqrt{\tr(a^2)}$ which is in turn bounded by  $\sqrt{(\tr\, a)^2}=1$, so in our context $-a\ln a$ is  positive and \eqref{eq:VNentropy} is well-defined so long as we allow it to take the value $+\infty$.

On the other hand, in physics we typically wish to discuss entropies defined by states, which in the present context we take to mean (normalized) pure states $|\psi\rangle$ in some Hilbert space ${\cal H}$.  The connection to the above entropy for positive elements of a von Neumann algebra ${\cal A}$ acting on ${\cal H}$ is of course through the concept of a density matrix which, in the general context, is more properly called a density operator.  The point is that any physics encoded in $|\psi\rangle$ that can be extracted using observables in ${\cal A}$ is described by the expectation values $\langle \psi | a | \psi \rangle$ for $a \in {\cal A}$.  In particular, since ${\cal A}$ is closed under the product operation, this includes all possible correlation functions (which are described by the case $a = a_1 a_2 \dots a_n$).
In fact, since any $a$ can be written in terms of its self-adjoint and anti-self-adjoint parts, and since self-adjoint operators can be written in a standard spectral representation in terms of projections onto their spectral intervals, it suffices to know $\langle \psi | a | \psi \rangle$ for all projections $a \in {\cal A}$.  We will be slightly more general than this below, but it will still be useful to henceforth restrict discussion of such expectation values to positive operators $a \in {\cal A}$.

Suppose now that we are given a (faithful normal semifinite) trace $\tr$ on ${\cal A}$. For the purposes of computing such expectation values, we can replace $|\psi\rangle$ with a density operator $\rho_\psi \in {\cal A}$ if we can find a positive $\rho_\psi$ such that for all positive $a \in {\cal A}$ we have
\begin{equation}
\langle \psi | a | \psi \rangle = \tr\left( \rho_\psi^{1/2}a   \rho_\psi^{1/2} \right),
\end{equation}
where the positive square root $\rho_\psi^{1/2}$ of the positive operator $\rho_\psi$ can as usual be defined using a spectral decomposition.
In familiar physics contexts the cyclic property of the trace would be used to write the right-hand side as $\tr(\rho_\psi a)$, but our trace does not allow this since it is defined only on positive operators and $\rho_\psi a$ need not be positive.  When such a $\rho_\psi$ exists, we can use \eqref{eq:VNentropy} to define an entropy on ${\cal A}$ for the state $|\psi\rangle \in {\cal H}$.

Let us now apply this discussion to states $|\psi\rangle$ in one of our diagonal Hilbert space sectors ${\cal H}_{B \sqcup B}$.  Then we in fact have {\it three} potentially useful notions of traces on e.g.\ ${\cal A}^B_L$.  The first is the trace $\tr$ defined by \eqref{eq:alttrace5}.  The second is defined by the collection of Hilbert space traces $\Tr_\mu$. Here we make use  of the fact that the decomposition of \eqref{eq:Algdecomp} of ${\cal A}^B_L$ in terms of factors allows us to analogously decompose any $a \in {\cal A}^B_L$ as $a = \oplus_{\mu \in {\cal I}} \  a_\mu$.  We may thus define a trace on ${\cal A}^B_L$ by simply summing the traces $\Tr_\mu$ over $\mu$, writing
\begin{equation}
\label{eq:BigTr}
\Tr (a) : = \sum_{\mu \in {\cal I}} \ \Tr_\mu(a_\mu).
\end{equation}

We could also have chosen to insert some additional positive coefficient $f(\mu)$ that changes the weights assigned to each $\mu$ in \eqref{eq:BigTr}, but we have explicitly chosen {\it not} to do so in defining $\Tr$.  However, our third trace $\widetilde{\Tr}$ is equivalent to such a reweighting of \eqref{eq:BigTr} and is defined by the expression
\begin{equation}
\label{eq:BigTrtilde}
\widetilde{\Tr} (a) : =  \sum_{\mu \in {\cal I}} \ \widetilde{\Tr}_\mu(a_\mu) =  \sum_{\mu \in {\cal I}} \ n_\mu \Tr_\mu(a_\mu),
\end{equation}
where the $n_\mu$ are the positive integers defined in section \ref{subsec:normoftrace}.
As noted in that section, for any $a_\mu$ we in fact have $\tr(a_\mu) = \widetilde {\Tr}_\mu(a_\mu)$, which by the linearity of $\tr$ means that our first trace  $\tr$ is identical on ${\cal A}^B_L$ to our third trace $\widetilde{\Tr}$.  Nevertheless, it will be useful to continue to use both symbols  $\tr$ and $\widetilde{\Tr}$  below to allow us to emphasize different conceptual points of view.

Now, given a normalized state $|\psi \rangle \in {\cal H}_{B \sqcup B}$, we can use \eqref{eq:HBBdecomp} to write $|\psi\rangle =  \sum_{\mu \in {\cal I}} \ |\psi_\mu \rangle$ with $|\psi_\mu \rangle \in {\cal H}_{B \sqcup B}^\mu$.  Then for any $a \in {\cal A}^B_L$ we have
\begin{equation}
\label{eq:aexmusum}
\langle \psi |a | \psi \rangle = \sum_{\mu \in {\cal I}} \  \langle \psi_\mu |a | \psi_\mu \rangle.
\end{equation}
Let us now ask whether we can construct a density operator $\rho_\psi$ such that the expectation values \eqref{eq:aexmusum} are reproduced using the $\Tr$ operation through
\begin{equation}
\label{eq:Trrho}
\langle \psi |a | \psi \rangle = \Tr(\rho_\psi^{1/2}a \rho_\psi^{1/2}).
\end{equation}
Working in any given $\mu$-sector, the factorization of the Hilbert space ${\cal H}_{B \sqcup B}^\mu$ into ${\cal H}_{B \sqcup B,L}^\mu$ and ${\cal H}_{B \sqcup B,R}^\mu$, and the fact that ${\cal A}^B_{L,\mu}$ is precisely the algebra of bounded operators on ${\cal H}_{B \sqcup B,L}^\mu$, mean that we can do this via the usual computation that considers the operator
$| \psi_\mu \rangle\langle \psi_\mu|$ on ${\cal H}_{B \sqcup B}^\mu$  and then traces over ${\cal H}_{B \sqcup B,R}^\mu$; i.e., we can define $\rho^\mu_\psi$ as an operator on ${\cal H}_{B \sqcup B,L}^\mu$ by giving a formula for its matrix elements between states $|\alpha\rangle_L$ and $|\beta \rangle_L$ in  ${\cal H}_{B \sqcup B,L}^\mu$. To do so, it will be useful to introduce an orthonormal basis $|i\rangle_R$ for  ${\cal H}_{B \sqcup B,R}^\mu$, and to use the notation
\begin{equation}
|\alpha,i\rangle{}_{LR}: = |\alpha \rangle_L \otimes |i\rangle_R
\end{equation}
for the tensor product of any $|\alpha \rangle_L \in {\cal H}^\mu_{B\sqcup B, L}$ and the basis state $|i\rangle_R$.  We will also introduce $p_\mu = \langle \psi_\mu | \psi_\mu \rangle$. Working in those $\mu$ sectors where $p_\mu \ne 0$, we take the matrix elements of $\rho_\psi^\mu$ between the above states to be
\begin{equation}
\label{eq:rhopsimatrixelements}
{}_{L}\langle \alpha | \rho_\psi^\mu |\beta \rangle_L := (p_\mu)^{-1} \sum_i {}_{LR}\langle \alpha,i| \psi_\mu \rangle  \langle \psi_\mu |\beta, i\rangle{}_{LR}.
\end{equation}
Normalizability of $|\psi_\mu\rangle$ implies this sum to converge, and $\rho_\psi^\mu$ is positive since its expectation value in any state is given by setting $\alpha=\beta$ in \eqref{eq:rhopsimatrixelements}, in which case the right-hand side is a sum of non-negative terms.  We also clearly have $\Tr_\mu (\rho_\psi^\mu)=1$, so $\rho_\psi^\mu$ is bounded.    Defining
\begin{equation}
\rho_\psi = \bigoplus_{\mu \in {\cal I}} \  \left(p_\mu \, \rho_\psi^\mu\right)
\end{equation}
then yields \eqref{eq:Trrho} in the usual way as desired.

The above construction of $\rho_\psi \in {\cal A}^B_L$ using the Hilbert space trace $\Tr$ may seem natural.  But one can of course repeat precisely the same construction using the state $|\tilde \psi \rangle = \chi |\psi\rangle$ in the enlarged Hilbert space $\widetilde {\cal H}_{B \sqcup B}$ that includes the hidden sectors ${\cal H}_{n_\mu}$.  In this case we use the trace $\widetilde \Tr$ to write the final result in the form
\begin{equation}
\label{eq:tildeTrrho}
\langle \psi |a | \psi \rangle = \widetilde\Tr(\tilde \rho_\psi^{1/2}a \tilde \rho_\psi^{1/2}),
\end{equation}
where $\tilde \rho_\psi =  \oplus_{\mu \in {\cal I}} \  \left(p_\mu \, \tilde \rho_\psi^\mu\right)$ for  $\tilde \rho_\psi^\mu$ defined by
\begin{equation}
\label{eq:tilderhopsimatrixelements}
{}_{L}\langle \tilde \alpha | \tilde \rho_\psi^\mu |\tilde \beta \rangle_L := (p_\mu)^{-1} \sum_{\tilde i} {}_{LR}\langle \tilde \alpha,\tilde i| \tilde \psi_\mu \rangle  \langle \tilde \psi_\mu |\tilde \beta, \tilde i\rangle{}_{LR},
\end{equation}
where $|\tilde \alpha\rangle_L, |\tilde \beta\rangle_L$ are states in $\widetilde{\cal H}_{B\sqcup B,L}^\mu$, the states $|\tilde i \rangle_R$ form an orthonormal basis for
$\widetilde{\cal H}_{B\sqcup B,R}^\mu$, and $|\tilde \psi_\mu \rangle$ is a state in $\widetilde{\cal H}_{B\sqcup B}^\mu$ defined by the decomposition $|\tilde \psi\rangle =  \sum_{\mu \in {\cal I}} \ |\tilde \psi_\mu \rangle$.  In analogy with our previous notation, we have defined $|\tilde \alpha, \tilde i\rangle{}_{LR} : = |\tilde \alpha\rangle_L \otimes |\tilde i \rangle_R$.  Furthermore, we emphasize that the probabilities
\begin{equation}
p_\mu = \langle \tilde \psi_\mu | \tilde \psi _\mu \rangle = \langle \psi_\mu | \psi_\mu \rangle
\end{equation}
are identical to those used in \eqref{eq:rhopsimatrixelements} due to the fact that our map $\chi: {\cal H}_{B \sqcup B} \rightarrow \widetilde{\cal H}_{B \sqcup B}$ is an isometry.

Comparing \eqref{eq:Trrho} and \eqref{eq:tildeTrrho}, and recalling that the traces $\Tr$ and $\widetilde{\Tr}$ do {\it not} agree, we find that despite -- and one might even say, because of -- agreement between the left-hand sides of
\eqref{eq:Trrho} and \eqref{eq:tildeTrrho}, the density operators $\rho_\psi$ and $\tilde \rho_\psi$ will generally represent distinct elements of ${\cal A}^B_L$.  As a result, using $\rho_\psi$ and $\Tr$ to define the entropy of $|\psi\rangle$ via \eqref{eq:VNentropy} generally leads to different results than using $\tilde \rho_\psi$ and $\widetilde{\Tr}$.  We emphasize here that while the definition of $\tilde \rho_\psi$  used the state $|\tilde \psi \rangle$ as an intermediate step, this
$|\tilde \psi\rangle$  was constructed from $|\psi\rangle$ using the isometric embedding $\chi$, so
$\tilde \rho_\psi$ is still uniquely determined by the original state $|\psi \rangle$.

Of course, since $\widetilde{\Tr}$ and $\tr$ are identical functions on ${\cal A}^B_L$, we may choose to write \eqref{eq:tildeTrrho} in the form
\begin{equation}
\label{eq:tildeTrrho2}
\langle \psi |a | \psi \rangle = \tr(\tilde \rho_\psi^{1/2}a \tilde \rho_\psi^{1/2}).
\end{equation}
We may correspondingly use \eqref{eq:VNentropy} with the path integral trace $\tr$ to define a notion of von Neumann entropy $S_{vN}^L$ associated with the left algebra ${\cal A}^B_L$ for general (normalized) states $|\psi \rangle \in {\cal H}_{B \sqcup B}$.  We may also perform the standard computation to relate the total entropy to the average of entropies in each $\mu$-sector:
\begin{equation}
\label{eq:trS}
S_{vN}^L(\psi) = \tr(-\tilde \rho_\psi \ln \tilde \rho_\psi) = \sum_{\mu \in {\cal I}} \ p_\mu  \, \tr(-\tilde \rho^\mu_\psi \ln \tilde \rho^\mu_\psi) - \sum_{\mu \in {\cal I}} \ p_\mu \ln p_\mu.
\end{equation}
Here the superscript ${}^L$ emphasizes that while the state $|\psi\rangle$ is pure, we are considering a notion of entropy associated only with the left algebra ${\cal A}^B_L$.

The important points of our discussion above are that there is a $\tilde \rho_\psi$ that for $a\in {\cal A}^B_L$ correctly computes expectation values in the original state $|\psi\rangle$, and that \eqref{eq:trS} can be represented as a standard entropy defined by `tracing out the right tensor factor' of each $\mu$-sector of the Hilbert space $\widetilde {\cal H}_{B \sqcup B}$.  Again, the Hilbert space $\widetilde {\cal H}_{B \sqcup B}$ is not a tensor product of Hilbert spaces, though it could be written as a direct sum of such spaces so that the full density matrix $\tilde \rho_\psi$ is defined by a direct sum over the $\tilde \rho^\mu_\psi$.  As a result, there is a natural `entropy of mixing' contribution given by the last term on the right-hand side of \eqref{eq:trS}.

Having defined the entropy $S_{vN}^L$ for general states $|\psi \rangle$, we would now like to discuss this entropy in a theory that admits a bulk semiclassical limit described by a familiar theory of gravity (say, Einstein-Hilbert or Jackiw-Teitelboim plus perturbative corrections).  In general, such a limit will give a good description only of appropriately semiclassical states $|\psi\rangle$.  With this as motivation, let us thus consider a normalized $|\psi\rangle$ defined by the path integral with boundary conditions given by some single smooth source-manifold-with-boundary $\psi$ in the sense that it is an element of $Y^d_{B \sqcup B}$ multiplied by a normalization constant to ensure $\langle\psi |\psi\rangle=1$.  We can of course easily extend the analysis to the case of finite linear combinations described by $\U{Y}^d_{B \sqcup B}$ so long as the number of terms and the coefficients remain fixed in the desired semi-classical limit, but we expect general states $|\psi\rangle$ to be more difficult to study using semiclassical techniques.

When $|\psi\rangle$ is defined by such a boundary-source-manifold $\psi$, we can use \eqref{eq:tildeTrrho2} to show that $\tilde \rho_\psi$ is the operator in ${\cal A}^B_L$ defined by the boundary-source-manifold $\psi\psi^\star$. In $\hat{A}_{L}$ we might call this operator $\widehat{(\psi \psi^\star)}_L$, though we will refer to it below as simply $\psi\psi^\dagger$ since we wish to regard it as a member of
${\cal A}^B_L$.
 \begin{figure}[t]
        \centering
\includegraphics[width=0.35\linewidth]{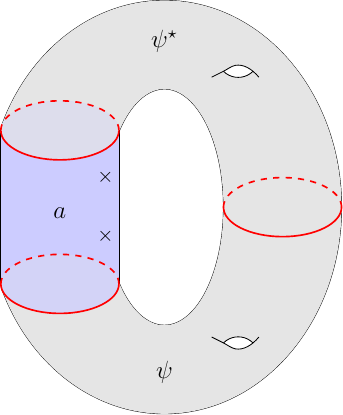}\caption{When $a$ and $\psi$ are both simple two-boundary surfaces, the path integral shown computes both $\tr (\psi^\star a \psi)$  and $\langle \psi |a |\psi\rangle$.  For such $a,\psi$, this shows that $\tilde \rho_\psi = \widehat{(\psi \psi^\star)}_L$ satisfies \eqref{eq:tildeTrrho2} since the former also agrees with $\tr(\tilde\rho_\psi^{1/2} a \tilde\rho_{\psi}^{1/2})$.}
\label{fig:rhopsi}
\end{figure}
This identification will not be a surprise to most readers, as when $a \in Y^d_{B \sqcup B}$ the argument in figure \ref{fig:rhopsi} establishes \eqref{eq:tildeTrrho2}.  However, for completeness we define $\tilde \rho_\psi := \psi \psi^\dagger$ and present the following more general argument that holds for any positive $a \in {\cal A}^B_L$ (which we write in the form $a=bb^\dagger$):
\begin{eqnarray}
\label{eq:showtilderhopsi}
\tr \left( \tilde \rho_\psi^{1/2}b   b^\dagger \tilde \rho_\psi^{1/2} \right) &=& \tr \left( b^\dagger \tilde \rho_\psi b \right) \cr
&=& \tr \left( b^\dagger \widehat{\psi}_L \widehat{\psi^\star}_L b \right) \cr
&=& \tr \left( \widehat{\psi^\star}_L b b^\dagger \widehat{\psi}_L \right) \cr
&=& \lim_{\beta \downarrow 0} \langle \tilde C_{\beta} |\widehat{\psi^\star}_L b b^\dagger \widehat{\psi}_L| \tilde C_{\beta}  \rangle \cr
&=& \lim_{\beta \downarrow 0} \langle \psi \tilde C_{\beta} | b b^\dagger|\psi \tilde C_{\beta}  \rangle \cr
&=&\langle \psi | b b^\dagger|\psi \rangle.
\end{eqnarray}
In this argument, the 1st and 3rd steps use cyclicity of the trace \eqref{eq:cyclicvN} on the von Neumann algebra, the 4th step uses \eqref{eq:alttrace52}, and the final step uses the facts that the states $|\psi \tilde C_{\beta}  \rangle = \widehat{\tilde C_{\beta}}_R |\psi \rangle$ converge to $|\psi \rangle$ (according to Lemma \ref{lemma:Cnorm} and Corollary \ref{cor:Cto1} of appendix \ref{app:lemmas}) and that the operator $b b^\dagger$ is bounded.  Comparing the beginning and end of \eqref{eq:showtilderhopsi} shows that $\tilde \rho_\psi$ satisfies \eqref{eq:tildeTrrho2} as desired, and thus that it is the correct density operator for the state $|\psi\rangle$.

The important consequence of this observation is that for all $n \in \mathbb{Z}^+$ we may also write
\begin{equation}
\label{eq:gravrep}
\tr(\tilde \rho_\psi^n) = \zeta\left(M(\left[\psi \psi^\star\right]^n)\right),
\end{equation}
where $(\psi \psi^\star)^n$ is just the product (say, using the left product $\cdot_L$) of $n$ copies of $\psi \psi^\star$.  In this form we see that the traces on the left-hand side of \eqref{eq:gravrep} are computed by applying what may be called the `gravitational replica trick' to $\psi \psi^\star$; see figure \ref{fig:gravrep}.
\begin{figure}[ht!]
        \centering
\includegraphics[width=0.4\linewidth]{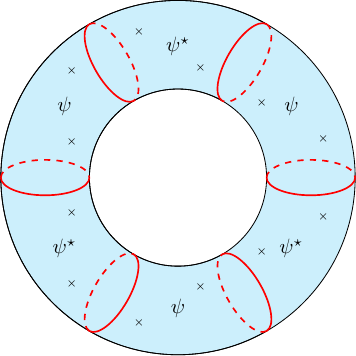}\caption{ The manifolds $M(\left[\psi \psi^\star\right]^n)$ that define path integrals computing $\tr(\tilde \rho_\psi^n)$ are constructed by cyclically gluing together $n$ alternating copies of $\psi$ and $\psi^\star$. The case shown here has $n=3$.}
\label{fig:gravrep}
\end{figure}

As a result, if such path integrals admit an appropriate semiclassical limit described by saddles of either Einstein-Hilbert or Jackiw-Teitelboim gravity with perturbative corrections, then we may argue as in \cite{Lewkowycz:2013nqa} that in this limit the von Neumann entropy \eqref{eq:trS} is given by the Ryu-Takayanagi entropy $A[\gamma]/4G$, where $\gamma$ is the minimal surface homologous to the left $B$ in the bulk saddle that dominates the path integral defined by $M(\psi \psi^\star)$.  This limiting result then receives perturbative corrections from quantum effects in the bulk as described in \cite{Faulkner:2013ana,Engelhardt:2014gca,Dong:2017xht}, and from higher derivatives terms in the classical action as described in \cite{Dong:2013qoa,Camps:2013zua,Miao:2014nxa}}. This representation of the Ryu-Takayanagi entropy as the semiclassical limit of an entropy defined by standard Hilbert space operations on $\widetilde{H}_{B \sqcup B}$ is the second key result of this work.

\section{Examples}
\label{sec:ex}

Let us now discuss examples of theories that satisfy the axioms of section \ref{subsec:ax} and which illustrate the above results.
A large class of examples is provided by boundary CFTs via the AdS/CFT correspondence; in this context, the CFT partition function provides a $\zeta(M)$ that satisfies all our axioms. However, $\zeta(M)$ defined in this way is generally not written directly in the language of bulk gravitational variables. Thus below we will discuss examples in which $\zeta(M)$ is defined directly in the bulk.
We will focus on models that are known to exist, which means in practice that they  must be extremely simple.  We will thus consider only two-dimensional bulk systems\footnote{It is worth noting that the toy models studied in \cite{Benini:2022hzx} can be viewed as three-dimensional bulk examples where $\zeta(M)$ is defined by a Chern-Simons theory (suitably gauged) and satisfies our axioms.}, defined either by the topological model of \cite{Marolf:2020xie} (see section \ref{subsec:top} below) or by an appropriate completion of Jackiw-Teitelboim (JT) gravity (see section \ref{subsec: JT}),   perhaps coupled to end-of-the-world (EOW) branes or matter defined by some quantum field theory (or by some proxy for such a QFT; see section \ref{subsec:JTm}). This discussion will be brief, but it illustrates how the decomposition \eqref{eq:HBBdecomp} and the associated hidden-sector dimensions $n_\mu$ can be non-trivial, as well as what the implications might be for understanding semiclassical entropy computations.  For such models it will be useful to note that any codimension-2 boundary $B$ is a zero-dimensional manifold, which means that it is a discrete collection of points.  We will focus on the case where this collection is finite so that the number of points is some $m\in \mathbb{Z}^+$ which we will call `the number of boundaries'.

An important point, however, is that taking a strict semiclassical limit of such models generally leads to algebras that are {\it not} of type I, or at least that contain continuous spectra for central operators.  This indicates that there is some sense in which one can take a class of models that satisfy our axioms for finite values of their couplings and, by taking an appropriate limit, one can nevertheless arrive at models which violate our axioms.  The nature of such limits and the manner in which they violate our axioms will be discussed in section \ref{subsec:typeII}.

\subsection{2d topological gravity}
\label{subsec:top}

Let us begin with the topological gravity model of \cite{Marolf:2020xie}, first without EOW-branes.  Here the allowed closed source manifolds are disjoint unions of circles, and the topological nature of the model means that the path integral depends on the number of circles but is completely independent of their lengths. The associated source-manifolds-with-boundary are thus unions of line segments.  In particular, they always have an even number of boundary points so that the $m$-boundary sectors
with $m$ odd are all empty.  This is in particular true for the one-boundary Hilbert space ($m=1$).
Thus, if we use $B$ to denote a single point, then ${\cal H}_B = \emptyset$ and, as a consequence, the product sector ${\cal H}^{\otimes}_{B\sqcup B}
= {\cal H}_B \otimes {\cal H}_B$ is empty as well.

It is important to recall that the framework described above applies separately to each baby universe superselection sector of such models.  Recall that for any such $\alpha$-
sector, there is exactly one state in the two-boundary sector
${\cal H}_{B\sqcup B}$. This is the state defined by taking the path integral boundary conditions to be given by a line segment.   This state plays a role similar to the cylinders $C_\beta$ discussed above for higher dimensions, so we will denote the line-segment state of this topological model as simply $|C\rangle$.    Since the model is topological, changing the length of the line segment does not affect the state.  And since there are no matter fields in the model, all line segments of a given length are diffeomorphic to each other.  However the norm $\langle C | C\rangle$ depends on the choice of $\alpha$-sector, and in fact turns out to fully characterize any $\alpha$-sector in this simple model \cite{Marolf:2020xie}.

Since the bulk theory has a one-dimensional two-boundary Hilbert space ${\cal H}_{B\sqcup B}$, all operators on this space are proportional to the identity operator.
This is in particular true of both the left and right von Neumann algebras ${\cal A}_L^B$ and ${\cal A}_R^B$.  The Hilbert space ${\cal H}_{B\sqcup B}\cong \mathbb{C}$ then factorizes in a trivial way according to $\mathbb{C} = \mathbb{C} \otimes \mathbb{C},$ where ${\cal A}^B_L$ and ${\cal A}^B_R$ are indeed both isomorphic to the (rather trivial) algebra of all bounded operators on $\mathbb{C}$.

Despite the uniqueness of $|C\rangle$ in ${\cal H}_{B\sqcup B}\cong \mathbb{C}$, the entropy of $|C\rangle$ defined by \eqref{eq:trS} is non-zero.
In the language of section \ref{subsec:normoftrace}, we may note that $C^2=C =  C^\dagger$ (since all of these are line segments and the length of the segment is irrelevant) so that $C$ is a projection. By the argument of that section we must then have $\tr(C) = n$ for some $n \in \mathbb{Z}^+$, but $n$ is not generally equal to unity\footnote{The requirement $\tr(C) \in \mathbb{Z}^+$ was not noticed in \cite{Marolf:2020xie} and provides further constraints on the allowed couplings beyond those listed in that work. These constraints will be discussed further in forthcoming work.}.  It is thus the rescaled cylinder $c = C/n$ that has unit trace, though we see that $c^2 = C^2/n^2 = c/n$ is not a projection.  In particular,
since $\ln c = \ln C - \ln n$ and $C \ln C =0$ (as is always the case for a projection), the normalized state $n^{-1/2}|C\rangle$ has left density matrix $\tilde \rho = C/n = c$ for which the entropy is
$S_{vN} = \tr(-c \ln c) = \ln n$.  This entropy can be reproduced by embedding our one-dimensional Hilbert space in ${\cal H}_n \otimes {\cal H}_n$ by mapping $n^{-1/2}|C\rangle$ to ${n}^{-1/2}|C\rangle \otimes |\chi \rangle$ for some normalized maximally entangled state $|\chi \rangle$.  In other words, this model provides an explicit example where the hidden sectors of section \ref{subsec:SfromA} are required to give a strict Hilbert space interpretation of what might here be called the Ryu-Takayanagi entropy.

One way to make the model more interesting is to add some number $k$ of flavors of end-of-the-world branes.  This case was also discussed in \cite{Marolf:2020xie}. End-of-the-world branes lead to a non-trivial one-boundary sector ${\cal H}_B$, where the dimension of ${\cal H}_B$ is $\min(k,n)$ with $n=\tr(C)$ (this $\tr(C)$ was called $d$ in \cite{Marolf:2020xie}). Now consider the two-boundary sector ${\cal H}_{B\sqcup B}$.  For $k\ge n$ {\it  all} two-boundary states lie in the tensor product sector:  ${\cal H}_{B\sqcup B} = {\cal H}_B \otimes {\cal H}_B$.  But for $k < n$ there is precisely one new state in ${\cal H}_{B\sqcup B}$ that does not lie in the tensor product sector.  It is given by the part of $|C\rangle$ that is orthogonal to all two-boundary states defined by having two end-of-the-world branes. The decomposition \eqref{eq:HBBdecomp} thus takes the two-term form
\begin{equation}
 {\cal H}_{B\sqcup B} = \left({\cal H}_B \otimes {\cal H}_B\right) \oplus {\cal H}^\perp_{B \sqcup B},
\end{equation}
with ${\cal H}^\perp_{B \sqcup B} \cong \mathbb{C}$. Of course, we can again make the trivial tensor product decomposition ${\cal H}^\perp_{B \sqcup B} \cong \mathbb{C} = \mathbb{C} \otimes \mathbb{C}$ for this sector of the Hilbert space.  The story of entropy in this sector is then similar to what occurs in the topological model without EOW branes discussed above, though the hidden sectors are now of dimension $n-k$.

\subsection{Pure asymptotically-AdS JT gravity}
\label{subsec: JT}

Another example to consider is the UV-completion of pure JT gravity.
Again the
allowed closed source manifolds are disjoint unions of circles, though now the path integral does depend on their lengths.
There are no one-boundary states in this model, and because JT gravity has no local degrees of freedom there is a basis of 2-boundary states of the form $|E,E\rangle$, so that the left and right energies $E$ necessarily  agree in all states.

Furthermore, all operators in our algebras are again defined by line segments, though now we find a distinct operator for every possible length of the boundary.  For a segment of length $\beta$ we may call this operator $e^{-\beta H}$.  As a result, the left von Neumann algebra ${\cal A}^B_L$ is now the abelian algebra defined by bounded functions of the Hamiltonian $H$.  This means that the factors in \eqref{eq:HBBdecomp} are labeled by the allowed energies $E$, and the algebra contains a separate factor for each value of $E$. We can thus use the eigenvalues $E$ to label sectors ${\cal H}^\mu_{B \sqcup B}$ in our general decomposition \eqref{eq:HBBdecomp}.   We may thus replace all labels $\mu$ with $\mu=E$ below.   Each of the associated factors is again just the trivial algebra $\mathbb{C}$, which is also known as the type $I_1$ factor.  Each ${\cal H}^E_{B \sqcup B}$ is the correspondingly-trivial one-dimensional Hilbert space of states proportional to $|E, E\rangle$.  At finite values of the couplings, we have shown that our axioms require the set of $\mu$ labels to be discrete, which means that in any given baby universe $\alpha$-sector the spectrum of $H$ must be discrete as well.

However, this model has a semiclassical limit in which we can compute Ryu-Takayanagi-like entropies.  Such entropies can be studied in microcanonical thermofield-double-like states defined by some window of energy eigenvalues $[E_1,E_2]$.    Such computations are not semiclassical in the regime where the window contains only one eigenvalue $E$, but they can be semiclassical when this window is relatively large.  As a result, semiclassical methods are not sufficient to compute the integers $n_E$ associated with a given value of $E$.  But if we assume these integers to be small (or, at least, not exponentially large), we find that the dominant contribution to a large RT entropy must come from the final entropy-of-mixing term in \eqref{eq:trS}.  A large RT entropy then indicates that the probabilities $p_\mu$ are exponentially small, and thus that there is an exponential density of energy states even if hidden sectors are not included in the Hilbert space.

One might also ask what the construction of ensembles dual to JT gravity tells us about potential bulk hidden sectors of the form described in section \ref{subsec:SfromA}.  Recall that the present work has focused  on a single member of such an ensemble rather than the ensemble as a whole, and that the detailed properties of the model can differ from one member of the ensemble to another.  We have in fact already seen this dependence on the member of the ensemble in discussing the topological model of \cite{Marolf:2020xie} in which the dimension $n$ of the hidden sectors was given by $\tr\, C$, so that such sectors are trivial (and thus unnecessary) in members of the ensemble with $\tr\, C=1$.

To discuss the issue for JT, we should in fact understand that the construction of the dual ensemble may require extra information that is not obviously present in the original bulk theory; see again the discussions in \cite{Stanford:2019vob,Johnson:2019eik,Johnson:2020exp,Johnson:2020mwi,Johnson:2021owr,Johnson:2021zuo,Johnson:2022wsr,Johnson:2023ofr,
Turiaci:2023jfa}.  It is thus better to simply assume that we are given a dual matrix ensemble and to then attempt to interpret this ensemble in bulk terms.

In fact, for reasons explained above, in the present context we should assume that we are simply given a {\it single dual matrix}.  Now, as described previously, from the bulk point of view there is only one observable that can be studied in JT gravity.  This observable is the energy, which is the bulk dual of the given matrix.  But the quantities that can actually be studied in the bulk are the eigenvalues of the matrix so, as was already noted, each such eigenvalue thus uniquely determines a (one-dimensional) Hilbert space sector ${\cal H}^E_{B \sqcup B}$ (where each $B$ is a single point), or equivalently a unique normalized bulk state $|E,E\rangle$.

On the other hand, a basis for the Hilbert space of the dual matrix theory is given by the full set of eigen{\it vectors} of the matrix.  For a generic diagonalizable matrix, the eigenvalues can be used to label its eigenvectors so that there is no harm in calling the matrix eigenvectors $|E, E\rangle$ as well (since here the second $E$ in $|E,E\rangle$ is a redundant label that is not even in principle allowed to differ from the first).  However, when the matrix has degenerate eigenvalues such a labelling cannot be complete.  Instead, we must introduce an additional degeneracy index (say, $I_E$) for any degenerate eigenvalue, in which case the dual matrix eigenvectors might be denoted $|E, E, I_E\rangle$.  Note that on the JT gravity side of the duality there is simply no way to resolve the distinction between states labelled by distinct values of the index $I_E$.

In order to agree with the bulk Gibbons-Hawking entropy computation, such degeneracy indices $I_E$ can range only over some finite number of values $n_E$ for each $E$. The addition of the index $I_E$ is thus precisely equivalent to saying that the space of matrix eigenvectors with eigenvalue $E$ is given by the tensor product $\tilde {\cal H}^E_{B \sqcup B}:=  {\cal H}^E_{B \sqcup B} \otimes {\cal H}_{n_E}$ of a hidden sector ${\cal H}_{n_E}$ with the bulk sector  ${\cal H}^E_{B \sqcup B}$ constructed by the path integral.  It is reasonable to expect that non-trivial hidden sectors often arise through such accidental degeneracies.  However, as is true in the current case, when this occurs such degeneracies should arise only in  a measure zero subset of the dual ensemble and may thus be safely neglected.

\subsection{JT gravity with ``matter''}
\label{subsec:JTm}

Finally, it will be useful to discuss JT gravity coupled to quantum matter fields, or at least a simple toy model thereof.  Since simple path integrals for JT coupled to quantum fields do not define finite partition functions \cite{Saad:2019lba}, we  will proceed here by simply writing down by fiat algebras that we deem plausible for UV-completions of such models.

In particular, let us consider a putative UV-complete model in which there are again no one-boundary states, but where the two boundary states are now spanned by $|E,E'\rangle$ for some discrete set of values for $(E, E')$; note that discreteness is guaranteed by the finiteness of {$\tr (C_\beta) = \widetilde \Tr(C_\beta)$. In particular, $E$ is allowed to differ from $E'$.  Let us also assume that the left von Neumann algebra ${\cal A}^B_{L}$ includes all bounded functions of the left Hamiltonian, and that it contains operators that can change any $|E_1,E'\rangle$ to any $|E_2, E'\rangle$.  We will refer to such operators as matter operators.  In this case, we can introduce Hilbert spaces ${\cal H}_{L}, {\cal H}_{R}$ which respectively have bases $\{|E\rangle\}$, $\{|E'\rangle\}$.  These are not the one-boundary Hilbert space $ {\cal H}_{B}$, as ${\cal H}_B$ has already been declared to be empty.  Nevertheless, we see explicitly that ${\cal H}_{B \sqcup B} =  {\cal H}_{L} \otimes {\cal H}_{R}$.  We also see that ${\cal A}^B_{L}$ can change any state $|\alpha\rangle \otimes |\beta\rangle$ into any other tensor product state $|\alpha'\rangle \otimes |\beta\rangle$ with some other $|\alpha'\rangle$ and the same $|\beta\rangle$.  As a result, ${\cal A}^{B}_L$ acts on the entire Hilbert space as
the algebra of bounded operators ${\cal B}({\cal H}_{L})$  on ${\cal H}_{L}$.

Despite the explicit factorization of  ${\cal H}_{B \sqcup B}$, this Hilbert space is not what we called the product sector ${\cal H}_B \otimes {\cal H}_B$, since the product sector is trivial in this model.  As a result, the trace $\tr(P)$ of a one-dimensional projection $P$ can in principle be any positive integer $n \in \mathbb{Z}^+$, though here it must in fact be the same positive integer $n$ for all such projections $P$.  This integer would need to be computed in any given model, and $n >1 $ would suggest that the model be augmented by the addition of hidden sectors.

For any such $n$ this model yields a single type I factor.  As a result, entropy in \eqref{eq:trS} comes entirely from the {\it first} term and does not involve any entropy of mixing.  The pure JT case, where (up to contributions from hidden sectors) the entropy was entirely given by the entropy-of-mixing term in \eqref{eq:trS}, might thus seem sharply different.  But this distinction is not really so large in the sense that (assuming the energy eigenvalues to be non-degenerate and the number of such eigenvalues involved to be large in comparison with the above integer $n$) in the theory with `matter' we can replace $\tr$ by $\widetilde \Tr$ and then compute this trace in the energy basis to find an entropy-of-mixing-like formula $S_{vN}^L(\psi) \approx - \sum_E p_E \ln p_E$, where $p_E$ is the probability of finding the system to have left-energy $E$.   We thus see that while the two terms in \eqref{eq:trS} are distinct in a given model, small changes in the model can move a given physical contribution from one term to another.  This should not really be a surprise as the spectrum of $\mu$-sectors is generally defined by the spectra of operators in the center ${\cal Z}$ of (say) the left algebra, and one might think that -- much as in our discussion of degenerate eigenvalues for pure JT gravity -- the existence of any non-trivial central operators at all requires a fine tuning to set commutators to zero.

\subsection{Axiom violations in semiclassical limits}
\label{subsec:typeII}
As noted in the introduction to the current section, semiclassical gravity
generally leads to algebras that are not of type I, or at least that contain continuous spectra
for central operators.  Let us suppose that such models arise from limits of UV-complete models at finite couplings, and let us also suppose that such finite-coupling  UV-complete models are to satisfy our axioms.  Then there must be a sense in which one can take a class
of models that satisfy our axioms for finite values of their couplings and, by taking an
appropriate limit, one can nevertheless arrive at models which violate our axioms.   It is useful to describe how such violations arise in the context of definite simple models.

Let us therefore consider again either pure JT gravity or our imagined UV-complete theory of JT gravity coupled to quantum fields. JT gravity contains a parameter $S_0$ that controls the semiclassical Gibbons-Hawking entropy of the ground state, and which weights contributions to the path integral by $e^{S_0 \chi}$ where $\chi$ is the Euler characteristic of the spacetime. Taking the limit $S_0\rightarrow \infty$ thus suppresses contributions from higher topologies.  In this limit, in the case with quantum fields, Penington and Witten argued that the left von Neumann algebra on ${\cal H}_{B\sqcup B}$ is of type II \cite{Penington:2023dql}.  For pure JT gravity  the von Neumann algebra is an abelian algebra whose factors are necessarily of type I, though in the semiclassical limit the spectrum of the central operators becomes continuous\footnote{A model of pure 3d gravity (with restricted bulk topology) having a similar semiclassical algebraic structure was recently studied in \cite{Chua:2023ios}.} (see again \cite{Penington:2023dql}).  Either of these results would be forbidden by our analysis, so in both cases the limit must violate at least one of our axioms.

There are in fact at least 4 different ways that one might attempt to discuss the large $S_0$ limit of a JT-gravity theory.  We now discuss each of them in turn, though the discussion of each will be quite short.

The first approach is simply to keep $S_0$ finite, but to take it to be larger than other quantities of interest.  In this case, the results of \cite{Penington:2023dql} would tell us only that the algebra when coupled to matter is {\it approximately} of type II, or that the spectra for pure JT are {\it approximately} continuous in the sense that any spacing between energy levels is much smaller than other parameters of interest.  This, of course, does not require any actual violation of our axioms.  Instead, it suggests only that there is some `near violation' whose form will become clear by discussing the other approaches below.

The second approach is to note that discussions of semiclassical physics tend to focus on disk amplitudes, and to recall from the above discussion that the disk amplitude is weighted by $e^{S_0 \chi(disk)} = e^{S_0}$.  To keep the disk amplitude finite as $S_0 \rightarrow \infty$, one might thus rescale the gravitational path integral $\zeta$ by defining $\tilde \zeta_2 = e^{-S_0} \zeta$.   This rescaling will preserve most of our axioms, though it will necessarily violate the factorization axiom.  In particular, since $\zeta$ satisfies factorization we have
\begin{equation}
\label{eq:zetatildefact}
\tilde \zeta_2 (M_1 \sqcup M_2) = e^{-S_0} \zeta (M_1 \sqcup M_2) =  e^{-S_0} \zeta (M_1) \zeta(M_2) = e^{+S_0}  \tilde \zeta_2 (M_1)  \tilde \zeta_2 (M_2).
\end{equation}
Furthermore, this violation becomes arbitrarily strong in the limit $S_0 \rightarrow \infty$.  It is thus no surprise that the $S_0 \rightarrow \infty$ limit does not have the properties described in this work.

A further concern in the second approach just described is that only the single-disk amplitudes are finite in the limit $S_0 \rightarrow \infty$.  In particular, as one can see from \eqref{eq:zetatildefact}, amplitudes that involve larger numbers of disks will still diverge.  The third approach is designed to remedy this problem and to also maintain factorization.  To do so, we rescale the path integral by  $e^{-mS_0}$ where $m$ is the number of circles that define the boundary conditions; i.e., we define
$\tilde \zeta_3(M) = e^{-mS_0} \zeta(M)$ where $m$ is the number of connected components of the closed (and compact) boundary source-manifold $M$ and then consider the limit of $\tilde \zeta_3$ as $S_0 \rightarrow +\infty$.

While this third approach is more satisfactory with regard to both finiteness and factorization, it can run afoul of reflection positivity.  In particular, let us recall that the proof of the trace inequality \eqref{eq:TrInvN} involved positive-definiteness of the inner product on the four-boundary Hilbert space.  Let us then further recall that the relevant computation turned out to involve both path integrals with what in the current context would be $m=1$ boundary circle, as well as path integrals with $m=2$ boundary circles.  The computation is thus sensitive to the fact that, in defining $\tilde \zeta_3$, we have changed the relative weights between these two terms by a factor of $e^{-S_0}$.

In fact, for any computation that involves only disks, defining $\tilde \zeta_3(M) = e^{-mS_0} \zeta(M)$ is equivalent to simply using the original path integral $\zeta$ with $S_0=0$.  On the other hand, performing this rescaling and taking $S_0 \rightarrow \infty$ suppresses the contributions of all non-disk topologies, even though these would have made extremely important contributions to the original path integral $\zeta$ if we had in fact set $S_0 =0$.  In this way we see that reflection-positivity can easily fail for the rescaled path integral $\tilde \zeta_3$ even if it holds for the original path integral $\zeta$ at all finite values of $S_0$.  And, indeed, the argument leading to \eqref{eq:TrInvN} shows that the failure of the trace inequality for the rescaled path integral is directly equivalent to such reflection-positivity violations.

This then brings us to the fourth and final (and perhaps the most sensible) approach to discussing the large $S_0$ limit of JT gravity.  In the Hilbert space sector with $2k$ boundaries, if we really wish to keep all amplitudes finite without sacrificing reflection-positivity, then we should simply rescale the path integral by $e^{-kS_0}$; i.e., we might define $\tilde \zeta_{4, 2k} = e^{-kS_0} \zeta$. We emphasize here that the rescaling depends on the choice of codimension-2 boundary that defines the Hilbert space sector, and not directly on the number of spacetime boundaries that define the path integral.  We also emphasize that, as we saw explicitly in figure \ref{fig:4bndy}, computations in a given Hilbert space sector can involve path integrals with varying numbers of codimension-1 boundaries.  As a result,  as indicated by the notation $\tilde \zeta_{4, 2k}$, after performing this rescaling we no longer have a single path integral that defines the entire theory.  Instead, we have effectively separated the sectors of the Hilbert space associated with different numbers of boundaries and declared each to be its own separate theory.  We see that taking $S_0\rightarrow \infty$ will mean that the inner product in the $2k$-boundary sector is determined entirely by amplitudes with $k$ disks.  In particular, while the one-disk amplitudes will make finite contributions to the inner product in the two-boundary sector, their contributions to the inner product in any higher-boundary sector of the Hilbert space has been set to zero.  While this is a natural semiclassical treatment of the system, it clearly violates our first (and from some perspectives most trivial) axiom which simply states that all computations are controlled by the same path integral.  It should thus again be no surprise that type II behavior and/or continuous spectra for central operators can arise in the limit $S_0 \rightarrow \infty$.

We expect similar comments to apply to the $G \rightarrow 0$ limit of higher dimensional UV-complete theories, though in that case there is no clear analogue of the 2nd approach.

\section{Discussion}
\label{sec:disc}

For the convenience of the reader, the results of the somewhat lengthy preceding sections will now be briefly summarized in section \ref{subsec:summary} below.  We will then provide some further remarks concerning these results in section \ref{subsec:remarks}. Finally, we will conclude with a short discussion of open issues and future directions in section \ref{subsec:OIFD}.

\subsection{Summary}
\label{subsec:summary}

The work above considered the possibility that quantum theories of gravity admit UV-completions associated with objects that can be called `Euclidean path integrals'.   We took such objects to satisfy 5 simple axioms that we call finiteness, reality, reflection positivity, continuity, and factorization. The first of these axioms states that the path integral defines a map $\zeta$ to $\mathbb{C}$ from the space of smooth closed $d$-dimensional boundary-source-manifolds for some $d$. Here the bulk theory is thought of as being of some dimension $D > d$. The reality, reflection positivity, and factorization axioms were of the standard form.  In particular, the factorization axiom required $\zeta(M_1 \sqcup M_2) = \zeta(M_1) \zeta(M_2)$ for closed source-manifolds $M_1, M_2$. Although general gravitational path integrals may naively violate this assumption, when the 
path integral defines a positive-definite inner product general arguments suggest\footnote{The technical caveats were reviewed in section \ref{subsec:ax}.} that the theory will decompose into
so-called baby universe superselection sectors in which factorization is satisfied
\cite{Marolf:2020xie}, in which case our arguments apply sector-by-sector.
The remaining axiom of continuity was extremely weak and required only continuity in $\epsilon$ in contexts where it was possible to insert a `cylinder' $C_\epsilon$ of the form $B \times [0,\epsilon]$ into the boundary source manifold while maintaining smoothness of the path integral boundary conditions.
The details of the axioms were described in section \ref{subsec:ax}.

Section \ref{sec:PIHS} used these axioms to construct Hilbert space sectors ${\cal H}_B$ associated with closed boundary manifolds $B$ of dimension $d-1$.  In particular, the reflection positivity axiom implies that the inner product is positive-definite on all such ${\cal H}_B$ for any $B$ (whether or not $B$ is connected).  When $B$ is of the form $B=B_1 \sqcup B_2$, where $B_1,B_2$ are also closed and compact, section \ref{sec:algebras} then defined a surface algebra of operators $A^{B_1}_L$ acting at $B_1$ and a second algebra of operators $A^{B_2}_R$ acting at $B_2$.  The path integral also defined a trace operation on these algebras. Importantly, our axioms imply both algebras to be represented by {\it bounded} operators when acting on ${\cal H}_{B_1 \sqcup B_2}$.  This was shown to be a consequence of positivity of the inner product on the higher-boundary Hilbert spaces ${\cal H}_{B_1 \sqcup B_1 \sqcup B_1 \sqcup B_1}$ and ${\cal H}_{B_2 \sqcup B_2 \sqcup B_2 \sqcup B_2}$, which in particular implied the trace inequality \eqref{eq:TrIn1} recently discussed in \cite{Colafranceschi:2023txs}.

Since these representations involved only bounded operators, they could be completed to von Neumann algebras
${\cal A}^{B_1,B_2}_L$, ${\cal A}^{B_1,B_2}_R$.  Although the original algebra $A^{B_1}_L$ was independent of $B_2$, its von Neumann algebra completion ${\cal A}^{B_1,B_2}_L$ does generally depend on the $B_2$ that defines the ${\cal H}_{B_1 \sqcup B_2}$ used to construct the completion.  We analyzed only the diagonal case $B_1 = B_2 =B$, leaving the more general case for future work. In the diagonal case one can denote the algebras more simply as ${\cal A}^B_L$, ${\cal A}^B_R$. The above trace then admits an extension to a trace $\tr$ on the full von Neumann algebras ${\cal A}^{B}_L$ and ${\cal A}^{B}_R$ as shown at the end of section \ref{sec:algebras}.

Critically, section \ref{sec:typeI} showed that this extended trace also satisfies a trace inequality of the form \eqref{eq:TrInvN}.  Together with the results in appendix \ref{app:trprop}, this implies the extended trace to be faithful, normal, and semifinite. Using the trace inequality again then  immediately implied our (diagonal) von Neumann algebras to be of type I, meaning that they are direct sums of type I factors ${\cal A}^B_{L, \mu}$ (or ${\cal A}^B_{R, \mu}$).  The diagonal Hilbert space sectors ${\cal H}_{B \sqcup B}$ also decompose as direct sums
\begin{equation}
\label{eq:HBB2}
{\cal H}_{B \sqcup B} = \bigoplus_{\mu \in {\cal I}} \ {\cal H}^\mu_{B \sqcup B}
\end{equation}
over Hilbert spaces ${\cal H}^\mu_{B \sqcup B}$, each of which is necessarily a tensor product
 ${\cal H}_{B \sqcup B,L}^\mu \otimes {\cal H}_{B \sqcup B,R}^\mu$ on which ${\cal A}^B_{L, \mu}$  (${\cal A}^B_{R, \mu}$) acts non-trivially only on the left (right) factor. Deriving \eqref{eq:HBB2} also made additional use of the trace inequality in showing the index set ${\cal I}$ to be discrete; i.e., in showing that \eqref{eq:HBB2} contains only a direct sum and not a more general direct integral.
The decomposition \eqref{eq:HBB2} provides at least one sharp sense in which we can show that quantum gravity Hilbert spaces (at least those associated with a given value of $\mu$) factorize into products of Hilbert spaces associated with natural subsystems; see e.g.\ \cite{Giddings:2015lla,Donnelly2018,Giddings:2020yes,Giddings:2021khn} for discussion of related issues.

Using positivity of the inner product on the Hilbert space sectors associated with $2n$ copies of $B$, section \ref{subsec:normoftrace} generalized the argument that led to the trace inequality \eqref{eq:TrInvN} to show that the path-integral-trace $(\tr)$ of any non-zero finite-dimensional projection must be a positive integer.  As a result, for some  $n_\mu \in \mathbb{Z}^+$  it agrees on  ${\cal A}^B_{L, \mu}$ with the standard Hilbert space trace defined by summing diagonal matrix elements over an orthonormal basis in the {\it extended} Hilbert space $\widetilde {\cal H}_{B \sqcup B,L}^\mu = {\cal H}_{B \sqcup B,L}^\mu \otimes {\cal H}_{n_\mu}$, where ${\cal H}_{n_\mu}$ is a `hidden sector' Hilbert space of dimension $n_\mu$.

 As a result of the type I structure, the trace $\tr$ can be used to define a notion of `left entropy' (or entropy with respect to the left algebra ${\cal A}^B_L$) on {\it pure states} $|\psi\rangle \in  {\cal H}_{B \sqcup B}$.  Furthermore, due to the relation between $\tr$ and the Hilbert space traces on both $\widetilde {\cal H}_{B \sqcup B,L}^\mu$ and the corresponding right extended factor $\widetilde {\cal H}_{B \sqcup B,R}^\mu$, this entropy can be interpreted in terms of an entropy of mixing term together with the familiar entropies $S_{vN}^{\mu, L} : = \widetilde \Tr_\mu ( -\tilde \rho_\psi^\mu \, \ln \, \tilde \rho_\psi^\mu)$ defined by considering the projections $|\psi_\mu\rangle$ of $|\psi\rangle$ to ${\cal H}^\mu_{B \sqcup B}$, isometrically embedding a normalized version of
$|\psi_\mu\rangle$ in the extended Hilbert space $\widetilde {\cal H}^\mu_{B \sqcup B} = \widetilde {\cal H}_{B \sqcup B,L}^\mu \otimes \widetilde {\cal H}_{B \sqcup B,R}^\mu$, tracing out the right factor in the usual way to define the density matrix $\tilde \rho_\psi^\mu$, and then summing expectation values of $-\tilde \rho_\psi^\mu \, \ln \, \tilde \rho_\psi^\mu$ over an orthonormal basis of $\widetilde {\cal H}_{B \sqcup B,L}^\mu$.  The final result for the left entropy of $|\psi\rangle$ then takes the form
\begin{equation}
\label{eq:vNsummary}
S_{vN}^L(\psi) = \sum_{\mu \in {\cal I}} \ p_\mu \, S_{vN}^{\mu, L} - \sum_{\mu \in {\cal I}} \ p_\mu \, \ln \, p_\mu,
\end{equation}
where $p_\mu := \langle \psi_\mu | \psi_\mu\rangle$ is the norm of $|\psi_\mu\rangle$ in  ${\cal H}^\mu_{B \sqcup B}$.

We then observed at the end of section \ref{sec:typeI} that, if our theory admits an appropriate limit described by semiclassical bulk Einstein-Hilbert or Jackiw-Teitelboim gravity, the corresponding limit of \eqref{eq:vNsummary} is given by the Ryu-Takayanagi entropy of the left $B$ as defined by the corresponding bulk saddle.  Quantum and higher derivative corrections are of course also incorporated in the usual way.

This then provides what one might call a Hilbert space interpretation of the Ryu-Takayanagi formula. We emphasize that it uses the extended Hilbert space
\begin{equation}
\label{eq:BBtilde2}
\widetilde {\cal H}_{B \sqcup B} := \bigoplus_{\mu \in {\cal I}}  \ \left[ \left( {\cal H}_{B \sqcup B,L}^\mu \otimes {\cal H}_{n_\mu}  \right)\otimes \left( {\cal H}_{B \sqcup B,R}^\mu \otimes {\cal H}_{n_\mu} \right) \right].
\end{equation}
The factors ${\cal H}_{n_\mu}$ are naturally called `hidden sectors' since -- aside from their connection to the trace $\tr$ and the associated entropy -- they are invisible to the algebras of observables defined by the original path integral.

We emphasize that nowhere in this work did we require the existence of a local dual field theory.  Of course, the axioms of section \ref{subsec:ax} will be true when such a dual theory exists, but the existence of a dual formulation would also entail much more structure.  In particular, none of our axioms require any form of locality for a hypothetical dual formulation (beyond the rather weak constraints implied by the factorization axiom).  Indeed, with the possible exception of the factorization requirement, our axioms are generally accepted as properties that should be satisfied by any bulk Euclidean gravitational path integral.  And while the factorization axiom is less obvious from the bulk point of view, as reviewed in section \ref{subsec:ax} this axiom in fact follows from the others modulo certain mathematical subtleties.  While such subtleties remain to be further investigated in any given model, it is nevertheless plausible that our axioms hold for any Euclidean UV-completion of a theory of quantum gravity, whether it be called string field theory, spin-foam loop quantum gravity, or by some other name.    What we find interesting about the above construction is just how much structure can be obtained with the simple and limited Axioms \ref{ax:finite}-\ref{ax:factorize}.

The final section (section \ref{sec:ex}) above discussed both topological and JT gravity examples in order to illustrate general features of our construction.  In particular, they provided contexts in which the decomposition \eqref{eq:HBB2} is non-trivial in the sense  that we required more than one value of $\mu$.  These examples also featured cases with non-trivial hidden sectors, and the JT example illustrated the idea that such sectors can arise due to accidental degeneracies in the bulk description.  We also discussed the fact that defining a semiclassical limit of JT gravity by taking a strict limit $S_0 \rightarrow \infty$ leads to violations of our axioms, so that it is no surprise that \cite{Penington:2023dql} finds the semiclassical theory to have a type II algebra and/or continuous spectra for central operators.

\subsection{Remarks}
\label{subsec:remarks}

Before concluding, we wish to make a few further remarks.  The first of these concerns constraints on the decomposition \eqref{eq:HBB2} that can be deduced by requiring that the theory admit a familiar semiclassical limit.  In particular, at least when the boundary $B$ is a sphere or a torus, and when the system is coupled to
an external bath that can absorb radiation,
 standard semiclassical physics tells us that large black holes can evaporate at least until they are microscopically small (when quantum gravity effects then fail to be under strict control). This is the case even if the original black hole is a two-sided Kruskal extension of e.g.\ a large AdS-Schwarzschild black hole with its famous Einstein-Rosen bridge connecting two distinct asymptotic regions.  By the usual arguments, this should be proportional to the semiclassical limit of the state that we have called $|C_\beta \rangle$ (or, equivalently, proportional to $|\tilde C_\beta\rangle$) for inverse temperatures $\beta$ less than the critical value $\beta_{HP}$ set by the Hawking-Page transition \cite{HawkingPage}.

We now couple one side of our system -- say the left side $B$ -- to a non-gravitational bath. Thus, time evolution will mix the gravity and bath degrees of freedom and, in this sense, will then change the state of the bulk.
However, so long as this coupling is describable by a real-time version of our path integral,
the time evolution operator should be a unitary operator built from elements of ${\cal A}^{B}_L$ and operators acting on the bath.
Then for any subspace ${\cal H}^\mu_{B \sqcup B}$, since the projection $P_\mu$ onto this subspace is in the center and commutes exactly with all elements of ${\cal A}^{B}_L$, it will also commute with the time evolution operator even in the context of coupling to a bath.  Furthermore,  since the time evolution is unitary, it will preserve the probability $p_\mu = \langle \psi | P_\mu | \psi \rangle$ associated with any subspace ${\cal H}^\mu_{B \sqcup B}$, where $|\psi\rangle$ is the normalized state on the entire gravity-with-bath system.
The decay of a large black hole to a small entropy object then tells us something about the parameters that describe whatever $\mu$-sectors were present in the original state.

To understand such constraints, we should first consider what values of $\mu$ will have non-zero probabilities $p_\mu$. Since the probabilities are preserved in time, this can be determined from the initial state $N^{-1}|\tilde C_\beta \rangle$ where $N = \sqrt{\langle \tilde C_\beta | \tilde C_\beta \rangle}$.   The question thus reduces to asking when $P_\mu |\tilde  C_\beta\rangle$ is a state of non-zero norm.  But faithfulness of the trace (Corollary \ref{cor:faithfulvN}) means that $\tr(P_\mu)\neq 0$, and since
$P_\mu^2 = P_\mu$, the limit $\lim_{\beta\downarrow 0} \langle \tilde  C_\beta |P_\mu|\tilde C_\beta\rangle$ is $\tr(P_\mu)$ according to \eqref{eq:alttrace52}.  Thus $P_\mu |\tilde  C_\beta \rangle$ must be non-zero for small enough $\beta$.  It follows that our black hole evaporation scenario will contain at least some information about all possible subspaces ${\cal H}^\mu_{B \sqcup B}$.

For simplicity, let us focus on the case where $B$ is a sphere.  In that context, semiclassical physics suggests that the evaporation continues until the area of the black hole is of order the Planck scale, so that (using the connection to the Ryu-Takayanagi formula described at the end of section \ref{subsec:SfromA}) the entropy $S_{vN}^L$ on the left $B$ satisfies $S_{vN}^L \sim A/4G =  O(1)$ at this point in the evaporation.  This tells us that \eqref{eq:trS} can take values as small as $O(1)$ in states with probabilities $p_\mu$ determined by the initial gravitational state $N^{-1}|\tilde C_\beta\rangle$ with small $\beta$.   Since both of the terms in \eqref{eq:trS} are positive, this must also be true of each term separately.  As a result, values of $\mu$ for which $p_\mu$ is exponentially small in $1/G$ can contribute at most a total probability of order $G$ to the state.  Similarly, since for each $\mu$ we must have $\tr (-\tilde\rho_\psi^\mu \, \ln \tilde\rho_\psi^\mu) \ge \ln n_\mu$, our $n_\mu$ can be exponentially large in $1/G$ only in a part of the state that contributes at most a total probability of order $G$.  In this sense one might say that `typical' values of $\mu$ must be associated with values of $n_\mu$ that are not exponentially large.  In such sectors the exact value of $n_\mu$ would then contribute to only a small part of the Ryu-Takayanagi entropy in standard situations where the entropy is $O(1/G)$.
It would thus be interesting to better understand whether similar constraints arise for other choices of the boundary $B$, or whether in some cases one finds instead that black hole areas are bounded below by a constant greater than zero\footnote{\label{Mac}For example, one might be concerned about the fact that the black holes with hyperbolic boundaries studied in \cite{Emparan:1998he,Birmingham:1998nr,Emparan:1999gf} have areas that are bounded away from zero.  However, it was shown in \cite{Dias:2010ma} that such hyperbolic black holes are unstable against scalar field condensation for low enough temperature, and when the mass of the scalar field is between the $d$-dimensional and 2-dimensional Breitenlohner-Freedman (BF) bound. The more dominant hairy black hole solution has vanishing horizon area at zero temperature. Similar results has been found for nearly extremal Reissner-Nordstrom black holes \cite{Hartnoll:2008kx} and for nearly extremal rotating black holes \cite{Dias:2010ma}. The resulting bounds on $p_\mu$ and $n_\mu$ are thus similar to the case of spherical $B$ discussed above, and the results of \cite{Horowitz:2022mly,Horowitz:2022leb} suggest that general cases will again behave similarly.}.

Our second remark involves comparisons between our axioms and those of various forms of quantum field theories.  In particular, it has sometimes been said that our axioms resemble Atiyah's postulates \cite{Atiyah_1988} for topological quantum field theory (TQFT); see also \cite{Witten:1988ze}.  Of course, there are also similarities to Segal's axioms \cite{Segal_1988} for conformal field theories and to the Osterwalder-Schrader axioms \cite{Osterwalder:1973dx} of general Euclidean field theories.  We view these similarities as natural for any formulation associated with Euclidean path integrals.  It may seem that our axioms have more in common with TQFT than the other contexts above simply because, first, there are fewer TQFT axioms due to the theories having less structure and, second, the fact that both here and in TQFT we explicitly allow path integrals defined by arbitrary manifolds $N$ with the same boundary $\partial N$ to define states in the same Hilbert space. This is, of course, also the case in non-topological quantum field theories, though it is less commonly emphasized in that context since one can also use local sources to generate general states.

However, it is perhaps more interesting to highlight differences between our axioms and those of TQFT. A critical such point is that TQFT requires the Hilbert space to factorize over disjoint unions (Harlow factorization).  This TQFT axiom encodes an element of locality that, as we have seen, distinctly fails to hold in our context.   We show, however, that (at least in the diagonal context) a weakened version of this property is still a consequence of our axioms in the sense that the Hilbert space $\mathcal H_{B \sqcup B}$ decomposes into a direct sum of terms $\mathcal H^\mu_{B\sqcup B}$ for  which the algebra of bounded operators ${\mathcal B}(\mathcal H^\mu_{B\sqcup B})$ is the tensor product of the algebras $\mathcal A^B_{L,\mu}$ and $\mathcal A^B_{R,\mu}$ associated with the two boundaries.

Another important distinction is that standard TQFT requires the associated Hilbert spaces to have finite dimension.  In contrast, we allow infinite dimensional Hilbert spaces but we are still able to show that our operators are bounded and that our construction yields type I von Neumann algebras\footnote{Generalization of TQFT with infinite dimensional Hilbert spaces have also appeared in~\cite{Rovelli:1993kc,Baez:1997zt,Barrett:1997gw,Freidel:1998pt,Freidel_2005,Oeckl:2003vu}, which attempt to use inspiration from TQFT to define the full gravitational path integral.}.  It would appear that one could also use the basic argument of section \ref{sec:typeI} to generalize the TQFT axioms to allow infinite-dimensional Hilbert spaces, and that one should again find bounded operators and type I algebras.  However, we have not explored this in detail.  We also refer the reader to \cite{Banerjee:2022pmw} for comments on how {\it topological} quantum gravity models provide a different generalization of TQFTs.

The third remark concerns the quantization condition derived in section \ref{subsec:normoftrace} for the trace $\tr(P)$ of any non-zero finite-dimensional projection $P$. In general, we might say that the unit-trace operators $\rho := P/\tr(P)$ define various notions of microcanonical ensemble (not necessarily specified by an energy), so that our quantization condition $\tr(P)=n \in \mathbb{Z}^+$ requires the quantity $\tr (-\rho \ln \rho)$ to take the value $\ln(n)$.  Note that the quantities $\tr (-\rho \ln \rho)$ defined by our path integral are then microcanonical and non-perturbative analogues of the semiclassical (canonical ensemble) partition functions studied by Gibbons and Hawking in their classic Euclidean path integral study of black hole entropy \cite{Gibbons:1976ue}.  It is thus natural to refer to the above result as `quantization of the Gibbons-Hawking density of states.'

This quantization allowed us to construct a Hilbert space $\widetilde {\cal H}_{B \sqcup B}$ (via the inclusion of finite-dimensional hidden sectors) on which the entropy
$\tr (-\rho \ln \rho) = \ln(n)$ directly measured the rank $n$ of the projection $P$ on appropriate left-factors of the Hilbert space $\widetilde {\cal H}_{B \sqcup B}$.  Since the inclusion of hidden sectors can only add states to the theory, we find that $\exp\left(\tr (-\rho \ln \rho)\right) = n$ also bounds the rank of $P$ in the context without hidden sectors; i.e., on the left-factors of the Hilbert space ${\cal H}_{B \sqcup B}$.  A similar result was previously derived in section 4.1 of \cite{Marolf:2020xie} using methods that also involved examining higher-boundary Hilbert spaces.   Here we have extended this result by deriving the above quantization condition and thus showing that the appropriate addition of hidden sectors will saturate the bound of \cite{Marolf:2020xie}.

As described in section \ref{subsec:normoftrace}, our quantization is a direct result of positivity of the inner product on the gravitational Hilbert space.  This should not be a surprise, as classic textbook classifications of e.g.\ unitary representations of the angular momentum algebra in fact take a similar form.  In that context one often begins with what will turn out to be a highest weight state.  One then acts repeatedly with lowering operators.  Assuming all of the states generated by this process to be non-zero would then lead to the construction of a state with negative norm, so at some point this process must terminate.  The condition that this occurs then enforces quantization conditions on the spectra of the relevant operators, and in particular on eigenvalues associated with the original state.  This is very much in parallel with our argument, which could be phrased in terms of first supposing some value for $\tr(P)$ and then, starting with the state $|P\rangle$, constructing more complicated states on Hilbert spaces associated with more and more boundaries. Eventually, at some point determined by $\tr(P)$, one finds that such states must have negative norm unless they are trivial.  The required triviality then imposes $\tr(P) = n \in {\mathbb Z}^+$.

One of the interesting lessons from this work is thus that, in quantum gravity, important such constraints are imposed by considering contexts with large numbers of boundaries.  In particular, in a semiclassical limit the operators $P$ that describe interesting microcanonical ensembles would be expected to have entropies of order $1/G$, and thus values of $\tr(P)$ that are exponentially large in $1/G$.  Our work suggests that such values of $\tr(P)$, even if not integers, would nonetheless appear to be consistent unless one performs computations that involve a similarly exponentially large number of boundaries.  This would clearly be a monumental task.  On the mathematical side, this in particular supports suggestions recently enunciated by Witten \cite{WittenTalk} that while some notion of analytic continuation of integer $n$ results to non-integer cases should violate positivity, such violations of positivity would be invisible in any notion of an asymptotic expansion of such results at large $n$.  See also \cite{Colafranceschi:2023txs} for another recently-discussed sense in which constraints from positivity become invisible in the limit of a large density of states.

\subsection{Future directions}
\label{subsec:OIFD}

It is traditional to close with a discussion of open issues and future directions and, indeed, it seems that there is still much to explore. One such direction would be to understand the analogues of the arguments given above for Lorentzian path integrals, perhaps allowing special codimension-2 singularities as described in \cite{Marolf:2022ybi}.  Since our Axioms \ref{ax:finite}-\ref{ax:factorize} all appear to admit ready extensions to complex sources (which, from the Euclidean perspective, would include real Lorentzian boundary conditions), the main issue is likely to be how to properly state the sense in which the boundary conditions are required to be ``of Lorentz signature'' and in particular how to handle any singularities that arise.

Another limitation of the analysis above was that it studied only diagonal Hilbert space sectors ${\cal H}_{B \sqcup B}$ for which $B$ is a compact closed manifold (without boundary).  It would clearly be of interest to understand the non-diagonal case in more detail.  We will return to this issue in  future work \cite{Marolf:2024adj}.

A more interesting generalization might be to drop the requirement that $B$ be compact and closed, and to instead investigate the case where some $B$ and its complement $\bar B$ meet at some $\partial B = \partial \bar B$.  If there is a dual CFT, then the usual field theory
arguments lead us to expect that any von Neumann algebra associated with $B$ should be of type III.  However, it is important to understand how to derive this result from the bulk gravitational path integral.  Furthermore, despite the fact that the algebra is expected to be of type III, we would like to show that {\it states} on the algebra have a  reasonable notion of renormalized von Neumann entropy.  We have only just  scratched the surface with respect to this issue here, and there is much more to understand.

Another interesting path to explore would be to investigate whether small enlargements of the set of axioms might lead to significant enlargements in the class of results that can be derived.  It would be particularly interesting to understand if there are simple axioms (say, regarding spacetime wormholes) that would allow us to take a general non-factorizing path integral and to write it as a direct sum/integral over `baby universe superselection sectors' as described in section \ref{subsec:ax}.  It would similarly be interesting to find simple axioms which imply {\it Harlow} factorization, meaning that the direct sum \eqref{eq:HBB2} over $\mu$ would reduce to a single term.  This term would then necessarily be of the form ${\cal H}^B_L \otimes {\cal H}^B_R$ (except in cases where these one-boundary Hilbert spaces are trivial).    Developing additional examples would also be useful in this regard, so that the effect of new axioms can be more readily understood.

\section*{Acknowledgements}

DM thanks Geoff Penington for interactions that initially motivated this project, Douglas Stanford for a clarifying conversation about discrete vs.\ continuous spectra for central operators, Daiming Zhang for exchanges emphasizing the importance of using nets instead of just sequences to describe the closures of algebras, Maciej Kolanowski for suggesting some of the points in footnote \ref{Mac}, and the Perimeter Institute Quantum Fields and Strings group for interesting questions. XD thanks Chris Akers and Juan Maldacena for interesting discussions. ZW thanks Tom Faulkner and Elliott Gesteau for valuable discussions. EC thanks Alexey Milekhin for many useful conversations. EC also thanks Matteo Bruno, Laurent Freidel, Daniel Harlow, Fabio Mele and Daniele Oriti for pointing out the parallels with the TQFT framework, and for valuable comments on the subject. We also thank Xiaoyi Liu and Maciej Kolanowski for conversations that led to key parts of this work, and we thank Aron Wall for discussions related to time-reversal.  DM is grateful to the Perimeter Institute for its hospitality during important stages of the project. EC's participation in this project was made possible by a DeBenedictis Postdoctoral Fellowship and through the support of the ID\# 62312 grant from the John Templeton Foundation, as part of the \href{https://www.templeton.org/grant/the-quantum-information-structure-of-spacetime-qiss-second-phase}{``Quantum Information Structure of Spacetime'' Project (QISS)}. The opinions expressed in this project/publication are those of the authors and do not necessarily reflect the views of the John Templeton Foundation.
The work of XD was supported in part by the U.S. Department of Energy, Office of Science, Office of High Energy Physics, under Award Number DE-SC0011702, and by funds from the University of California.
The work of DM and ZW was supported by NSF grant PHY-2107939, and by funds from the University of California.

\appendix

\section{Properties of the unnormalized cylinder operator \texorpdfstring{$C_\beta$}{}}
\label{app:lemmas}

We now provide the proof of the following lemma:

\begin{lemma}
\label{lemma:Cnorm}
The operator norm $\|C_\beta \|$ of the (unnormalized) cylinder operator satisfies $\|C_\beta \| \rightarrow 1$ as $\beta \rightarrow 0$.
\end{lemma}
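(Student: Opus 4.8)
The plan is to identify the cylinder operators with the powers of a single positive operator and then simply read off their norms; I work throughout in a nontrivial diagonal sector, so that $\mathcal H_{LR}\neq 0$ (in a trivial sector the statement is vacuous). First I would record the operator-theoretic structure of the $\widehat{C_\beta}_L$. Since cylinders are self-adjoint ($C_\beta^\star=C_\beta$) and concatenate additively, $C_\beta C_{\beta'}=C_{\beta+\beta'}$, their representatives satisfy $\widehat{C_\beta}_L=\widehat{C_{\beta/2}}_L^\dagger\,\widehat{C_{\beta/2}}_L$, so each $\widehat{C_\beta}_L$ is a positive, self-adjoint operator that is bounded by \eqref{eq:anormbound} (with $\|C_\beta\|\le\sqrt{\tr(C_{2\beta})}=\sqrt{\zeta(M(C_{2\beta}))}<\infty$ via Axiom \ref{ax:finite}), and the family $\{\widehat{C_\beta}_L\}_{\beta>0}$ is a commuting one-parameter semigroup. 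Writing $A:=\widehat{C_1}_L\ge 0$, the $C^*$-identity gives $\|C_{2\beta}\|=\|\widehat{C_\beta}_L^\dagger\widehat{C_\beta}_L\|=\|C_\beta\|^2$, so that $\|C_{2^{-n}}\|=\|C_1\|^{2^{-n}}$. This already delivers the limit along the dyadic sequence $\beta=2^{-n}$, provided $0<\|C_1\|<\infty$.

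The core step is to upgrade this to all $\beta$ by proving $\widehat{C_\beta}_L=A^\beta$ in the sense of the functional calculus. For dyadic $\beta=p/2^n$ this is immediate from uniqueness of positive roots: $\widehat{C_{1/2^n}}_L$ is the unique positive $2^n$-th root of $A$, hence equals $A^{1/2^n}$, and $\widehat{C_{p/2^n}}_L=(\widehat{C_{1/2^n}}_L)^p=A^{p/2^n}$. To pass to general $\beta$ I would invoke Axiom \ref{ax:continuity}: for any $b,b'\in\U{Y}^d_{LR}$ the matrix element $\langle b|\widehat{C_\beta}_L|b'\rangle=\zeta(M(b^\star C_\beta b'))$ is the path integral on a closed manifold containing an orthogonal cylinder of length $\beta$, and is therefore continuous in $\beta>0$. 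Since $\beta\mapsto A^\beta$ is norm-continuous on $(0,\infty)$ (because $\lambda^\beta\to\lambda^{\beta_0}$ uniformly on the compact spectrum $\sigma(A)\subset[0,\|A\|]$), taking dyadic $\beta_n\to\beta_0$ yields $\langle b|\widehat{C_{\beta_0}}_L|b'\rangle=\lim_n\langle b|A^{\beta_n}|b'\rangle=\langle b|A^{\beta_0}|b'\rangle$ on the dense domain $\U{Y}^d_{LR}$, whence $\widehat{C_{\beta_0}}_L=A^{\beta_0}$ as bounded operators.

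With this identification the norm computation is elementary: for a positive operator $\|A\|=\max\sigma(A)\in\sigma(A)$, and since $\lambda\mapsto\lambda^\beta$ is increasing, $\|C_\beta\|=\|A^\beta\|=\max_{\lambda\in\sigma(A)}\lambda^\beta=(\max\sigma(A))^\beta=\|C_1\|^\beta$, which is exactly Corollary \ref{cor:Cnorms}. Finiteness of $\|C_1\|$ is as above, and $\|C_1\|>0$ because $\|C_1\|=0$ would force $A=0$; a positive operator with vanishing square vanishes, so iterating gives $\widehat{C_\delta}_L=0$ for every dyadic $\delta>0$, and peeling a dyadic-length rim off any rimmed surface $b$ (writing $b=C_\delta\cdot_L\tilde b$ as permitted by Definition \ref{def:rim}) would then force $|b\rangle=\widehat{C_\delta}_L|\tilde b\rangle=0$, i.e.\ $\mathcal H_{LR}=0$. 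Hence $\|C_\beta\|=\|C_1\|^\beta=e^{\beta\ln\|C_1\|}\to 1$ as $\beta\downarrow 0$.

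I expect the main obstacle to be the extension step $\widehat{C_\beta}_L=A^\beta$ for non-dyadic $\beta$: the only handle available on non-dyadic cylinders is the weak continuity supplied by Axiom \ref{ax:continuity}, and one must be careful that this continuity is asserted on a genuinely dense domain (all of $\U{Y}^d_{LR}$, not merely the non-dense family of cylinder states) and that it indeed suffices to pin down a bounded operator from its matrix elements there. By contrast, the finiteness and positivity of $\|C_1\|$ and the norm identity $\|A^\beta\|=\|A\|^\beta$ for positive $A$ are routine once the identification is in hand.
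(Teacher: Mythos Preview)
Your proof is correct and takes a genuinely different route from the paper's. The paper never identifies $\widehat{C_\beta}_L$ with $A^\beta$; instead it first derives $\|C_{\beta/n}\|=\|C_\beta\|^{1/n}$ directly from the common spectral decomposition, and then runs a two-case argument: if all $\|C_\beta\|\le 1$ the norms are monotone in $\beta$ and the rational-subsequence limit suffices; if some $\|C_{\beta_0}\|>1$, the paper sandwiches $\|C_\beta\|$ between $1$ and the trace bound $\tr C_\beta$ (continuous in $\beta$) using Axiom~\ref{ax:continuity} applied to a single fixed matrix element. Your approach is cleaner and more structural: by pinning down $\widehat{C_\beta}_L=A^\beta$ via uniqueness of positive roots on the dyadics and weak continuity on the dense domain $\U{Y}^d_{LR}$ for the rest, you obtain Corollary~\ref{cor:Cnorms} ($\|C_\beta\|=\|C_1\|^\beta$) \emph{first} and read off the lemma as an immediate consequence, whereas the paper proves the corollary separately afterward by another sandwiching argument. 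The paper's approach is slightly more elementary in that it avoids the continuous functional calculus and the uniform-convergence check for $\lambda\mapsto\lambda^\beta$ on $[0,\|A\|]$, but yours yields more: the identification $\widehat{C_\beta}_L=A^\beta$ is a stronger statement that could streamline later arguments as well. Your own diagnosis of the delicate step is accurate---the passage from dyadic to general $\beta$ via Axiom~\ref{ax:continuity} on the full dense domain $\U{Y}^d_{LR}$ is exactly where care is needed, and your handling of it is sound. The $\|C_1\|>0$ argument via rim-peeling is also correct and makes explicit a nondegeneracy that the paper's proof uses implicitly.
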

\begin{proof}
To show that this is the case, recall that since $C_\beta \in Y^d_{B \sqcup B}$ it defines a bounded operator.  Furthermore, since $C_\beta^\star = (C_\beta^*)^t = C_\beta$, we have $\left(\widehat{C_\beta}_L\right)^\dagger = \widehat{C_\beta^\star}_L = \widehat{C_\beta}_L$, so that $\widehat{C_\beta}_L$ is self-adjoint and can be diagonalized.  In addition, since $\widehat{C_\beta}_L = \widehat{C_{\beta/2}}_L^\dagger \widehat{C_{\beta/2}}_L$, the eigenvalues of $\widehat{C_\beta}_L$ are non-negative.

Now consider the family of operators $\widehat{C_{\beta/n}}_L$ for $n \in \mathbb{Z}^+$ and some fixed $\beta>0$.  The norm $\|C_{\beta/n}\|$ is the supremum of the set of eigenvalues of $\widehat{C_{\beta/n}}_L$.  But the operators $\widehat{C_{\beta/n}}_L$ have a common set of eigenstates $|\lambda\rangle$ with eigenvalues $\lambda^{1/n}$ for some bounded set of non-negative real numbers $\lambda$.  In particular, we have
\begin{equation}
\label{eq:A1}
\|C_{\beta/n} \| = \sup_\lambda \, \lambda^{1/n} = (\sup_\lambda \, \lambda)^{1/n} = \| C_\beta\|^{1/n}.
\end{equation}
Thus
\begin{equation}
\label{eq:limCnorm}
\lim_{n \rightarrow \infty} \|C_{\beta/n} \| =1.
\end{equation}

This establishes that we can find sequences of $C_\beta$  with $\beta \rightarrow 0$ for which $\|C_\beta\|\rightarrow 1$.  However, it remains to show that this convergence is sufficiently uniform that $\|C_\beta\|$ converges for an arbitrary sequence of $C_\beta$ with $\beta \rightarrow 0$.

  Suppose first that $\|C_\beta\| \le 1$ for all $\beta>0$.  Then since
$C_{\beta_1+\beta_2} = C_{\beta_1} C_{\beta_2}$ for $\beta_1, \beta_2 >0$, we have
\begin{equation}
\|C_{\beta_1 + \beta_2}\| \le \|C_{\beta_1}\|\,\, \|C_{\beta_1}\| \le \|C_{\beta_1}\|.
\end{equation}
In particular, in this case, for all $\beta>0$ we find that $\|C_\beta\|$ is monotonically increasing as $\beta$ decreases.  Thus \eqref{eq:limCnorm} for any fixed $\beta$ implies $\lim_{\beta \rightarrow 0} \|C_\beta\| =1$.

The remaining case occurs when $\|C_{\beta_0}\| > 1$ for some $\beta_0>0$. We will now show both that this requires $\|C_\beta \| > 1$ for all small enough $\beta$, and that for small enough $\beta$ the norms $\|C_{\beta}\|$ are bounded above by a quantity that tends to 1.  This will then establish $\lim_{\beta \rightarrow 0} \|C_\beta\| =1$ for this final case.

The condition $\|C_{\beta_0}\| > 1$ means that there is some state $|\psi\rangle$ for which $\langle \psi |\widehat{C_{2\beta_0}}_L |\psi \rangle > \langle \psi | \psi \rangle$.  Now, recall that states of the form $|a \rangle$ for $a\in \U{Y}^d_{B\sqcup B}$ are dense in ${\cal H}_{B \sqcup B}$, so that any state $|\psi \rangle$ can be approximated by such $|a\rangle$.  Since $\widehat{C_{2\beta_0}}_L$ is bounded, the expectation value of $\widehat{C_{2\beta_0}}_L$ is a continuous function of $|\psi\rangle$.  Thus there must also be some $a \in \U{Y}^d_{B \sqcup B}$ for which $\langle a |\widehat{C_{2\beta_0}}_L | a \rangle = \lambda \langle a | a \rangle \ne 0$ with $\lambda > 1$.

Let us now consider some small $\beta>0$ and write $\beta_0 = \lf \frac{\beta_0}{\beta} \rf \beta + \Delta$.
Here we use the notation $\lf \beta_0/\beta \rf$ to denote the greatest integer less than or equal to $\beta_0/\beta$.  Thus $0 \le \Delta < \beta$.  As a result, the continuity axiom (Axiom \ref{ax:continuity}) requires
\begin{equation}
\label{eq:Delta}
\langle a |\widehat{C_{2\Delta \Phantom}}_L | a \rangle \rightarrow \langle a | a\rangle
\end{equation}
as $\beta \rightarrow 0$ via any sequence for which $\Delta \neq 0$ (so that the left-hand side is well-defined).   Since for $\Delta \neq 0$ we have
\begin{equation}
\lambda \langle a | a \rangle = \langle a |\widehat{C_{2\beta_0}}_L | a \rangle = \langle a |\widehat{C_{\Delta \Phantom}}_L \widehat{C_{2n\beta}}_L \widehat{C_{\Delta \Phantom}}_L| a \rangle \le \|C_{\beta}\|^{2n} \langle a | \widehat{C_{2\Delta \Phantom}}_L| a \rangle,
\end{equation}
where $n=\lf \beta_0/\beta \rf$, and since Eq.~\eqref{eq:Delta} tells us that for any $\epsilon > 0$ we can find a $\beta_1$ such that for all $\beta < \beta_1$ we have  $\langle a |\widehat{C_{2\Delta \Phantom}}_L | a \rangle \le (1+\epsilon) \langle a |a\rangle$, we find
\begin{equation}
\label{eq:Cnormstep2}
\lambda \langle a | a \rangle \le  \|C_{\beta}\|^{2n} (1+\epsilon) \langle a | a\rangle
\end{equation}
for values of $\beta$ at which $\Delta \neq 0$.  But when $\Delta =0$ the result \eqref{eq:Cnormstep2} follows immediately from \eqref{eq:A1}.  Thus \eqref{eq:Cnormstep2} in fact holds for all $\beta < \beta_1$.

Let us now choose $1+\epsilon < \lambda$.  Then from \eqref{eq:Cnormstep2} we see that there must be a $\beta_1$ such that  $\|C_{\beta}\| \ge \left(\frac{\lambda}{1+\epsilon}\right)^{1/2n} > 1$ for all $\beta < \beta_1$.

On the other hand, we can also show that $\|C_{\beta}\|$ is bounded above by a quantity that tends to $1$ as $\beta\rightarrow 0$.  To do so recall that \eqref{eq:TrIn1} and \eqref{eq:anormbound} imply
\begin{equation}
\label{eq:Cbound}
\|C_\beta \| \le \sqrt{\tr\,(C_\beta^2)} \le \tr \, C_\beta,
\end{equation}
 and that $\tr \, C_\beta$ is a continuous function of $\beta$ for $\beta >0$.  As a result, on any fixed interval $[\beta_2, 2\beta_2]$ we find
$\tr \, C_\beta \le {\cal C}$ for some constant ${\cal C} >0$.  Thus \eqref{eq:Cbound} requires $\|C_\beta \| \le {\cal C}$ on this interval as well.

Furthermore, as discussed above, for any $\beta$ we have $\|C_{\beta/n}\| = \|C_{\beta}\|^{1/n}$.  Since any $\beta >0$ with $\beta < \beta_2$ can be written as $\beta'/n$ for some $\beta' \in [\beta_2, 2\beta_2]$, this gives us an upper bound ${\cal C}^{1/n}$ on such $\|C_\beta\|,$ where $n = \lf 2\beta_2/\beta \rf$.  Combining this with the above observation yields
\begin{equation}
1 < \|C_\beta\| \le {\cal C}^{1/n},
\end{equation}
for all $\beta< \min(\beta_1,\beta_2)$. And since
 ${\cal C}^{1/n} \rightarrow 1$ as $\beta \rightarrow 0$, we must have $\|C_\beta\| \rightarrow 1$ as claimed.
\end{proof}

The above argument also leads to the corollaries below.

\begin{corollary}
\label{cor:Cto1}
As $\beta \rightarrow 0$, the operators $\widehat{C_\beta}_L$ converge in the strong operator topology to the identity $\mathbb{1}$ on any ${\cal H}_{B \sqcup B}$.
\end{corollary}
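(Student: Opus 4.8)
The plan is to establish strong-operator convergence straight from the definition: I would show that $\widehat{C_\beta}_L|\psi\rangle \to |\psi\rangle$ for every $|\psi\rangle \in {\cal H}_{B \sqcup B}$ as $\beta \downarrow 0$. Since the states $|a\rangle$ with $a \in \U{Y}^d_{B \sqcup B}$ are dense in ${\cal H}_{B \sqcup B}$, and since Lemma \ref{lemma:Cnorm} gives $\|C_\beta\| \to 1$ (so the $\widehat{C_\beta}_L$ are uniformly bounded for small $\beta$), a standard three-$\epsilon$ argument reduces the task to proving convergence on this dense set.

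First I would compute the relevant norm on the dense set. For $a \in \U{Y}^d_{B \sqcup B}$ we have $\widehat{C_\beta}_L|a\rangle = |C_\beta a\rangle$, and using $C_\beta^\star = C_\beta$, the semigroup relation $C_\beta C_\beta = C_{2\beta}$, and the trace formula \eqref{eq:iptr}, one expands
\begin{equation}
\big\| \widehat{C_\beta}_L|a\rangle - |a\rangle \big\|^2 = \tr(a^\star C_{2\beta} a) - 2\,\tr(a^\star C_\beta a) + \tr(a^\star a),
\end{equation}
where I have used that $\tr(a^\star C_\beta a) = \langle C_{\beta/2}a | C_{\beta/2}a\rangle$ is real. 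Each of the three traces equals $\zeta$ of a closed manifold of the form $M(a^\star C_{n\beta} a)$ with $n=2,1,0$, obtained by inserting a cylinder of length $n\beta$ into the gluing that defines $M(a^\star a)$.

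The crux of the argument — and the step I expect to be the main obstacle — is to invoke the continuity axiom (Axiom \ref{ax:continuity}) to conclude that each such trace tends to $\tr(a^\star a)$ as $\beta \downarrow 0$, so that the right-hand side vanishes in the limit. The delicate point is that Axiom \ref{ax:continuity} guarantees continuity only so long as the source manifold genuinely contains a cylinder of nonzero length; its accompanying footnote explicitly warns that continuity may fail in the limit where the cylinder disappears. This is precisely where the hypothesis that $a$ is \emph{rimmed} (Definition \ref{def:rim}) does the work: both $a$ and $a^\star$ carry rims, i.e.\ collars diffeomorphic to cylinders of some fixed positive length $\epsilon_0$, and across the gluings in $a^\star C_{n\beta} a$ these fuse with $C_{n\beta}$ into a single cylinder of length $2\epsilon_0 + n\beta$. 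Thus the entire family $\{M(a^\star C_{n\beta} a)\}_{\beta \ge 0}$ consists of manifolds containing a cylinder whose length is bounded below by $2\epsilon_0 > 0$, and the $\beta\downarrow 0$ endpoint is honestly $M(a^\star a)$ carrying a cylinder of positive length $2\epsilon_0$. The continuity axiom therefore applies uniformly on this family, yielding $\zeta(M(a^\star C_{n\beta} a)) \to \zeta(M(a^\star a)) = \tr(a^\star a)$ and hence $\widehat{C_\beta}_L|a\rangle \to |a\rangle$.

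Finally I would assemble the pieces: fix $|\psi\rangle$ and $\epsilon>0$, choose $|a\rangle$ with $\big\||\psi\rangle - |a\rangle\big\| < \epsilon$, and estimate
\begin{equation}
\big\|\widehat{C_\beta}_L|\psi\rangle - |\psi\rangle\big\| \le \|C_\beta\|\,\epsilon + \big\|\widehat{C_\beta}_L|a\rangle - |a\rangle\big\| + \epsilon.
\end{equation}
Letting $\beta \downarrow 0$ and using $\|C_\beta\| \to 1$ together with the dense-set convergence just established gives $\limsup_{\beta \downarrow 0}\big\|\widehat{C_\beta}_L|\psi\rangle - |\psi\rangle\big\| \le 2\epsilon$; since $\epsilon$ is arbitrary the limit is zero, which is the claimed strong convergence of $\widehat{C_\beta}_L$ to $\mathbb{1}$.
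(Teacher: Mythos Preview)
Your proof is correct and follows essentially the same approach as the paper: establish convergence on the dense set $\{|a\rangle : a \in \U{Y}^d_{B\sqcup B}\}$ via the continuity axiom, then extend to all of $\mathcal{H}_{B\sqcup B}$ using the uniform norm bound from Lemma~\ref{lemma:Cnorm}. Your treatment is in fact more careful than the paper's at the key step, since you explicitly invoke the rim structure (Definition~\ref{def:rim}) to ensure that the limiting manifold $M(a^\star a)$ still contains a cylinder of positive length, so that Axiom~\ref{ax:continuity} genuinely applies through $\beta=0$; the paper simply asserts this step ``due to Lemma~\ref{lemma:Cnorm} and the continuity axiom'' without spelling out why the endpoint is covered.
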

\begin{proof} We wish to show $\widehat{C_\beta}_L |a \rangle \rightarrow |a\rangle$ for all $|a\rangle \in {\cal H}_{B \sqcup B}$. Due to Lemma \ref{lemma:Cnorm}, this is equivalent to $ \widehat{\tilde C_\beta}_L |a \rangle \rightarrow |a\rangle$.

  To see this, recall that any state $|a\rangle$ is the $n\rightarrow \infty$  limit of states $|a_n\rangle$ for $a_n \in \U{Y}^d_{B \sqcup B}$.  We may thus define $|\epsilon_n\rangle = |a\rangle - |a_n \rangle$ and $|\tilde \epsilon_{\beta, n}\rangle  = \widehat{ \tilde C_\beta}_L |a_n\rangle - |a_n \rangle$ to write
\begin{eqnarray}
\lim_{\beta \rightarrow 0} \widehat{\tilde C_\beta}_L | a\rangle &=& \lim_{\beta \rightarrow 0} \left( \widehat{\tilde  C_\beta}_L |a _n \rangle + \widehat{\tilde C_\beta}_L |\epsilon_n \rangle \right) \cr &=&  |a_n \rangle + \lim_{\beta \rightarrow 0} \left( |\tilde \epsilon_{\beta,n} \rangle + \widehat{ \tilde C_\beta}_L  | \epsilon_n \rangle    \right)  \cr
&=& |a \rangle -  |\epsilon_n \rangle + \lim_{\beta \rightarrow 0} \left( |\tilde \epsilon_{\beta,n} \rangle + \widehat{\tilde C_\beta}_L  |\epsilon_n \rangle \right)
= |a \rangle + \lim_{\beta \rightarrow 0} \left( (\widehat{\tilde C_\beta}_L-1)  |\epsilon_n \rangle \right).
\end{eqnarray}
Here the last step used the fact that the norm of $|\tilde \epsilon_{\beta,n} \rangle$ vanishes for each $n$ in the limit $\beta \rightarrow 0$, due to Lemma \ref{lemma:Cnorm} and the continuity axiom. Since the operator norm of $(\widehat{\tilde C_\beta}_L-1)$ is bounded by $2$ for all $\beta$, the norm of the remaining error term
$\lim_{\beta \rightarrow 0} \left( (\widehat{\tilde C_\beta}_L-1)  |\epsilon_n \rangle \right)$ can be bounded by an arbitrarily small constant at large enough $n$.  We are then free to take the limit $n\rightarrow \infty$ to establish Corollary \ref{cor:Cto1}.
\end{proof}

\begin{corollary}
\label{cor:Cnorms}
The norms $\|C_\beta\|$ satisfy $\|C_\beta\| = (\|C_{\beta=1}\|)^\beta$.
\end{corollary}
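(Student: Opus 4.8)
The plan is to first pin down the result for rational $\beta$ directly from \eqref{eq:A1}, and then to extend it to all real $\beta>0$ by a two-sided squeeze built from the semigroup factorization $C_{\beta_1+\beta_2}=C_{\beta_1}C_{\beta_2}$ together with Lemma \ref{lemma:Cnorm}. Write $c := \|C_{\beta=1}\|$. For a positive integer $m$, applying \eqref{eq:A1} with $\beta \to m$ and $n \to m$ gives $\|C_1\| = \|C_m\|^{1/m}$, hence $\|C_m\| = c^m$; applying \eqref{eq:A1} once more with $\beta \to m$ yields $\|C_{m/n}\| = \|C_m\|^{1/n} = c^{m/n}$. This establishes $\|C_\beta\| = c^\beta$ for all positive rational $\beta$.

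For general real $\beta>0$ I would approximate from both sides through rationals. Submultiplicativity of the operator norm applied to $\widehat{C_\beta}_L = \widehat{C_{\beta_1}}_L\,\widehat{C_{\beta-\beta_1}}_L$, for rational $\beta_1<\beta$, gives $\|C_\beta\| \le c^{\beta_1}\|C_{\beta-\beta_1}\|$; letting $\beta_1 \to \beta^-$ through rationals, Lemma \ref{lemma:Cnorm} forces $\|C_{\beta-\beta_1}\| \to 1$ while $c^{\beta_1} \to c^\beta$, so $\|C_\beta\| \le c^\beta$. Symmetrically, factoring $\widehat{C_{\beta_2}}_L = \widehat{C_\beta}_L\,\widehat{C_{\beta_2-\beta}}_L$ for rational $\beta_2>\beta$ gives $c^{\beta_2} \le \|C_\beta\|\,\|C_{\beta_2-\beta}\|$, and letting $\beta_2 \to \beta^+$ yields $\|C_\beta\| \ge c^\beta$. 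Combining the two bounds gives $\|C_\beta\| = c^\beta = (\|C_{\beta=1}\|)^\beta$, as claimed.

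The points needing care are incidental. First, one should record that $c>0$: otherwise $\widehat{C_1}_L=0$, and since $\widehat{C_{1/n}}_L$ is positive self-adjoint with $\widehat{C_{1/n}}_L^{\,n} = \widehat{C_1}_L = 0$ it would vanish, contradicting $\|C_{1/n}\| \to 1$ from Lemma \ref{lemma:Cnorm}; this guarantees that $c^\beta$ and the limits $c^{\beta_i}\to c^\beta$ are well-defined and continuous in the exponent. Second, I would emphasize that the argument deliberately avoids assuming monotonicity of $\beta \mapsto \|C_\beta\|$, which is not known a priori when $c>1$, and instead uses the semigroup factorizations to control the error. I expect this to be the only (mild) obstacle: making the two-sided rational approximation rigorous without monotonicity, which is precisely why the factorization $C_\beta = C_{\beta_1}C_{\beta-\beta_1}$ and the limit $\|C_\epsilon\|\to 1$ of Lemma \ref{lemma:Cnorm} are doing the essential work.
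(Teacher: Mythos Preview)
Your proposal is correct and follows essentially the same approach as the paper's own proof: establish the rational case via \eqref{eq:A1}, then squeeze the irrational case between two rational approximants using the semigroup factorization and Lemma \ref{lemma:Cnorm}. The paper makes the specific choice $\beta_1=\lfloor n\beta\rfloor/n$ and $\beta_2=(\lfloor n\beta\rfloor+1)/n$ rather than arbitrary rational sequences, and it does not pause to rule out $c=0$ explicitly, but otherwise the arguments are the same.
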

\begin{proof}
Note that for $m, n \in \mathbb{Z}^+$ we may generalize \eqref{eq:A1} to write
\begin{equation}
\label{eq:A1gen}
\|C_{m\beta'/n} \| = \sup_\lambda \, \lambda^{m/n} = (\sup_\lambda \, \lambda)^{m/n} = \| C_{\beta'}\|^{m/n},
\end{equation}
where $\lambda$ are the elements of the spectrum of $C_{\beta'}$.  Let us now set $\beta'=1$. For any rational $\beta$, \eqref{eq:A1gen} immediately gives the desired result. For any irrational $\beta$, take $m=\lf n\beta \rf$. Then $\frac{m}{n} < \beta  < \frac{m+1}{n}$, so
\begin{equation}
\label{eq:CnormupperB}
\|C_\beta\| \le \left(\|C_{m/n}\|\right) \left(\|C_{\beta - m/n}\|\right),
\end{equation}
while
\begin{equation}
\label{eq:CnormlowerB}
\left(\|C_{(m+1)/n - \beta}\|\right) \left(\|C_\beta\|\right) \ge \|C_{(m+1)/n}\|.
\end{equation}
But as $n \rightarrow \infty$, both $\beta-m/n$ and ${(m+1)/n - \beta}$ vanish.  Thus by Lemma \ref{lemma:Cnorm} we have  $\|C_{\beta - m/n}\| \rightarrow 1$ and
$\|C_{(m+1)/n -\beta}\| \rightarrow 1$.  Taking limits in \eqref{eq:CnormupperB} and \eqref{eq:CnormlowerB} thus yields
\begin{equation}
\label{eq:Cnormbounds1}
\lim_{n\to\infty} \|C_{(m+1)/n}\| \le \|C_\beta\| \le \lim_{n\to\infty} \|C_{m/n}\|.
\end{equation}
Using \eqref{eq:A1gen} we may rewrite this in the form
\begin{equation}
\label{eq:Cnormbounds2}
\lim_{n\to\infty} \left(\|C_{1}\|\right)^{\frac{m+1}{n}} \le \|C_\beta\| \le \lim_{n\to\infty} \left(\|C_{1}\|\right)^{\frac{m}{n}}.
\end{equation}
But since $\frac{m}{n}$ and $\frac{m+1}{n}$ both approach $\beta$, both the upper and lower bounds are $\left(\|C_1\|\right)^\beta$.  This establishes the desired result.
\end{proof}

\section{The trace is normal and semifinite}
\label{app:trprop}

This appendix establishes that the traces \eqref{eq:alttrace5} on the von Neumann algebras ${\cal A}^B_L$, ${\cal A}^B_R$ are both normal and semifinite.  We call these properties Lemmas \ref{lemma:normal} and \ref{lemma:semifinite} below.  Recall that normality and semifiniteness were defined in properties 4 and 5 at the beginning of section \ref{sec:typeI}.

\begin{lemma}\label{lemma:normal}
The trace $\tr$ defined by \eqref{eq:alttrace5} is normal on both ${\cal A}^B_L$ and ${\cal A}^B_R$.
\end{lemma}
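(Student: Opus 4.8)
The plan is to exploit the fact that the trace was deliberately written in \eqref{eq:alttrace5} as a supremum over $\beta$ rather than as a limit, so that normality reduces to the elementary observation (already flagged in the discussion preceding \eqref{eq:alttrace4}) that two supremum operations always commute. Let $\{\rho_\alpha\}$ be a bounded increasing net of positive operators in ${\cal A}^B_L$ with supremum $\rho = \sup_\alpha \rho_\alpha$, where the supremum denotes the least upper bound in the usual partial order on self-adjoint operators. Since ${\cal A}^B_L$ is a von Neumann algebra, $\rho$ again lies in ${\cal A}^B_L$ and is positive, so both $\tr \rho$ and each $\tr \rho_\alpha$ are defined by \eqref{eq:alttrace5}.

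First I would invoke the standard fact about bounded monotone nets of self-adjoint operators (see e.g.\ the von Neumann algebra references cited above) that such a net converges in the strong operator topology to its least upper bound; in particular $\rho_\alpha \to \rho$ strongly. Consequently, for any fixed vector the net of real numbers $\langle \psi | \rho_\alpha | \psi \rangle$ increases monotonically to $\langle \psi | \rho | \psi \rangle$. Applying this with $|\psi\rangle = |\tilde C_\beta\rangle$ for each fixed $\beta > 0$ gives the key identity
\[
\langle \tilde C_\beta | \rho | \tilde C_\beta \rangle = \sup_\alpha \langle \tilde C_\beta | \rho_\alpha | \tilde C_\beta \rangle .
\]
With this in hand the result follows by substituting into \eqref{eq:alttrace5} and interchanging the two suprema,
\[
\tr \rho = \sup_{\beta > 0} \langle \tilde C_\beta | \rho | \tilde C_\beta \rangle = \sup_{\beta > 0} \sup_\alpha \langle \tilde C_\beta | \rho_\alpha | \tilde C_\beta \rangle = \sup_\alpha \sup_{\beta > 0} \langle \tilde C_\beta | \rho_\alpha | \tilde C_\beta \rangle = \sup_\alpha \tr \rho_\alpha ,
\]
which is precisely the normality condition. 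The middle interchange needs no further justification beyond the general identity $\sup_x \sup_y f(x,y) = \sup_y \sup_x f(x,y)$, valid for any $[0,+\infty]$-valued $f$ with the value $+\infty$ permitted throughout.

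I do not anticipate a serious obstacle, since the only genuine input is the monotone (strong) convergence of the matrix elements at fixed $\beta$. The one point worth stating carefully is that $\rho$, defined as the least upper bound of the net inside the von Neumann algebra, really is the strong limit of the net --- a standard structural fact, and exactly the feature that makes the supremum formulation \eqref{eq:alttrace5} more convenient here than the limit formulation \eqref{eq:alttrace52}, where commuting a limit past a supremum would not be automatic. The argument applies verbatim to ${\cal A}^B_R$ after replacing left cylinders with right cylinders.
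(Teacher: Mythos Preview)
Your proposal is correct and follows essentially the same approach as the paper: both arguments rest on the supremum formulation \eqref{eq:alttrace5}, invoke the standard fact that for a bounded increasing net of positive operators one has $\langle \psi | (\sup_\alpha \rho_\alpha) | \psi \rangle = \sup_\alpha \langle \psi | \rho_\alpha | \psi \rangle$ (the paper cites Douglas, proposition 4.64, for this), and then simply interchange the two suprema.
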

We will give the proof for ${\cal A}^B_L$.  The argument for ${\cal A}^B_R$ is directly analogous.

\begin{proof}
Consider a bounded increasing net of positive operators $a_\nu \in {\cal A}^B_L$ for $\nu$ in some directed index set $\mathscr{J}$. Here `increasing' means that $a_{\nu}\le a_{\nu'}$ whenever $\nu \le \nu'$. For each $a_\nu$ we have
the definition
\begin{equation}
\label{eq:trappC}
\tr \, a_\nu :=\sup_{\beta > 0} \langle \tilde C_\beta|a_\nu|\tilde C_\beta\rangle.
\end{equation}
Furthermore, for an increasing net of positive operators, the expectation value in any state $|\psi \rangle$ is also an increasing net.  In particular, $a_{\nu} \le a_{\nu'}$ implies $\langle \psi|a_{\nu}|\psi\rangle \le \langle \psi|a_{\nu'}|\psi\rangle$ for all $|\psi\rangle$. Thus
\begin{equation}
\langle \psi | \big( \sup_{\nu} a_\nu \big) |\psi \rangle \ge \sup_{\nu} \langle \psi | a_\nu|\psi \rangle = \lim_{\nu \in \mathscr{J}} \langle \psi | a_\nu|\psi \rangle .
\end{equation}
In fact, proposition 4.64 of \cite{Douglas} shows that the above is actually an equality:
\begin{equation}
\langle \psi | \big( \sup_{\nu} a_\nu \big) |\psi \rangle = \sup_{\nu} \langle \psi | a_\nu|\psi \rangle = \lim_{\nu \in \mathscr{J}} \langle \psi | a_\nu|\psi \rangle .
\end{equation}

Combining these results gives
\begin{eqnarray}
\label{eq:normality}
\tr \, \Big( \sup_\nu a_\nu \Big) &=&\sup_{\beta>0} \Big( \langle \tilde C_\beta|\big(\sup_\nu a_\nu \big)|\tilde C_\beta\rangle \Big)= \sup_{\beta>0} \, \sup_\nu \, \langle \tilde C_\beta|a_\nu|\tilde C_\beta\rangle \cr &=&  \sup_{\beta>0,\, \nu}  \,  \langle \tilde C_\beta|a_\nu|\tilde C_\beta\rangle= \sup_\nu \, \sup_{\beta>0} \,  \langle \tilde C_\beta|a_\nu|\tilde C_\beta\rangle= \sup_\nu \left(\tr \, a_\nu \right).
\end{eqnarray}
The key point in \eqref{eq:normality} is that taking the supremum over $\nu$ always commutes with taking the supremum over $\beta$ since taking both supremums (in either order) is equivalent to taking the supremum over all pairs $(\nu,\beta)$.  The result \eqref{eq:normality} is the desired normality property.
\end{proof}

\begin{lemma}\label{lemma:semifinite}
The trace $\tr$ defined by \eqref{eq:alttrace5} is semifinite on both ${\cal A}^B_L$ and ${\cal A}^B_R$.
\end{lemma}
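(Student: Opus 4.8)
The plan is to exhibit, for each non-zero positive $a \in {\cal A}^B_L$, an explicit non-zero positive operator $b \in {\cal A}^B_L$ with $b \le a$ and $\tr\, b < \infty$, as required by the semifiniteness property. The naive attempt---taking $b$ to be a small multiple $\epsilon P_{a>\epsilon}$ of a spectral projection of $a$---fails because nothing yet guarantees that such a projection has finite trace; indeed $\tr\,\mathbb{1} = +\infty$ in general by \eqref{eq:traceid}. The key idea is instead to sandwich $a$ between normalized cylinders, which are bounded by $\mathbb{1}$ in operator norm but nonetheless carry finite trace.

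Concretely, for $\beta > 0$ I would set
\begin{equation}
b := \sqrt{a}\,\, \widehat{\tilde C_{2\beta}}_L\,\, \sqrt{a},
\end{equation}
where $\sqrt{a} \in {\cal A}^B_L$ is the positive square root of $a$ and $\widehat{\tilde C_{2\beta}}_L$ is the normalized cylinder operator. Since $\widehat{\tilde C_{2\beta}}_L$ is positive with operator norm $1$, we have $0 \le \widehat{\tilde C_{2\beta}}_L \le \mathbb{1}$, and conjugating by $\sqrt{a}$ gives $0 \le b \le \sqrt{a}\,\mathbb{1}\,\sqrt{a} = a$. Thus $b$ is positive and $b \le a$ automatically, so the only nontrivial points are finiteness of $\tr\, b$ and non-triviality of $b$.

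For finiteness of the trace I would use cyclicity \eqref{eq:cyclicvN}. Writing $\widehat{\tilde C_{2\beta}}_L = \widehat{\tilde C_\beta}_L^\dagger \widehat{\tilde C_\beta}_L$ (using $\tilde C_\beta \tilde C_\beta = \tilde C_{2\beta}$, which follows from Corollary \ref{cor:Cnorms}) and setting $c := \widehat{\tilde C_\beta}_L \sqrt{a} \in {\cal A}^B_L$, we have $b = c^\dagger c$, so
\begin{equation}
\tr\, b = \tr(c^\dagger c) = \tr(c c^\dagger) = \tr\!\left( \widehat{\tilde C_\beta}_L\, a\, \widehat{\tilde C_\beta}_L \right) \le \|a\|\, \tr\!\left( \widehat{\tilde C_{2\beta}}_L \right),
\end{equation}
where the inequality uses $a \le \|a\|\,\mathbb{1}$, the fact that conjugation by the self-adjoint $\widehat{\tilde C_\beta}_L$ preserves positivity, and monotonicity of the trace (from linearity, property 1). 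The right-hand side is finite because
\begin{equation}
\tr\!\left( \widehat{\tilde C_{2\beta}}_L \right) = \|C_{2\beta}\|^{-1}\,\zeta\big(M(C_{2\beta})\big),
\end{equation}
with $\zeta\big(M(C_{2\beta})\big)$ finite by the Finiteness axiom (it is the amplitude of the closed manifold obtained by gluing the ends of a cylinder of length $2\beta$) and $\|C_{2\beta}\| > 0$ finite by Lemma \ref{lemma:Cnorm} and Corollary \ref{cor:Cnorms}.

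Finally, to see $b \ne 0$ for suitable $\beta$, I would invoke Corollary \ref{cor:Cto1} together with Lemma \ref{lemma:Cnorm}: as $\beta \downarrow 0$ the operators $\widehat{\tilde C_{2\beta}}_L$ converge strongly to $\mathbb{1}$, whence $b = \sqrt{a}\,\widehat{\tilde C_{2\beta}}_L\,\sqrt{a} \to a$ strongly (using boundedness of $\sqrt{a}$ to pass the limit through). Since $a \ne 0$ there is a vector not annihilated by $a$, so for all sufficiently small $\beta$ the operator $b$ is non-zero, completing the construction for ${\cal A}^B_L$; the argument for ${\cal A}^B_R$ is identical. The main obstacle is precisely the one flagged above---finding an operator dominated by $a$ that is simultaneously bounded above by $a$ and of finite trace---and the cylinder sandwich resolves it because the normalized cylinder is a contraction whose trace is controlled by the finite closed-manifold amplitude $\zeta\big(M(C_{2\beta})\big)$.
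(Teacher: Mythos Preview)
Your proposal is correct and follows essentially the same approach as the paper: both construct $b$ by sandwiching a normalized cylinder between square roots of $a$ (the paper writes $a=\gamma^\dagger\gamma$ and takes $b=\gamma^\dagger\widehat{\tilde C_{2\beta}}_L\gamma$, which is your construction with $\gamma=\sqrt{a}$), and both establish $b\le a$ and $b\ne 0$ in the same way. The one small difference is in the finiteness step: the paper applies the swap identity \eqref{eq:swapbetas} directly to bound $\tr\,b$ by the finite expectation value $\langle\tilde C_\beta|\gamma\gamma^\dagger|\tilde C_\beta\rangle$, whereas you invoke cyclicity \eqref{eq:cyclicvN} to rewrite $\tr\,b=\tr(\widehat{\tilde C_\beta}_L a\widehat{\tilde C_\beta}_L)$ and then bound by $\|a\|\,\tr(\widehat{\tilde C_{2\beta}}_L)$---both are valid and closely related (the swap identity was itself established via arguments akin to cyclicity).
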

We will give the proof for ${\cal A}^B_L$.  The argument for ${\cal A}^B_R$ is directly analogous.

\begin{proof}
    We need only show that every non-zero positive $a \in {\cal A}^B_L$ satisfies $b \le a$ for some non-zero positive $b\in {\cal A}_L^B$ with finite trace, where the notation $b \le a$ means that $a-b$ is positive.

    Let us begin by recalling that the normalized cylinder operator $\widehat{\tilde C_{2\beta}}_L$ was defined to have operator norm $1$ (though it is not generally the identity).  Thus $\mathbb{1} - \widehat{\tilde C_{2\beta}}_L$ is positive.  It then follows that $\gamma^\dagger (\mathbb{1} - \widehat{\tilde C_{2\beta}}_L)\gamma$ is also positive for any bounded operator $\gamma$, since the expectation value in any state $|\psi\rangle$ will satisfy
    \begin{equation}\label{eq:diffpos}
    \langle \psi | \gamma^\dagger (\mathbb{1} - \widehat{\tilde C_{2\beta}}_L)\gamma | \psi \rangle \ge 0 .
    \end{equation}
The positivity of $\gamma^\dagger (\mathbb{1} - \widehat{\tilde C_{2\beta}}_L)\gamma$  is then equivalent to the statement
\begin{equation}
\gamma^\dagger \widehat{\tilde C_{2\beta}}_L \gamma \le \gamma^\dagger\gamma.
\end{equation}

Next recall that, since $a$ is positive, it is in fact of the form $\gamma^\dagger \gamma$ for $\gamma \in {\cal A}_L^B$.   The above result then implies that our trace is semifinite if we can show that
    $b := \gamma^\dagger \widehat{\tilde C_{2\beta}}_L \gamma$ has finite trace and that $b$ is non-zero for some $\beta >0$.  We have
    \begin{eqnarray}
\label{eq:semifin}
\tr(\gamma^\dagger \widehat{\tilde C_{2\beta}}_L \gamma ) &=& \sup_{\beta '>0} \langle \tilde C_{\beta'} | \gamma^\dagger \widehat{\tilde C_{2\beta}}_L \gamma | \tilde C_{\beta'} \rangle  \cr  &=& \sup_{\beta '>0} \langle \tilde C_{\beta} | \gamma \widehat{\tilde C_{2\beta'}}_L \gamma^\dagger | \tilde C_{\beta} \rangle \le \langle \tilde C_{\beta} | \gamma   \gamma^\dagger | \tilde C_{\beta} \rangle .
\end{eqnarray}
In writing \eqref{eq:semifin},  we have used \eqref{eq:swapbetas} to pass from the first line to the second.  The final step follows from \eqref{eq:diffpos}. The right-hand side is clearly finite for any $\beta>0$, so this establishes that our $b$ has finite trace.  Furthermore, since $a=\gamma^\dagger \gamma$ is non-zero, and since Corollary \ref{cor:Cto1} (together with Lemma \ref{lemma:Cnorm}) showed the operators $\tilde C_\beta$ to converge in the strong operator topology to the identity as $\beta \rightarrow 0$, for small enough $\beta$ the operator $b= \gamma^\dagger \widehat{\tilde C_{2\beta}}_L \gamma$ must be non-zero as well.  This establishes that $\tr$ is semifinite as claimed.
\end{proof}

\addcontentsline{toc}{section}{References}
\bibliographystyle{JHEP}
\bibliography{references}

\end{document}